\newcommand{\ket}[1]{\vert{ #1 }\rangle}
\newcommand{\bra}[1]{\langle{ #1 }\vert}
\newcommand{\braket}[2]{\langle #1 \vert #2 \rangle}
\newcommand{\E}{\operatorname{E}}
\newtheorem{theorem}{Theorem}
\newtheorem{proposition}{Proposition}
\newtheorem{lemma}{Lemma}
\theoremstyle{definition}
\newtheorem{definition}{Definition}
\theoremstyle{remark}
\newcommand{\figures}{.}
\begin{document}

\title{Time-Efficient Logical Operations on Quantum Low-Density Parity Checks Codes}

\author{Guo Zhang}
\affiliation{Graduate School of China Academy of Engineering Physics, Beijing 100193, China}

\author{Ying Li}
\email{yli@gscaep.ac.cn}
\affiliation{Graduate School of China Academy of Engineering Physics, Beijing 100193, China}

\begin{abstract}
We propose schemes capable of measuring an arbitrary set of commutative logical Pauli operators in time independent of the number of operators. The only condition is commutativity, a fundamental requirement for simultaneous measurements in quantum mechanics. Quantum low-density parity check (qLDPC) codes show great promise for realizing fault-tolerant quantum computing. They are particularly significant for early fault-tolerant technologies as they can encode many logical qubits using relatively few physical qubits. By achieving simultaneous measurements of logical operators, our approaches enable fully parallelized quantum computing, thus minimizing computation time. Our schemes are applicable to any qLDPC codes and maintain the low density of parity checks while measuring multiple logical operators simultaneously. These results enhance the feasibility of applying early fault-tolerant technologies to practical problems. 
\end{abstract}

\maketitle

\section{Introduction}

Quantum error correction is crucial for many quantum computing applications, such as breaking cryptographic systems and simulating quantum many-body physics \cite{OGorman2017,Babbush2018}. The primary challenge of quantum error correction lies in the substantial number of physical qubits required for encoding \cite{Fowler2012}. Quantum low-density parity check (qLDPC) codes offer an advantage in this regard because of their low overhead \cite{Gottesman2014,Breuckmann2021}. Recent progresses demonstrate that the long-range connectivity needed to implement low-overhead qLDPC codes is feasible in neutral atom and ion trap systems \cite{Bluvstein2023,Evered2023,DeCross2024}. Furthermore, numerical results indicate that qLDPC codes can tolerate relatively high physical error rates \cite{grospellier2021,Xu2024,Bravyi2024}. These advancements underscore the potential of qLDPC codes as a pivotal pathway to achieving fault-tolerant quantum computing \cite{Nielsen2003}. 

A promising method for implementing logical operations on qLDPC codes is lattice surgery \cite{Horsman2012,Vuillot2019,Litinski2019}. In this approach, an ancilla system is coupled with the memory enabling the measurement of logical qubits \cite{Cohen2022,Xu2024}. However, multiple logical measurements involving the same physical qubits cannot be executed simultaneously to maintain the low density of parity checks. This issue hinders the parallelization of logical operations and can potentially increase the time required for quantum computations. Since fundamental physical operations on qubits are considerably time-consuming, the complexity resulting from the lack of parallelization is particularly important. It may ultimately limit the practical applications of quantum computing. 

In this paper, we propose two schemes for simultaneous measurements on multiple logical operators. Our schemes apply to general qLDPC codes, including subsystem codes \cite{Poulin2005}. We achieve {\it ultimate parallelism} allowing for the measurement of an arbitrary set of commutative logical Pauli operators in time independent of the number of operators. For instance, the logical operator set could be $\{\bar{X}_1,\bar{X}_2\bar{Z}_3,\bar{Y}_2\bar{Y}_3\bar{X_4}\cdots\bar{Z}_k,\ldots\}$, in which logical operators may even overlap with each other on the same logical qubits. Here, commutativity is the only condition, which is a fundamental requirement in quantum mechanics: simultaneous measurements are permitted if and only if the operators commute with each other \cite{L.D.Landau1981}. The simultaneous measurements, supplemented with the preparation of magic states, enable fully parallelized universal quantum computing. 

\begin{figure}[tbp]
\centering
\includegraphics[width=\linewidth]{\figures/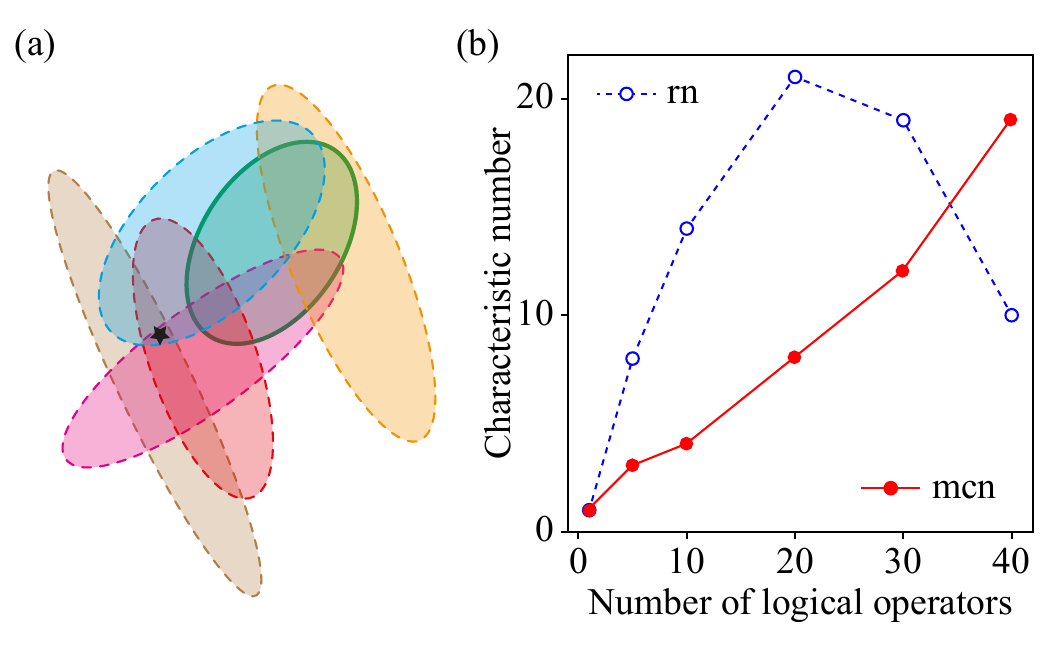}
\caption{
(a) Logical operators and their supports. Each circle represents the support of a logical operator. In the region marked by the star, each physical qubit is in supports of four logical operators. The logical operator marked by the green solid-line circle is contained in supports of other logical operators. (b) Median values of the maximum crowd number (mcn) and redundancy number (rn) for a quantum low-density parity check code. The set $\Sigma$ consists of $Z$ logical operators acting non-trivially on up to $L = 5$ logical qubits. 
}
\label{fig:overlap}
\end{figure}

The simultaneous measurements are achieved through two types of ancilla systems: measurement stickers and branch stickers. The function of a sticker is determined by a linear code, referred to as the glue code. One of the schemes, termed devised sticking, employs a single measurement sticker. By adjusting the glue code, we can realize the desired simultaneous measurement. In the other scheme, termed brute-force branching, we concatenate branch and measurement stickers to propagate the logical operators to different stickers for simultaneous measurement. Both schemes maintain the low density of parity checks. 

\section{Problems}

The difficulty in simultaneously measuring an arbitrary set of logical operators arises from the overlap of logical operators; see Fig. \ref{fig:overlap}(a). In lattice surgery, the method to measure a logical operator involves coupling an ancilla system to the physical qubits within the support of this operator. To measure multiple logical operators simultaneously, we can use multiple ancilla systems, with each ancilla system measuring one logical operator by coupling it to the corresponding support, as proposed in Ref. \cite{Cohen2022}. However, because of the overlap of logical operators, this approach might result in some physical qubits being coupled to multiple ancilla systems, thereby violating the LDPC condition. This problem has been noticed in Refs. \cite{Cohen2022,Xu2024}. Another method for simultaneously measuring multiple logical operators is to couple an ancilla system to the union of supports of all the logical operators to be measured. Because of the overlap of logical operators, the union of the supports might contain logical operators that do not need to be measured. However, these redundant logical operators may also be measured by the ancilla system, leading to incorrect logical operations. In this work, we address both issues, enabling the simultaneous measurement of an arbitrary set of logical Pauli operators. 

Because of their high encoding rate, qLDPC codes are prone to logical-operator overlap. In Appendix \ref{app:comparison}, we justify this argument through a rigorous analysis showing that the problem of overlap cannot be solved through optimising the representatives of logical operators. We can use two quantities to characterize the overlap, corresponding to the two issues mentioned above, respectively. Let $\Sigma = \{\sigma_1,\sigma_2,\ldots,\sigma_q\}$ be a subset of $Z$ ($X$) logical operators. For a physical qubit, its crowd number is the number of operators in $\Sigma$ acting non-trivially on that physical qubit. The redundancy number of $\Sigma$ is the number of $Z$ ($X$) logical operators that are contained in the union of supports but not in $\Sigma$ (we only count independent operators). See Appendix \ref{app:preliminaries} for formal definitions. In Fig. \ref{fig:overlap}(b), we demonstrate how these two quantities change with the size of $\Sigma$ using a [[1922,50,16]] code as an example \cite{Panteleev2021,Kovalev2012}. We find that the problem of logical-operator overlap becomes more severe as the size of $\Sigma$ grows. 

\begin{figure}[tbp]
\centering
\includegraphics[width=\linewidth]{\figures/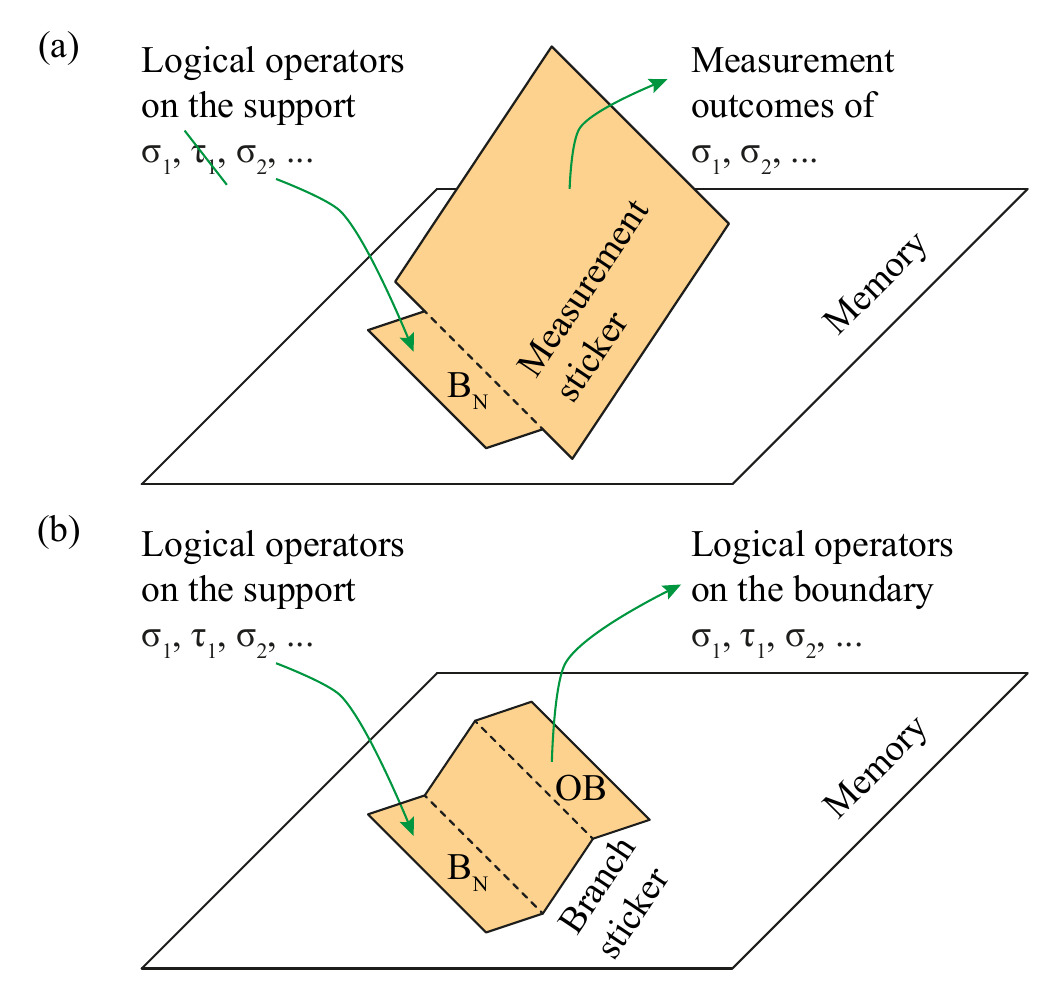}
\caption{
(a) A measurement sticker pasted on $\mathcal{B}_N$ on the memory. Logical operators contained in $\mathcal{B}_N$ include $\sigma_1,\sigma_2,\ldots$ and $\tau_1,\ldots$. We can choose to measure $\sigma_1,\sigma_2,\ldots$ by designing an appropriate measurement sticker. (b) A branch sticker pasted on $\mathcal{B}_N$ on the memory. Logical operators contained in $\mathcal{B}_N$ are transferred to the open boundary (OB) of the branch sticker. 
}
\label{fig:schemes}
\end{figure}

\section{Schemes}

We propose two methods to measure multiple logical operators simultaneously, devised sticking and brute-force branching. These two methods solve the two problems caused by logical-operator overlap, respectively. In devised sticking, we use only one ancilla system, called measurement sticker, and couple it with a subset $\mathcal{B}_N$ of physical qubits on the memory. Here, $\mathcal{B}_N$ is the union of supports of the logical operators to be measured. By designing an appropriate measurement sticker, we can measure any selected subset (rather than all) of logical operators contained in $\mathcal{B}_N$, as shown in Fig. \ref{fig:schemes}(a). In brute-force branching, we use another type of ancilla system called branch sticker. Unlike a measurement sticker, the role of a branch sticker is to transfer logical operators from the memory to the sticker (specifically to a subset of physical qubits on the sticker, called its open boundary), as shown in Fig. \ref{fig:schemes}(b). Through the concatenation of branch stickers, we can transfer the logical operators to different stickers for measurement, thereby eliminating the overlap between logical operators. Then, we can measure each logical qubit using a measurement sticker without violating the LDPC condition. 

\begin{figure}[tbp]
\centering
\includegraphics[width=\linewidth]{\figures/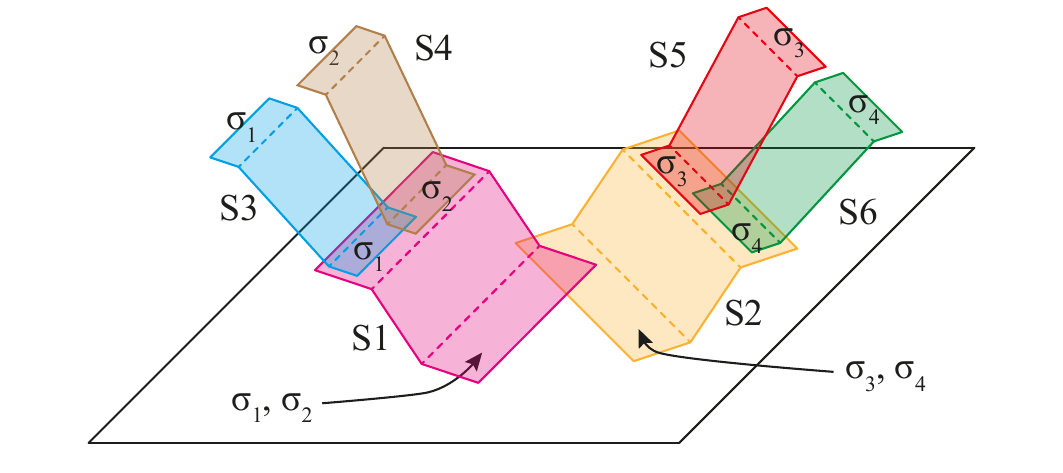}
\caption{
Brute-force branching for separating four logical operators. 
}
\label{fig:branching}
\end{figure}

In brute-force branching, we separate $q$ independent logical operators by concatenating branch stickers for $O(\log_2 q)$ levels. Fig. \ref{fig:branching} illustrates how to separate four overlapping logical operators. First, we paste the level-1 branch sticker $S1$ ($S2$) on the supports of $\sigma_1$ and $\sigma_2$ ($\sigma_3$ and $\sigma_4$), transferring $\sigma_1$ and $\sigma_2$ ($\sigma_3$ and $\sigma_4$) to the open boundary of $S1$ ($S2$). Then, we paste the level-2 branch sticker $S3$, $S4$, $S5$ and $S6$ on open boundaries of $S1$ and $S2$, transferring $\sigma_1$, $\sigma_2$, $\sigma_3$ and $\sigma_4$ to open boundaries of level-2 branch stickers, respectively. In this way, we can separate the $q$ logical operators within $O(\log_2 q)$ levels, i.e. we paste two branch stickers on each lower-level branch sticker. Once separated, we can measure each logical operator by pasting a measurement sticker on the open boundary of the corresponding highest-level branch sticker ($S3$, $S4$, $S5$ and $S6$ in Fig. \ref{fig:branching}). 

Using either of the two methods, devised sticking and brute-force branching, we can simultaneously measure an arbitrary set of $X$ or $Z$ logical operators. In what follows, we focus on the simultaneous measurement of $Z$ logical operators. The measurement of $X$ logical operators is similar. For general logical Pauli operators, we can measure them through the $X$ and $Z$ measurements. We give the protocol for the simultaneous measurement of general logical Pauli operators in Appendix \ref{app:general}. 

\begin{figure}[tbp]
\centering
\includegraphics[width=\linewidth]{\figures/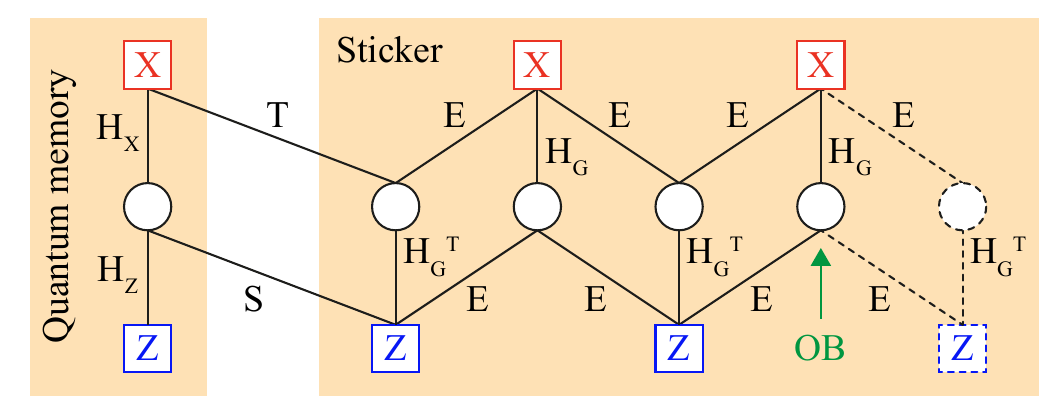}
\caption{
Tanner graph of a measurement sticker pasted on the quantum memory. Each circle represents a set of qubits, each square represents a set of $X$ or $Z$ parity checks, and each edge represents a check matrix. $H_X$ and $H_Z$ are check matrices of the memory. $H_G$ is the check matrix of the glue code. $S$ and $T$ are pasting matrices; the support of the row space of $S$ is the qubit subset $\mathcal{B}_N$. $E$ is the identity matrix. If removing the dashed circle, square and edges, we obtain a branch sticker, and the arrow indicates the open boundary (OB). The length of a sticker $d_R$ is the number of $Z$ squares. See Appendices \ref{app:stickers} and \ref{app:deformed_codes} for the matrix representation of the corresponding code. 
}
\label{fig:sticker}
\end{figure}

\section{Stickers and glue codes}

A sticker is a hypergraph product (HGP) code \cite{Tillich2014,Kovalev2012}, as shown in Fig. \ref{fig:sticker}. For a measurement sticker, one of the two linear codes that generate the HGP code is called the glue code; the other linear code is a repetition code. Branch stickers are similar. By deleting a bit in the repetition code, we obtain the HGP code for a branch sticker. Stickers are coupled to the memory through two matrices, $S$ and $T$, called pasting matrices. Let $H_G$ and $H_X$ be the check matrices for the glue code and $X$ operators of the memory, respectively. We say that the glue code is {\it compatible} with the memory if and only if there exist matrices $S$ and $T$ that satisfy the equation $H_XS^\mathrm{T} = TH_G$. 

The glue code determines which logical operators the sticker acts on. For a compatible glue code, $(\mathrm{ker}H_G)S \subseteq \mathrm{ker}H_X$, i.e. codewords of the glue code correspond to the stabilizer, logical and gauge operators of the memory. The sticker acts on such operators. Therefore, we can realize the desired logical measurement by designing the glue code. 

We use two types of glue codes. Let $\Sigma$ be the set of $Z$ logical operators to be acted on. For a measurement sticker, we need to choose an appropriate glue code such that only operators in $\Sigma$ are measured, and no other logical operators are measured. We refer to such a glue code as {\it finely devised} for $\Sigma$. For a branch sticker, the requirement for the glue code is weaker. A branch sticker does not measure any logical operators (and thus does not destroy any logical information), so we do not need to exclude logical operators outside $\Sigma$. That is, we need a glue code that can transfer the operators in $\Sigma$, but it may also transfer other logical operators simultaneously. We refer to such a glue code as {\it coarsely designed} for $\Sigma$. We provide rigorous definitions of the two types of glue codes in Appendix \ref{app:glue_code}. 

\begin{theorem}
For an arbitrary qLDPC code and an arbitrary set of $Z$ logical operators $\Sigma$, there exist coarsely and finely devised glue codes for $\Sigma$. The check matrix of the glue code $H_G$ and the corresponding pasting matrices $S$ and $T$ have a weight upper bounded by a factor independent of code parameters, i.e. satisfy the LDPC condition. Let $r_G\times n_G$ be the dimension of $H_G$. For the coarsely devised glue code, $n_G,r_G = O(n_N)$, where $n_N$ denotes the size of the union of supports for $\Sigma$. For the finely devised glue code, $n_G,r_G = O(n_N+(k_N-q)q)$, where $q$ is the number of independent operators in $\Sigma$, and $k_N$ is the number of independent $Z$ logical operators contained in the union of supports. 
\label{the:sticking}
\end{theorem}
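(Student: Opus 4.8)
The plan is to localise the problem to $\mathcal{B}_N$ and then build the glue code almost by hand: trivially for the coarsely devised case, and with an explicit sparsification patch for the finely devised one. Since every operator in $\Sigma$ is supported inside $\mathcal{B}_N$, and the compatibility equation $H_XS^{\mathrm{T}}=TH_G$ only ever sees the columns of $H_X$ indexed by $\mathcal{B}_N$, all the relevant data is the restricted check matrix $H_N$ (the columns of $H_X$ lying in $\mathcal{B}_N$, with identically-zero rows removed), the local $Z$-space $Z_N:=\ker H_N$, and the subspace $\langle\Sigma\rangle\subseteq Z_N$. In every construction I take the ``memory block'' of $S$ to be the canonical inclusion $\mathcal{B}_N\hookrightarrow[n]$ — one nonzero entry per row, hence weight one — so that $H_XS^{\mathrm{T}}$ is $H_N$ padded by zero rows and $(\ker H_G)S$ is $\ker H_G$ read back onto the coordinates $\mathcal{B}_N$. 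With this choice, $(\ker H_G)S\subseteq\ker H_X$ holds automatically as soon as the rows of $H_N$ appear among those of $H_G$, and ``$\sigma$ is acted on by the sticker'' becomes ``$\sigma|_{\mathcal{B}_N}\in\ker H_G$''; the remaining task is purely to choose $\ker H_G$ and to sparsify $H_G$.

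The coarsely devised case is then immediate: set $H_G:=H_N$ and let $T$ be the $0/1$ matrix re-inserting the deleted zero rows. Then $\ker H_G=Z_N\supseteq\langle\Sigma\rangle$, so all of $\Sigma$ is transferred (and so, in general, is everything else in $Z_N$, which is exactly what ``coarsely devised'' permits); $H_G$ inherits the bounded row weight of $H_X$, while $S$ and $T$ have weight one per row and column, so the LDPC condition holds; and $n_G=n_N$, $r_G\le w_c n_N=O(n_N)$ with $w_c$ the maximal column weight of $H_X$. The only thing to argue carefully is that $Z_N$ genuinely contains representatives of every operator of $\Sigma$, so no analysis of redundant operators is needed here.

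For the finely devised case the extra requirement is that none of the $k_N-q$ independent redundant $Z$-logicals of $\mathcal{B}_N$ be measured. Unpacking this in the deformed-code picture (Appendix~\ref{app:deformed_codes}) shows it is equivalent to choosing $\ker H_G$ to be $Z_N$ with precisely the redundant logical directions removed, i.e. $\ker H_G=\langle\mathcal{S}_N,\Sigma\rangle$ where $\mathcal{S}_N$ collects the $Z$-stabilizers supported in $\mathcal{B}_N$; dually, $\mathrm{rowspace}(H_G)$ must be enlarged from $\mathrm{rowspace}(H_N)$ by adjoining $\mathbb{F}_2$-representatives $\xi_{q+1},\dots,\xi_{k_N}$ of the restricted $X$-partners of the redundant logicals, while keeping the $X$-partners of the members of $\Sigma$ out of the row space. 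The naive realisation $H_G=\binom{H_N}{\Xi}$ with $\Xi$ the stack of the $\xi_j$ is correct set-theoretically, but the adjoined rows can have weight $\Theta(n_N)$. The fix is to replace each adjoined row by an accumulator gadget that routes its parity through a chain of fresh glue bits; taking $S$ to send these bits to $0$ and $T$ to send the gadget checks to zero keeps $H_XS^{\mathrm{T}}=TH_G$ and $(\ker H_G)S=\langle\mathcal{S}_N,\Sigma\rangle$ while making every new check of weight $O(1)$. I expect the main obstacle to be carrying this out within the claimed budget: applied independently to each redundant logical the sparsification costs $O((k_N-q)n_N)$ fresh bits and checks, so the work is to exploit the freedom in the $\xi_j$ (fixed only modulo $\mathrm{rowspace}(H_N)$) and to share a common sparsification backbone of size $O(n_N)$, leaving only an $O(q)$-sized correction per redundant logical — the one recording how its $X$-partner couples to the $q$ operators of $\Sigma$ — for a total $n_G,r_G=O(n_N+(k_N-q)q)$. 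It then remains to re-verify that all weights stay bounded by a code-parameter-independent constant after inserting the gadgets and to read off the sizes.
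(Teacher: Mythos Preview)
Your coarsely devised construction is exactly the paper's naked glue code and is correct. For the finely devised case you have the right architecture --- adjoin $k_N-q$ rows $\xi_j$ to $H_N$ to kill the redundant $Z$-logical directions, then sparsify --- and you correctly flag that the naive gadget cost is $O((k_N-q)n_N)$. But the remedy is where the gap lies. A ``shared backbone of size $O(n_N)$'' does not by itself reduce the per-row cost: each $\xi_j$ is a distinct linear functional on $\mathcal{B}_N$, and an accumulator chain for it costs $\Theta(|\xi_j|)$ fresh bits and checks regardless of what else has been precomputed. The freedom modulo $\mathrm{rowspace}(H_N)$ is the right lever, but nothing in your outline establishes that a representative of weight $O(q)$ actually exists in each coset; for a generic $H_N$ there is no a priori reason one should.

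The paper's resolution is a direct algebraic choice of $\xi_j$ rather than any shared gadgetry. Put the memory generators in standard form $J_X^S=(E_k\mid 0\mid J_X')$, $J_Z^S=(E_k\mid J_Z'\mid 0)$; because $\mathcal{B}_N$ is a union of $Z$-logical supports it lies entirely in the first two blocks, so restricting $X$-logicals to $\mathcal{B}_N$ annihilates the $J_X'$ block. Writing the measured operators as $J_{Z,A}=(E_q\mid P)\pi_2 J_Z^S\pi_1$ forces $J_{X,C}=(P^{\mathrm T}\mid E_{k-q})\pi_2 J_X^S\pi_1$, and selecting $k_N-q$ rows via a standard-form right inverse produces a dressing matrix $D$ whose every row has weight at most $q+1$, hence at most $(k_N-q)(q+1)$ nonzeros in total. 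Only after this does the paper run bit- and check-duplication (your accumulator gadgets), at cost bounded by that nonzero count, giving $n_G,r_G\le O(n_N)+2(k_N-q)(q+1)$. So the missing step is the standard-form argument that makes each $\xi_j$ intrinsically light \emph{before} any sparsification; your intuition that only the coupling to the $q$ operators of $\Sigma$ survives is precisely what the $(P^{\mathrm T}\mid E_{k-q})$ shape encodes, but it needs this mechanism to become a proof.
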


Notice that $k_N-q$ is the redundancy number of $\Sigma$. In Appendix \ref{app:sticking}, we provide a more formal statement of the above theorem and its proof. The proof contains algorithms for generating coarsely and finely devised glue codes. 

\section{Deformed codes}

By pasting one or more stickers to the memory, we obtain a deformed code. For a single sticker, the deformed code is shown in Fig. \ref{fig:sticker}. For multiple stickers, we can construct the deformed code as follows: first, paste one sticker to the memory and treat the resulting deformed code as the new memory; then, paste the second sticker to the new memory; and so on. By coupling all the stickers to the memory, we generate the final deformed code. Based on the generated deformed code, we can perform logical measurements using lattice surgery. 

The steps for lattice surgery are as follows: 1) Initialise the physical qubits on all stickers to the state $\ket{+}$; 2) Perform parity-check measurements according to the deformed code and repeat this for $d_T$ times; 3) Measure physical qubits on all stickers in the $X$ basis. 

According to Theorem \ref{the:sticking}, the deformed code is always a qLDPC code. Besides the LDPC condition, the deformed code also needs to have a sufficiently large code distance. The properties of the deformed code are given by the following theorem. 

\begin{theorem}
The deformed code is suitable for lattice surgery and can achieve the corresponding operations on $Z$ logical operators. Let $d$ be the code distance of the memory, and let $d_R$ be the code distance of the repetition code generating the sticker. For a measurement sticker, the code distance of the deformed code has a lower bound of $\min\{d/\vert S\vert,d_R\}$. For a branch sticker, the code distance of the deformed code has a lower bound of $d/\vert S\vert$. Here, $\vert S\vert$ is the norm of the matrix $S$ induced by the Hamming weight. 
\label{the:GLS}
\end{theorem}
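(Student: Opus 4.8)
The plan is to analyze the deformed code as a CSS code built by the hypergraph-product construction of the sticker glued to the memory, and to lower-bound its distance by tracking how a nontrivial logical representative must be distributed between the memory qubits and the sticker qubits. I would work in the chain-complex / Tanner-graph picture of Fig. \ref{fig:sticker}: the deformed code has $Z$-stabilizers inherited from $H_Z$ of the memory together with the new $Z$-checks coming from $H_G$ and the repetition code, and the pasting matrices $S,T$ control the couplings. The key structural facts I would invoke from earlier in the paper are (i) compatibility, $H_X S^{\mathrm{T}} = T H_G$, which guarantees the deformed object is a genuine CSS code, and (ii) $(\ker H_G)S \subseteq \ker H_X$, which says sticker codewords map to memory operators — this is what forces any logical string that ``escapes'' onto the sticker to have a controlled image back on the memory.

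First I would fix the convention that $Z$-logical operators of the deformed code are being discussed (the $X$ case and the distance contributed by the sticker's own repetition structure being symmetric). I would take a minimum-weight nontrivial logical operator $\bar P$ of the deformed code and split its support as $\bar P = \bar P_{\mathrm{mem}} \cup \bar P_{\mathrm{stick}}$. Case 1: $\bar P_{\mathrm{stick}}$ is empty, so $\bar P$ lives entirely on the memory; then it must be (equivalent to) a logical operator of the original memory restricted appropriately, but the new $Z$-checks through $T$ and $H_G$ constrain it, and after cleaning it must still have weight $\ge d$ or be reducible — I would argue that if it is supported only on memory qubits and commutes with all new checks then, pulling back through $S$, its weight is at least $d/|S|$, since one memory logical representative of weight $\ge d$ can be ``compressed'' by at most the factor $|S|$ when expressed in the glued basis. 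Case 2: $\bar P_{\mathrm{stick}}$ is nonempty; then on the sticker $\bar P$ restricts to something in the kernel of the sticker's $X$-checks modulo sticker stabilizers, and because the sticker is a hypergraph-product code with one factor a (possibly punctured) repetition code of distance $d_R$, any such nontrivial sticker component must span the sticker in the ``repetition direction,'' contributing weight $\ge d_R$ — this is exactly why $d_R$ appears for the measurement sticker and why, for a branch sticker (the repetition code punctured, hence no closed logical loop on the sticker), only the memory term $d/|S|$ survives. Combining the two cases gives $\mathrm{dist} \ge \min\{d/|S|, d_R\}$ for a measurement sticker and $\ge d/|S|$ for a branch sticker.

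Second, I would separately verify the ``suitable for lattice surgery and achieves the corresponding operation'' clause: that the $d_T$ rounds of parity-check measurement followed by the transversal $X$-measurement of sticker qubits implements exactly the intended product-of-$Z$-logicals measurement. Here I would identify the deformed-code stabilizer group before and after, show that the measured $X$-operators on the sticker, multiplied by deformed stabilizers, reduce on the memory precisely to the operators in $\Sigma$ (using $(\ker H_G)S\subseteq \ker H_X$ and the finely/coarsely devised property from Theorem \ref{the:sticking}), and note that fault tolerance of the $d_T$-round schedule is standard once the deformed code is LDPC with the stated distance.

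The main obstacle I expect is Case 1 of the distance bound: controlling exactly how much a memory logical representative can shrink when re-expressed through the pasting matrix $S$, i.e., justifying the clean factor $|S|$ rather than some larger combinatorial loss. This requires a careful ``cleaning''/deformation-retraction argument showing that any deformed-code logical can be pushed, using the new stabilizers, either entirely onto the memory (reducing to Case 1) or onto the sticker in a way that picks up the $d_R$ cost (Case 2), with no cross-terms that evade both bounds. Making that dichotomy airtight — essentially a Cheeger-type or subsystem-cleaning lemma on the glued chain complex — is the technical heart; the rest is bookkeeping on CSS stabilizer groups.
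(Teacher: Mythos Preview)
Your high-level strategy --- case-split on the sticker contribution and invoke the repetition structure for the $d_R$ term --- has the right flavour, but the case split and the location of the $|S|$ factor are both off, and this is where the argument would fail. The correct split is not ``sticker part empty vs.\ nonempty''; it is on the $d_R$ blocks $v_1,\dots,v_{d_R}$ of check-type sticker qubits in a $Z$-type error $e=(u_0\mid u_1,\dots,u_{d_R-1}\mid v_1,\dots,v_{d_R})$. If every $v_j\ne 0$ then $|e|\ge d_R$ trivially. Otherwise some $v_\ell=0$, and the syndrome equations $H_X u_0^{\mathrm T}=Tv_1^{\mathrm T}$ and $H_G u_j^{\mathrm T}=v_j^{\mathrm T}+v_{j+1}^{\mathrm T}$ telescope (via compatibility $H_X S^{\mathrm T}=TH_G$) to $H_X\big(u_0+\sum_{j<\ell}u_j S\big)^{\mathrm T}=0$; one then checks, using Lemma~\ref{lem:JXC}, that this vector still anticommutes with $J_{X,C}$, so it is a genuine memory logical of weight $\ge d$, and since its weight is at most $|S|\cdot|e|$ one concludes $|e|\ge d/|S|$. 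Thus the $|S|$ factor enters in the \emph{opposite} direction from what you wrote: it is the blow-up when pushing sticker blocks $u_j$ onto the memory through $S$, not a compression of a memory-only logical. Your Case~1 (sticker part empty) actually gives $|e|\ge d$ with no $|S|$ at all, and the generic low-weight error with $v_\ell=0$ still has nonempty sticker support, so your dichotomy misses precisely the case carrying the work. For the branch sticker the last block $v_{d_R}$ is absent, so $\ell=d_R$ is always available and the telescoping applies unconditionally --- that is the real reason $d_R$ drops out, and your ``no closed loop'' intuition is a correct shadow of this.

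A second gap: the $X$- and $Z$-distances of the deformed code are not symmetric. The $Z$-operator side is immediate --- the memory block $u_0$ of any $X$-type logical error already satisfies $H_Z u_0^{\mathrm T}=0$ and $J_{Z,C}u_0^{\mathrm T}\ne 0$, giving $|e|\ge |u_0|\ge d$ with no further work --- whereas only the $X$-operator side needs the telescoping and produces $\min\{d/|S|,d_R\}$. Treating them as symmetric, and appealing to a vague ``Cheeger-type cleaning'' for the cross terms, would not produce either bound; what is actually needed is the explicit linear-algebra identity above together with the fact $J_{X,C}S^{\mathrm T}=\gamma H_G$ from Lemma~\ref{lem:JXC}, which is exactly what guarantees the telescoped vector remains logically nontrivial rather than collapsing to a stabilizer.
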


Notice that we can always choose $S$ such that $\vert S\vert = 1$. In Appendix \ref{app:lattice_surgery}, we provide a more formal statement of the above theorem and its proof. Additionally, in Appendix \ref{app:comparison}, we present a detailed comparison of stickers used in our methods with ancilla systems proposed in Ref. \cite{Cohen2022}. 

\begin{figure}[tbp]
\centering
\includegraphics[width=\linewidth]{\figures/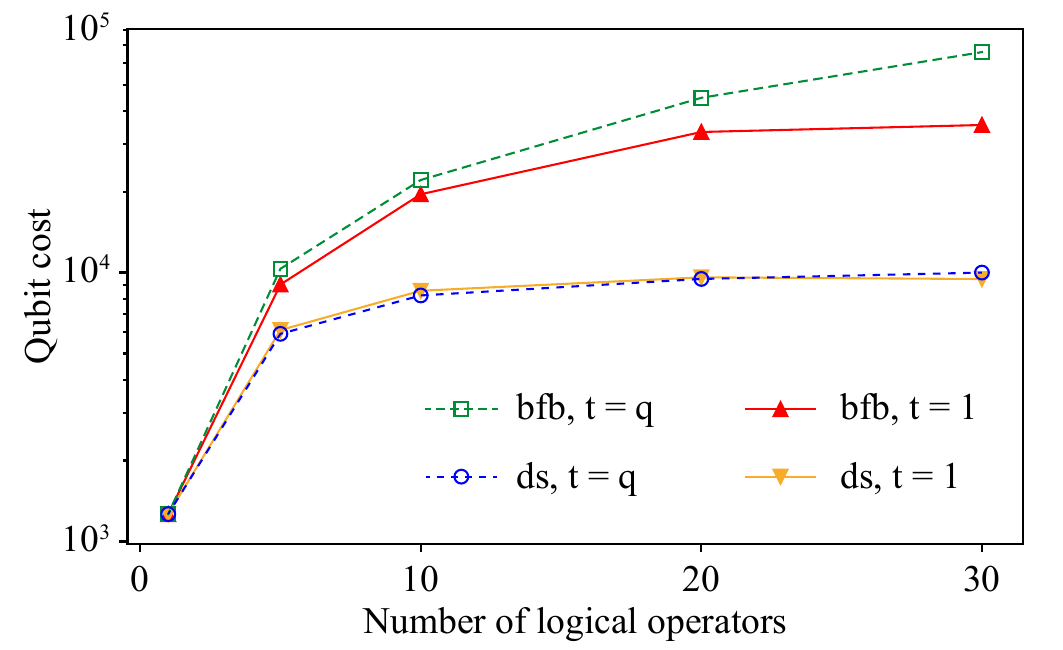}
\caption{
Median values of the qubit number required in simultaneous measurements. For each number of logical operators $q$, we randomly generate the operator set $\Sigma$ for one hundred times. Each set $\Sigma$ consists of $Z$ logical operators acting non-trivially on up to $L = 5$ logical qubits, and its logical thickness is $t$. For each $\Sigma$, we evaluate the qubit costs in devised sticking (ds) and brute-force branching (bfb). 
}
\label{fig:cost}
\end{figure}

\section{Costs}

In this work, we achieve ultimate parallelism. In conventional parallelism, we can operate logical qubits in parallel, i.e. we can apply operations simultaneously if they act on different logical qubits. This is the parallelism normally considered when compiling quantum circuits. On the surface code and HGP codes, protocols have been proposed for logical measurements in the type of conventional parallelism \cite{Litinski2019, Xu202407}. In ultimate parallelism, we reach the physical limit, i.e. commutativity is the only condition. We can apply measurements simultaneously as long as they commute, and we can apply multiple measurements simultaneously on the same logical qubit. Ultimate parallelism is more powerful than conventional parallelism. In Appendix \ref{app:parallelism}, we illustrate the difference with an example. 

The time required for simultaneous measurements depends on the parameter $d_T$ in lattice surgery. To suppress measurement errors, parity-check measurements need to be repeated for sufficiently many times, meaning $d_T$ must be sufficiently large. Suppose the memory code has parameters $[[n,k,d]]$. Usually, we take $d_T = \Theta(d)$ to balance measurement errors and data-qubit errors. Potentially, we could correct measurement errors and reduce the time cost to a constant by employing three-dimensional homological product codes \cite{Bravyi2014}, which are single-shot codes \cite{Bombin2015,Quintavalle2021}, to construct stickers or utilising code-inspired robust projective measurements \cite{Ouyang2024}. Regarding the number of logical operators to be measured, the time cost is independent of the operator number $q$. 

The qubit overhead depends on the measurement protocol. In Fig.~\ref{fig:cost}, we use a [[1922,50,16]] code as an example to illustrate the qubit costs in devised sticking and brute-force branching \cite{Panteleev2021,Kovalev2012}; the result of a [[578,162,5]] code is similar (see Appendix \ref{app:costs}). We find that the qubit cost in devised sticking is smaller than brute-force branching. In devised sticking, the qubit cost is $O(n_Ndq)$, where $n_N$ is the number of physical qubits in the support of the operator set $\Sigma$. See Appendix \ref{app:costs} for the bound analysis of both protocols. Although taking a finite $q$ is possible in practice, we may need to measure $q = \Theta(k)$ logical operators simultaneously to achieve ultimate parallelism. In this case, the bound $n_N = \Theta(n)$ is applicable. 

\begin{table}[!ht]
\begin{tabular}{ccccc}
\toprule
Code & Space $\times$ Time cost & Refs. \\
\midrule
Surface code & $\Theta(kd^2)\times\Theta(d) = \Theta(k^{5/2})$ & \cite{Litinski2019} \\
\midrule
HGP codes & $\Theta(k)\times O(k^{3/4}d) = O(k^{9/4})$ & \cite{Xu202407} \\
\midrule
qLDPC codes & $O(kd)\times\Theta(d) = O(k^2)$ & This work \\
\bottomrule
\end{tabular}
\caption{
The space (qubit) and time costs of implementing a layer of $\Theta(k)$ Clifford gates. We suppose that the gates are controlled-NOT, Hadamard and $S$ gates, and they are disjoint on logical qubits. The protocols for the surface code, HGP codes and qLDPC codes are lattice surgery \cite{Litinski2019}, GPPM \cite{Xu202407} and devised sticking, respectively. To estimate the space cost of devised sticking, we assume a family of qLDPC codes satisfying $n = \Theta(k)$. To estimate the spacetime cost, we assume $d = \Theta(k^{1/2})$ according to HGP codes. 
}
\label{table}
\end{table}

If reducing the goal to conventional parallelism, we can reduce the qubit cost in the bound analysis. We say two logical operators have a logical overlap if they act non-trivially on the same {\it logical qubit}. We say the operator set $\Sigma$ has a logical thickness of $t$ if there is a partition $\Sigma = \Sigma_1\cup\Sigma_2\cup\cdots$ such that $\vert \Sigma_l\vert\leq t$ for all subsets, and only operators in the same subset have logical overlaps. To measure such an operator set, the qubit cost is $O(n_Ndt)$. For conventional parallelism, we have $t = 1$. Although different in the bound analysis, numerical results illustrate that the qubit costs of conventional parallelism and ultimate parallelism are comparable; see Fig. \ref{fig:cost}. 

In Table \ref{table}, we compare devised sticking with protocols for the surface code and HGP codes. Our protocol is applicable to general qLDPC codes and has a smaller spacetime cost, i.e. the cost is reduced from $O(k^{9/4})$ to $O(k^2)$ for HGP codes. 

\section{Conclusions}

In this work, we propose two schemes of constructing deformed codes for lattice surgery, enabling the simultaneous measurements of arbitrary logical Pauli operators. We rigorously analyze the code distance and the weight of check matrices. We also estimate the number of qubits required for simultaneous measurements. Our schemes are flexible in the trade-off between qubit and time costs by choosing the operator number in each simultaneous measurement. When the time cost is minimized, we show that our protocol reduces the spacetime cost in logical operations on HGP codes; and even in this case, the qubit cost is smaller than the surface code. The results demonstrate that fully parallelized fault-tolerant quantum computing can be achieved on arbitrary qLDPC codes. 

\section{Acknowledgments}

This work is supported by the National Natural Science Foundation of China (Grant Nos. 12225507, 12088101) and NSAF (Grant No. U1930403). 

\section{Date availability}

The source codes for the numerical simulation are available at \cite{code}. 

\appendix

\begin{widetext}

\section{Preliminaries}
\label{app:preliminaries}

{\bf Subsystem codes.} We denote a CSS subsystem code \cite{Poulin2005} with a six-tuple $(H_X,H_Z,J_X,J_Z,F_X,F_Z)$, where $H_X$, $J_X$ and $F_X$ ($H_Z$, $J_Z$ and $F_Z$) are the check matrix, logical-operator generator matrix and gauge-operator generator matrix of $X$ ($Z$) operators, respectively. Suppose the code parameters are $[n,k,d]$. Then, $H_X\in\mathbb{F}_2^{r_X\times n}$, $H_Z\in\mathbb{F}_2^{r_Z\times n}$, $J_X,J_Z\in\mathbb{F}_2^{k\times n}$ and $F_X,F_Z\in\mathbb{F}_2^{k_g\times n}$, where $k_g = n-\mathrm{rank}H_X-\mathrm{rank}H_Z-k$ is the number of gauge qubits. These matrices satisfy 
\begin{eqnarray}
\mathrm{ker}H_X &=& \mathrm{rs}H_Z\oplus\mathrm{rs}J_Z\oplus\mathrm{rs}F_Z, \\
\mathrm{ker}H_Z &=& \mathrm{rs}H_X\oplus\mathrm{rs}J_X\oplus\mathrm{rs}F_X, \\
J_XJ_Z^\mathrm{T} &=& E_k, \\
F_XF_Z^\mathrm{T} &=& E_{k_g},
\end{eqnarray}
where $\mathrm{rs}A$ is the row space of the matrix $A$, and $E_k$ is the $k$-dimensional identity matrix. The code distance is 
\begin{eqnarray}
d = \min\{d(H_X,J_X),d(H_Z,J_Z)\},
\end{eqnarray}
where 
\begin{eqnarray}
d(H,J) \equiv \min_{e\in\mathrm{ker}H\,\vert\,Je^\mathrm{T}\neq 0} \vert e\vert,
\end{eqnarray}
where $\vert\bullet\vert$ denotes the Hamming weight. 

Let $X_j$ ($Z_j$) be the $X$ ($Z$) operator of the $j$th qubit. Let $X(v)$ and $Z(v)$ be the $X$ and $Z$ operators of the vector $v\in\mathbb{F}_2^n$, respectively, i.e. 
\begin{eqnarray}
X(v) &\equiv & X_1^{v_1} X_2^{v_2} \cdots X_n^{v_n}, \\
Z(v) &\equiv & Z_1^{v_1} Z_2^{v_2} \cdots Z_n^{v_n}.
\end{eqnarray}
The stabiliser of the code is 
\begin{eqnarray}
\mathcal{S} = \Big\langle X(H_{X;i,\bullet}),Z(H_{Z;j,\bullet})\,\vert\,i=1,2,\ldots,r_X\text{ and }j=1,2,\ldots,r_Z \Big\rangle.
\end{eqnarray}
Here, $A_{i,\bullet}$ ($A_{\bullet,j}$) denotes the $i$th row ($j$th column) of the matrix $A$. The $X$ and $Z$ operators of the $j$th logical qubit are $X(J_{X;j,\bullet})$ and $Z(J_{Z;j,\bullet})$, respectively. Then, 
\begin{eqnarray}
\mathcal{X} &=& \Big\langle X(J_{X;j,\bullet})\,\vert\,j=1,2,\ldots,k \Big\rangle
\end{eqnarray}
and 
\begin{eqnarray}
\mathcal{Z} &=& \Big\langle Z(J_{Z;j,\bullet})\,\vert\,j=1,2,\ldots,k \Big\rangle
\end{eqnarray}
are the groups of $X$ and $Z$ logical operators, respectively. 

{\bf Hypergraph product codes.} Let $H_1\in\mathbb{F}_2^{r_1\times n_1}$ and $H_2\in\mathbb{F}_2^{r_2\times n_2}$ be check matrices of two binary linear codes, respectively. A hypergraph product code generated by $H_1$ and $H_2$ is a quantum code with check matrices 
\begin{eqnarray}
H_X &=& \begin{pmatrix} H_1\otimes E_{n_2} & E_{r_1}\otimes H_2^\mathrm{T} \end{pmatrix}, \\
H_Z &=& \begin{pmatrix} E_{n_1}\otimes H_2 & H_1^\mathrm{T}\otimes E_{r_2} \end{pmatrix}.
\end{eqnarray}

{\bf Repetition code.} A repetition code of length $n$ is a binary linear code with the check matrix 
\begin{eqnarray}
\lambda_n = \begin{pmatrix} E_{n-1} & 0_{n-1,1} \end{pmatrix} + \begin{pmatrix} 0_{n-1,1} & E_{n-1} \end{pmatrix},
\end{eqnarray}
where $0_{a,b}$ is an $a\times b$ zero matrix. Furthermore, $\lambda_{n;\bullet,1:n-1}$ is the matrix generated by deleting the $n$th column from $\lambda_n$. 

For examples, the check matrix of the length-$5$ repetition code is 
\begin{eqnarray}
\lambda_5 = \begin{pmatrix}
1 & 1 & 0 & 0 & 0 \\
0 & 1 & 1 & 0 & 0 \\
0 & 0 & 1 & 1 & 0 \\
0 & 0 & 0 & 1 & 1
\end{pmatrix},
\end{eqnarray}
and 
\begin{eqnarray}
\lambda_{5;\bullet,1:4} = \begin{pmatrix}
1 & 1 & 0 & 0 \\
0 & 1 & 1 & 0 \\
0 & 0 & 1 & 1 \\
0 & 0 & 0 & 1
\end{pmatrix}.
\end{eqnarray}

{\bf Tanner graphs.} We can represent check matrices of a quantum code with a Tanner graph $(\mathcal{B},\mathcal{C}_X,\mathcal{C}_Z,\mathcal{E}_X,\mathcal{E}_Z)$, where $\mathcal{B} = \{1,2,\ldots,n\}$ is the set of bits, $\mathcal{C}_X = \{x_1,x_2,\ldots,x_{r_X}\}$ and $\mathcal{C}_Z = \{z_1,z_2,\ldots,z_{r_Z}\}$ are sets of checks, and $\mathcal{E}_X\subset \mathcal{B}\times \mathcal{C}_X$ and $\mathcal{E}_Z\subset \mathcal{B}\times \mathcal{C}_Z$ are sets of edges. The bipartite graph $(\mathcal{B},\mathcal{C}_X,\mathcal{E}_X)$ is the Tanner graph of the check matrix $H_X$, and the bipartite graph $(\mathcal{B},\mathcal{C}_Z,\mathcal{E}_Z)$ is the Tanner graph of the check matrix $H_Z$. 

Let $(\mathcal{B},\mathcal{C},\mathcal{E})$ be the Tanner graph of a check matrix $H$. We use $\mathcal{B}(\mathcal{E},a)$ to denote the subset of bits that are adjacent to the check $a\in\mathcal{C}$, i.e. 
\begin{eqnarray}
\mathcal{B}(\mathcal{E},a) = \{u\in\mathcal{B} \,\vert\, (u,a)\in\mathcal{E}\}.
\end{eqnarray}
We use $\mathcal{C}(\mathcal{E},u)$ to denote the subset of checks that are adjacent to the bit $u\in\mathcal{B}$, i.e. 
\begin{eqnarray}
\mathcal{C}(\mathcal{E},u) = \{a\in\mathcal{C} \,\vert\, (u,a)\in\mathcal{E}\}.
\end{eqnarray}
We use $\mathcal{E}(u)$ to denote the subset of edges that are incident on the bit $u\in\mathcal{B}$, i.e. 
\begin{eqnarray}
\mathcal{E}(u) = \{(u,a)\in\mathcal{E}\}.
\end{eqnarray}
We use $w_{max}(H)$ to denote the maximum number of non-zero entries in columns and rows of $H$, the maximum vertex degree of $(\mathcal{B},\mathcal{C},\mathcal{E})$. 

{\bf Supports.} We use $\mathcal{Q}(\sigma)\subseteq \mathcal{B}$ to denote the support of the Pauli operator $\sigma$, i.e. $\sigma$ acts non-trivially on and only on qubits in $\mathcal{Q}(\sigma)$. Given a subset of $Z$ logical operators 
\begin{eqnarray}
\Sigma = \{\sigma_1,\sigma_2,\ldots,\sigma_q\} \subseteq \mathcal{Z},
\end{eqnarray}
the crowd number of a qubit $u$ is 
\begin{eqnarray}
cn(\Sigma,u) = \sum_{\sigma\in\Sigma}\vert\mathcal{Q}(\sigma)\cap\{u\}\vert.
\end{eqnarray}
The union of the supports of all the logical operators in $\Sigma$ is 
\begin{eqnarray}
\mathcal{Q}(\Sigma) = \bigcup_{\sigma\in\Sigma}\mathcal{Q}(\sigma).
\end{eqnarray}
A $Z$ logical operator $\tau\in\mathcal{Z}$ is said to be in $\mathcal{Q}(\Sigma)$ if and only if there exist a $Z$ stabiliser operator $Z(h)$ ($h\in\mathrm{rs}H_Z$) and $Z$ gauge operator $Z(f)$ ($f\in\mathrm{rs}F_Z$) such that $\mathcal{Q}[Z(h)Z(f)\tau]\subseteq\mathcal{Q}(\Sigma)$. The $Z$ logical operators in $\mathcal{Q}(\Sigma)$ constitute a group $\mathcal{Z}_N$. Let $k_N$ be the number of independent generators of $\mathcal{Z}_N$, and let $q$ be the number of independent generators of $\langle\Sigma\rangle$. Then, the redundancy number of $\Sigma$ is $rn(\Sigma) = k_N-q$. We can compute the redundancy number according to Algorithm \ref{alg:dressing}, in which the rank of $G_2$ is the redundancy number of $\Sigma$. 

{\bf Standard from.} For a linear code, we say a generator matrix is in the standard form if and only if the matrix is in the form $J = \begin{pmatrix} E & J' \end{pmatrix}$ up to permutations of rows and columns. We can always obtain a generator matrix in the standard form through Gaussian elimination. Similarly, the generator matrices of an $[[n,k,d]]$ CSS code can also be written in the form $J_X = \begin{pmatrix} E_k & 0 & J'_X \end{pmatrix}$ and $J_Z = \begin{pmatrix} E_k & J'_Z & 0 \end{pmatrix}$. 

\section{Glue code}
\label{app:glue_code}

We can simultaneously operate $Z$ logical operators in $\Sigma$ by attaching a sticker to the memory. The sticker is constructed according to the glue code, which is a binary linear code. In this section, we define the glue code in detail. 

\subsection{Compatible glue codes}

\begin{definition}
{\bf Compatible glue code.} Let $(H_X,H_Z,J_X,J_Z,F_X,F_Z)$ be the code of the memory. Let $H_G\in \mathbb{F}_2^{r_G\times n_G}$ be the check matrix of the glue code. The glue code is said to be compatible with the memory if and only if there exists pasting matrices $S\in\mathbb{F}_2^{n_G\times n}$ and $T\in\mathbb{F}_2^{r_X\times r_G}$ such that 
\begin{eqnarray}
H_XS^\mathrm{T} = TH_G.
\label{eq:glue_code}
\end{eqnarray}
\label{def:glue_code}
\end{definition}

As an example, we consider an $X$-operator check matrix in the form 
\begin{eqnarray}
H_X = \begin{pmatrix} H_N & A_X \\ 0_{(r_X-r_N)\times n_N} & B_X \end{pmatrix}.
\label{eq:HXblock}
\end{eqnarray}
Then, the glue code 
\begin{eqnarray}
H_G = \begin{pmatrix} H_N & 0_{r_N\times(n_G-n_N)} \\ A_G & B_G \end{pmatrix}
\label{eq:HGblock}
\end{eqnarray}
is compatible with the memory. By taking pasting matrices 
\begin{eqnarray}
S &=& \begin{pmatrix} E_{n_N} & 0_{n_N\times (n-n_N)} \\ 0_{(n_G-n_N)\times n_N} & 0_{(n_G-n_N)\times (n-n_N)} \end{pmatrix}, \\
T &=& \begin{pmatrix} E_{r_N} & 0_{r_N\times (r_G-r_N)} \\ 0_{(r_X-r_N)\times r_N} & 0_{(r_X-r_N)\times (r_G-r_N)} \end{pmatrix},
\end{eqnarray}
we have 
\begin{eqnarray}
H_XS^\mathrm{T} = TH_G = \begin{pmatrix} H_N & 0_{r_N\times(n_G-n_N)} \\ 0_{(r_X-r_N)\times n_N} & 0_{(r_X-r_N)\times(n_G-n_N)} \end{pmatrix}.
\end{eqnarray}

\begin{lemma}
If the glue code is compatible with the memory, $(\mathrm{ker}H_G)S \subseteq \mathrm{ker}H_X$. 
\label{lem:glue_code}
\end{lemma}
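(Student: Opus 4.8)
The plan is to derive the inclusion directly from the defining compatibility relation~(\ref{eq:glue_code}); this is essentially a one-line linear-algebra computation over $\mathbb{F}_2$ and requires no structural information about the memory code or the glue code. Recall that $\mathrm{ker}H_G = \{v\in\mathbb{F}_2^{n_G}\st H_G v^\T = 0\}$. For such a $v$, the vector $vS$ lies in $\mathbb{F}_2^n$ since $S\in\mathbb{F}_2^{n_G\times n}$, and proving $vS\in\mathrm{ker}H_X$ is the same as checking $H_X(vS)^\T = 0$.

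First I would rewrite $H_X(vS)^\T = H_X S^\T v^\T$ using $(vS)^\T = S^\T v^\T$. Next I would substitute $H_X S^\T = T H_G$ from~(\ref{eq:glue_code}), which gives $H_X(vS)^\T = T H_G v^\T$. Finally, $v\in\mathrm{ker}H_G$ means $H_G v^\T = 0$, so the right-hand side equals $T\cdot 0 = 0$. Chaining these equalities yields $H_X(vS)^\T = 0$, i.e.\ $vS\in\mathrm{ker}H_X$; since $v$ was an arbitrary element of $\mathrm{ker}H_G$, we conclude $(\mathrm{ker}H_G)S\subseteq\mathrm{ker}H_X$.

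I do not expect any genuine obstacle here. The only point requiring a moment's care is the transpose and side bookkeeping: $S$ multiplies codewords from the right while $T$ sits to the left of $H_G$, so one should confirm the products compose in the correct order, namely $H_X(vS)^\T = H_X S^\T v^\T = T H_G v^\T = 0$. One could additionally record the slightly stronger fact that $H_X(vS)^\T = T(H_G v^\T)$ holds for every $v\in\mathbb{F}_2^{n_G}$, so that $S$ carries the coset structure of the glue code into that of the memory, but this refinement is not needed for the lemma as stated.
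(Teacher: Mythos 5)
Your proof is correct and follows the same one-line chain as the paper: substitute $H_X S^\T = T H_G$ and use $H_G v^\T = 0$ to get $H_X(vS)^\T = 0$. The paper states it slightly more tersely, but the computation is identical.
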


\begin{proof}
For all $u\in \mathrm{ker}H_G$, $TH_Gu^\mathrm{T} = 0$. Then, $H_XS^\mathrm{T}u^\mathrm{T} = 0$, i.e. $uS\in \mathrm{ker}H_X$. 
\end{proof}

\subsection{Glue codes devised for $\Sigma$}

The operation realised by a sticker is determined by the glue code. To operate logical operators in $\Sigma$, we need to construct a compatible glue code, and the code also needs to meet $\Sigma$. 

\begin{definition}
{\bf Devised glue codes.} Let $v_1,v_2,\ldots,v_q \in \mathrm{rs}(J_Z)$ be vectors corresponding to the operator set $\Sigma$, i.e. $Z(v_i) = \sigma_i$ for $i = 1,2,\ldots,q$. Let $H_G$ be the check matrix of a glue code that is compatible with the memory. The glue code is said to be {\bf coarsely devised} for $\Sigma$ if and only if 
\begin{eqnarray}
\mathrm{span}(v_1,v_2,\ldots,v_q) \subseteq (\mathrm{ker}H_G)S.
\end{eqnarray}
The glue code is said to be {\bf finely devised} for $\Sigma$ if and only if there exists $u_1,u_2,\ldots\in\mathrm{rs}H_Z\oplus\mathrm{rs}F_Z$ such that 
\begin{eqnarray}
\mathrm{span}(v_1,v_2,\ldots,v_q,u_1,u_2,\ldots) = (\mathrm{ker}H_G)S.
\end{eqnarray}
\label{def:devised}
\end{definition}

We will give a systemic approach for constructing devised glue codes in Sec. \ref{app:sticking}. 

Vectors $\{v_1,v_2,\ldots,v_q\}$ span a subspace $\mathrm{rs}J_{Z,A}$, where 
\begin{eqnarray}
J_{Z,A} = \begin{pmatrix} v_1 \\ v_2 \\ \vdots \\ v_q \end{pmatrix}.
\end{eqnarray}
Then, all logical operators in $Z(\mathrm{rs}J_{Z,A}) = \langle\Sigma\rangle$ are actively operated by the sticker. Without loss of generality, we suppose that $\{v_1,v_2,\ldots,v_q\}$ are linearly independent. Then, we can find a basis of $\mathrm{rs}J_Z$ by extending $\{v_1,v_2,\ldots,v_q\}$, denoted by 
\begin{eqnarray}
\{v_1,v_2,\ldots,v_q\}\cup\{v_{q+1},v_{q+2},\ldots,v_k\}.
\end{eqnarray}
Vectors $\{v_{q+1},v_{q+2},\ldots,v_k\}$ span the complementary subspace $\mathrm{rs}J_{Z,C}$, where 
\begin{eqnarray}
J_{Z,C} = \begin{pmatrix} v_{q+1} \\ v_{q+2} \\ \vdots \\ v_k \end{pmatrix}.
\end{eqnarray}
Since $\mathrm{rs}J_Z = \mathrm{rs}J_{Z,A}\oplus\mathrm{rs}J_{Z,C}$, there exist a full rank matrix $\bar{J}_Z\in\mathbb{F}_2^{k\times k}$ such that 
\begin{eqnarray}
\begin{pmatrix} J_{Z,A} \\ J_{Z,C} \end{pmatrix} &=& \bar{J}_ZJ_Z.
\end{eqnarray}
Let $\bar{J}_X = \bar{J}_Z^{-1}$. We have matrices $\bar{J}_{X,A}\in\mathbb{F}_2^{q\times n}$ and $\bar{J}_{X,C}\in\mathbb{F}_2^{(k-q)\times n}$ defined according to 
\begin{eqnarray}
\begin{pmatrix} J_{X,A} \\ J_{X,C} \end{pmatrix} &=& \bar{J}_XJ_X.
\end{eqnarray}
Then, they satisfy $\mathrm{rs}J_X = \mathrm{rs}J_{X,A}\oplus\mathrm{rs}J_{X,C}$, $J_{X,A}J_{Z,A}^\mathrm{T} = E_q$, $J_{X,C}J_{Z,C}^\mathrm{T} = E_{k-q}$, and $J_{X,A}J_{Z,C}^\mathrm{T} = J_{X,C}J_{Z,A}^\mathrm{T} = 0$. 

\begin{lemma}
If the glue code is finely devised for $\Sigma$, there exists $\gamma\in \mathbb{F}_2^{(k-q)\times r_G}$ such that $J_{X,C}S^\mathrm{T} = \gamma H_G$. 
\label{lem:JXC}
\end{lemma}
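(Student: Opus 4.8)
The plan is to dualise. Over $\mathbb{F}_2$ the standard bilinear form on $\mathbb{F}_2^{n_G}$ is nondegenerate, so $\mathrm{rs}H_G = (\mathrm{ker}H_G)^{\perp}$; hence the equation $J_{X,C}S^{\mathrm{T}} = \gamma H_G$ admits a solution $\gamma\in\mathbb{F}_2^{(k-q)\times r_G}$ if and only if every row of $J_{X,C}S^{\mathrm{T}}$ is orthogonal to $\mathrm{ker}H_G$, i.e. $J_{X,C}S^{\mathrm{T}}w^{\mathrm{T}} = 0$ for all $w\in\mathrm{ker}H_G$. Given this, $\gamma$ is obtained concretely by expressing each row of $J_{X,C}S^{\mathrm{T}}$ as a combination of rows of $H_G$ and stacking the coefficient vectors. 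So the whole proof reduces to checking this one orthogonality relation.

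First I would rewrite $J_{X,C}S^{\mathrm{T}}w^{\mathrm{T}} = J_{X,C}(wS)^{\mathrm{T}}$ and locate the vector $wS$. By Lemma \ref{lem:glue_code}, $wS\in\mathrm{ker}H_X$; and since the glue code is \emph{finely} devised for $\Sigma$, Definition \ref{def:devised} gives the exact description $(\mathrm{ker}H_G)S = \mathrm{span}(v_1,\ldots,v_q,u_1,u_2,\ldots)$ with $u_1,u_2,\ldots\in\mathrm{rs}H_Z\oplus\mathrm{rs}F_Z$. Writing $wS = \sum_i a_iv_i + \sum_j b_ju_j$, the two groups of terms vanish separately: the $v_i$ are precisely the rows of $J_{Z,A}$ and $J_{X,C}J_{Z,A}^{\mathrm{T}} = 0$, so $J_{X,C}v_i^{\mathrm{T}} = 0$; and since the rows of $J_{X,C}$ are $\mathbb{F}_2$-combinations of rows of $J_X$ (it is the lower block of $\bar{J}_XJ_X$), while a bare $X$-logical operator commutes with every $Z$-type stabiliser and gauge generator — so $J_X H_Z^{\mathrm{T}} = 0$ and $J_X F_Z^{\mathrm{T}} = 0$ — we get $J_{X,C}u_j^{\mathrm{T}} = 0$ for each $u_j\in\mathrm{rs}H_Z\oplus\mathrm{rs}F_Z$. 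Summing, $J_{X,C}(wS)^{\mathrm{T}} = 0$, which is what was needed.

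The one place that requires care is the input $J_X F_Z^{\mathrm{T}} = 0$: the companion relation $J_X H_Z^{\mathrm{T}} = 0$ is immediate from $\mathrm{rs}J_X\subseteq\mathrm{ker}H_Z$, but commutation of bare logical operators with gauge operators is a separate defining property of the subsystem code and should be recorded among the relations in Appendix \ref{app:preliminaries}. It is also worth emphasising \emph{why} the finely-devised hypothesis is essential and the coarsely-devised one would not suffice: with only coarse devising, $(\mathrm{ker}H_G)S$ could additionally contain complementary $Z$-logicals from $\mathrm{rs}J_{Z,C}$, and those are \emph{not} orthogonal to $\mathrm{rs}J_{X,C}$ since $J_{X,C}J_{Z,C}^{\mathrm{T}} = E_{k-q}$, so the orthogonality step would break. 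Apart from these two observations, the argument is routine linear algebra over $\mathbb{F}_2$.
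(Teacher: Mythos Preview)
Your argument is correct and is essentially the paper's own proof, just unpacked element by element: the paper takes a generator matrix $G$ of $\mathrm{ker}H_G$, writes $GS = \begin{pmatrix} J_{Z,A} \\ K_Z \end{pmatrix}$ with $\mathrm{rs}K_Z\subseteq\mathrm{rs}H_Z\oplus\mathrm{rs}F_Z$, observes $J_{X,C}S^{\mathrm{T}}G^{\mathrm{T}}=0$ for the same reasons you give, and concludes $\mathrm{rs}(J_{X,C}S^{\mathrm{T}})\subseteq\mathrm{ker}G=\mathrm{rs}H_G$. Your remark that $J_XF_Z^{\mathrm{T}}=0$ is being used (and is a defining property of bare logicals in a subsystem code, though not listed explicitly in Appendix~\ref{app:preliminaries}) applies equally to the paper's proof.
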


\begin{proof}
According to Definition \ref{def:devised}, there exists a matrix $K_Z$ and a generator matrix of the glue code $\mathrm{ker}H_G$, denoted by $G$, satisfying $\mathrm{rs}K_Z\subseteq\mathrm{rs}H_Z\oplus\mathrm{rs}F_Z$ and 
\begin{eqnarray}
\begin{pmatrix} J_{Z,A} \\ K_Z \end{pmatrix} &=& GS.
\end{eqnarray}
Then, $J_{X,C}S^\mathrm{T}G^\mathrm{T} = 0$. Therefore, $\mathrm{rs}(J_{X,C}S^\mathrm{T}) \subseteq \mathrm{ker}G = \mathrm{rs}H_G$. 
\end{proof}

\section{Stickers}
\label{app:stickers}

In this section, we define the two types of stickers, measurement stickers and branch stickers.

\begin{definition}
{\bf Measurement stickers.} A measurement sticker is a hypergraph product code generated by check matrices $H_G$ and $\lambda_{d_R}^\mathrm{T}$, where $H_G$ is the check matrix of the glue code. The $X$- and $Z$-operator check matrices of the measurement sticker are 
\begin{eqnarray}
H^M_X &=& \begin{pmatrix}  E_{d_R-1}\otimes H_G& \lambda_{d_R}\otimes  E_{r_G}\end{pmatrix}, \\
H^M_Z &=& \begin{pmatrix}\lambda_{d_R}^\mathrm{T}\otimes  E_{n_G} &E_{d_R}  \otimes H_G^\mathrm{T}\end{pmatrix}.
\end{eqnarray}
\end{definition}

For the convenience of subsequent discussions, we expand measurement-sticker check matrices in the form 
\begin{eqnarray}
H^M_X &=& \begin{pmatrix}
H_G & 0 & \cdots & 0 & 0 & E_{r_G} & E_{r_G} & 0 & \cdots & 0 & 0 & 0 \\
0 & H_G & \cdots & 0 & 0 & 0 & E_{r_G} & E_{r_G} & \cdots & 0 & 0 & 0 \\
\vdots & \vdots & \ddots & \vdots & \vdots & \vdots & \vdots & \vdots & \ddots & \vdots & \vdots & \vdots \\
0 & 0 & \cdots & H_G & 0 & 0 & 0 & 0 & \cdots & E_{r_G} & E_{r_G} & 0 \\
0 & 0 & \cdots & 0 & H_G & 0 & 0 & 0 & \cdots & 0 & E_{r_G} & E_{r_G}
\end{pmatrix},
\label{eq:HMX}
\end{eqnarray}
and 
\begin{eqnarray}
H^M_Z &=& \begin{pmatrix}
E_{n_G} & 0 & \cdots & 0 & 0 & H_G^\mathrm{T} & 0 & 0 & \cdots & 0 & 0 & 0 \\
E_{n_G} & E_{n_G} & \cdots & 0 & 0 & 0 & H_G^\mathrm{T} & 0 & \cdots & 0 & 0 & 0 \\
0 & E_{n_G} & \cdots & 0 & 0 & 0 & 0 & H_G^\mathrm{T} & \cdots & 0 & 0 & 0 \\
\vdots & \vdots & \ddots & \vdots & \vdots & \vdots & \vdots & \vdots & \ddots & \vdots & \vdots & \vdots \\
0 & 0 & \cdots & E_{n_G} & 0 & 0 & 0 & 0 & \cdots & H_G^\mathrm{T} & 0 & 0 \\
0 & 0 & \cdots & E_{n_G} & E_{n_G} & 0 & 0 & 0 & \cdots & 0 & H_G^\mathrm{T} & 0 \\
0 & 0 & \cdots & 0 & E_{n_G} & 0 & 0 & 0 & \cdots & 0 & 0 & H_G^\mathrm{T}
\end{pmatrix}.
\label{eq:HMZ}
\end{eqnarray}

\begin{definition}
{\bf Branch stickers.} A branch sticker is a hypergraph product code generated by check matrices $H_G$ and $\lambda_{d_R;\bullet,1:d_R-1}^\mathrm{T}$, where $H_G$ is the check matrix of the glue code. The $X$- and $Z$-operator check matrices of the branch sticker are 
\begin{eqnarray}
{H^B_X }&=& \begin{pmatrix} E_{d_R-1} \otimes H_G& \lambda_{d_R;\bullet,1:d_R-1} \otimes  \E_{r_G} \end{pmatrix}, \\
H^B_Z &=& \begin{pmatrix}  \lambda_{d_R;\bullet,1:d_R-1}^\mathrm{T} \otimes E_{d_R-1}& E_{n_G} \otimes H_G^\mathrm{T}\end{pmatrix}.
\end{eqnarray} 
\end{definition}

For the convenience of subsequent discussions, we expand measurement-sticker check matrices in the form 
\begin{eqnarray}
H^B_X &=& \begin{pmatrix}
H_G & 0 & \cdots & 0 & 0 & E_{r_G} & E_{r_G} & 0 & \cdots & 0 & 0 \\
0 & H_G & \cdots & 0 & 0 & 0 & E_{r_G} & E_{r_G} & \cdots & 0 & 0 \\
\vdots & \vdots & \ddots & \vdots & \vdots & \vdots & \vdots & \vdots & \ddots & \vdots & \vdots \\
0 & 0 & \cdots & H_G & 0 & 0 & 0 & 0 & \cdots & E_{r_G} & E_{r_G} \\
0 & 0 & \cdots & 0 & H_G & 0 & 0 & 0 & \cdots & 0 & E_{r_G}
\end{pmatrix},
\end{eqnarray}
and 
\begin{eqnarray}
H^B_Z &=& \begin{pmatrix}
E_{n_G} & 0 & \cdots & 0 & 0 & H_G^\mathrm{T} & 0 & 0 & \cdots & 0 & 0 \\
E_{n_G} & E_{n_G} & \cdots & 0 & 0 & 0 & H_G^\mathrm{T} & 0 & \cdots & 0 & 0 \\
0 & E_{n_G} & \cdots & 0 & 0 & 0 & 0 & H_G^\mathrm{T} & \cdots & 0 & 0 \\
\vdots & \vdots & \ddots & \vdots & \vdots & \vdots & \vdots & \vdots & \ddots & \vdots & \vdots \\
0 & 0 & \cdots & E_{n_G} & 0 & 0 & 0 & 0 & \cdots & H_G^\mathrm{T} & 0 \\
0 & 0 & \cdots & E_{n_G} & E_{n_G} & 0 & 0 & 0 & \cdots & 0 & H_G^\mathrm{T}
\end{pmatrix}.
\end{eqnarray}

We can find that $H^B_X$ can be generated by deleting the last column from $H^M_X$ in Eq. (\ref{eq:HMX}), and $H^B_Z$ can be generated by deleting the last column and last row from $H^M_Z$ in Eq. (\ref{eq:HMZ}). 

\section{Deformed codes}
\label{app:deformed_codes}

When the glue code of a sticker is compatible with the memory, we can attach the sticker to the memory and generate a deformed code. In this section, we define the deformed codes and their logical operators. We need to define the logical operators because the deformed codes are subsystem codes in general. With logical operators defined, we analyse distances of deformed codes. 

\subsection{Definitions}

\begin{definition}
{\bf Measurement-sticker deformed codes.} Let $(H_X,H_Z)$ be check matrices of the memory. Let $H_G$ be the check matrix of the glue code. Suppose the glue code is compatible with the memory and finely devised for a set of $Z$ logical operators $\Sigma$. The deformed code is generated by attaching the corresponding measurement sticker to the memory, and its $X$- and $Z$-operator check matrices are 
\begin{eqnarray}
H^{M-M}_X &=& \begin{pmatrix}
H_X & 0 & 0 & \cdots & 0 & 0 & T & 0 & 0 & \cdots & 0 & 0 & 0 \\
0 & H_G & 0 & \cdots & 0 & 0 & E_{r_G} & E_{r_G} & 0 & \cdots & 0 & 0 & 0 \\
0 & 0 & H_G & \cdots & 0 & 0 & 0 & E_{r_G} & E_{r_G} & \cdots & 0 & 0 & 0 \\
\vdots & \vdots & \vdots & \ddots & \vdots & \vdots & \vdots & \vdots & \vdots & \ddots & \vdots & \vdots & \vdots \\
0 & 0 & 0 & \cdots & H_G & 0 & 0 & 0 & 0 & \cdots & E_{r_G} & E_{r_G} & 0 \\
0 & 0 & 0 & \cdots & 0 & H_G & 0 & 0 & 0 & \cdots & 0 & E_{r_G} & E_{r_G}
\end{pmatrix},
\end{eqnarray}
and 
\begin{eqnarray}
H^{M-M}_Z &=& \begin{pmatrix}
H_Z & 0 & 0 & \cdots & 0 & 0 & 0 & 0 & 0 & \cdots & 0 & 0 & 0 \\
S & E_{n_G} & 0 & \cdots & 0 & 0 & H_G^\mathrm{T} & 0 & 0 & \cdots & 0 & 0 & 0 \\
0 & E_{n_G} & E_{n_G} & \cdots & 0 & 0 & 0 & H_G^\mathrm{T} & 0 & \cdots & 0 & 0 & 0 \\
0 & 0 & E_{n_G} & \cdots & 0 & 0 & 0 & 0 & H_G^\mathrm{T} & \cdots & 0 & 0 & 0 \\
\vdots & \vdots & \vdots & \ddots & \vdots & \vdots & \vdots & \vdots & \vdots & \ddots & \vdots & \vdots & \vdots \\
0 & 0 & 0 & \cdots & E_{n_G} & 0 & 0 & 0 & 0 & \cdots & H_G^\mathrm{T} & 0 & 0 \\
0 & 0 & 0 & \cdots & E_{n_G} & E_{n_G} & 0 & 0 & 0 & \cdots & 0 & H_G^\mathrm{T} & 0 \\
0 & 0 & 0 & \cdots & 0 & E_{n_G} & 0 & 0 & 0 & \cdots & 0 & 0 & H_G^\mathrm{T}
\end{pmatrix}.
\end{eqnarray}
\label{def:measurement_deformed}
\end{definition}

\begin{definition}
{\bf Branch-sticker deformed codes.} Let $(H_X,H_Z)$ be check matrices of the memory. Let $H_G$ be the check matrix of the glue code. Suppose the glue code is compatible with the memory. The deformed code is generated by attaching the corresponding branch sticker to the memory, and its $X$- and $Z$-operator check matrices are 
\begin{eqnarray}
H^{M-B}_X &=& \begin{pmatrix}
H_X & 0 & 0 & \cdots & 0 & 0 & T & 0 & 0 & \cdots & 0 & 0 \\
0 & H_G & 0 & \cdots & 0 & 0 & E_{r_G} & E_{r_G} & 0 & \cdots & 0 & 0 \\
0 & 0 & H_G & \cdots & 0 & 0 & 0 & E_{r_G} & E_{r_G} & \cdots & 0 & 0 \\
\vdots & \vdots & \vdots & \ddots & \vdots & \vdots & \vdots & \vdots & \vdots & \ddots & \vdots & \vdots \\
0 & 0 & 0 & \cdots & H_G & 0 & 0 & 0 & 0 & \cdots & E_{r_G} & E_{r_G} \\
0 & 0 & 0 & \cdots & 0 & H_G & 0 & 0 & 0 & \cdots & 0 & E_{r_G}
\end{pmatrix},
\end{eqnarray}
and 
\begin{eqnarray}
H^{M-B}_Z &=& \begin{pmatrix}
H_Z & 0 & 0 & \cdots & 0 & 0 & 0 & 0 & 0 & \cdots & 0 & 0 \\
S & E_{n_G} & 0 & \cdots & 0 & 0 & H_G^\mathrm{T} & 0 & 0 & \cdots & 0 & 0 \\
0 & E_{n_G} & E_{n_G} & \cdots & 0 & 0 & 0 & H_G^\mathrm{T} & 0 & \cdots & 0 & 0 \\
0 & 0 & E_{n_G} & \cdots & 0 & 0 & 0 & 0 & H_G^\mathrm{T} & \cdots & 0 & 0 \\
\vdots & \vdots & \vdots & \ddots & \vdots & \vdots & \vdots & \vdots & \vdots & \ddots & \vdots & \vdots \\
0 & 0 & 0 & \cdots & E_{n_G} & 0 & 0 & 0 & 0 & \cdots & H_G^\mathrm{T} & 0 \\
0 & 0 & 0 & \cdots & E_{n_G} & E_{n_G} & 0 & 0 & 0 & \cdots & 0 & H_G^\mathrm{T}
\end{pmatrix}.
\end{eqnarray}
\label{def:branch_deformed}
\end{definition}

\begin{proposition}
As a consequence of Eq. (\ref{eq:glue_code}) in Definition \ref{def:glue_code}, check matrices of deformed codes are compatible, i.e. $H^{M-M}_X{H^{M-M}_Z}^\mathrm{T} = 0$ and $H^{M-B}_X{H^{M-B}_Z}^\mathrm{T} = 0$. 
\end{proposition}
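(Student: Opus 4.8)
The plan is to verify both identities by a direct block-matrix computation, reading off the block structure of the deformed check matrices and checking that, when each product is expanded block by block, every term vanishes by one of three facts: the CSS condition of the memory, $H_X H_Z^{\mathrm{T}} = 0$; the compatibility relation of Definition \ref{def:glue_code}, $H_X S^{\mathrm{T}} = T H_G$ (Eq. (\ref{eq:glue_code})); and the fact that a sticker is a hypergraph product code, so its own check matrices commute, $H^M_X (H^M_Z)^{\mathrm{T}} = 0$ and $H^B_X (H^B_Z)^{\mathrm{T}} = 0$. This last fact is the standard one-line identity $H_1 \otimes H_2^{\mathrm{T}} + H_1 \otimes H_2^{\mathrm{T}} = 0$ over $\mathbb{F}_2$ applied to the hypergraph product construction of Appendix \ref{app:preliminaries}, and it holds for any binary matrices $H_1, H_2$; here $(H_1, H_2) = (H_G, \lambda_{d_R}^{\mathrm{T}})$ for a measurement sticker and $(H_G, \lambda_{d_R;\bullet,1:d_R-1}^{\mathrm{T}})$ for a branch sticker.

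For $H^{M-M}_X (H^{M-M}_Z)^{\mathrm{T}}$ I would group the columns of both matrices into the memory qubits and the two families of sticker qubits (the $n_G$-type and $r_G$-type qubits of the hypergraph product). In $H^{M-M}_X$ the memory columns carry $H_X$ in the first block row and are zero in every other block row, and the block $T$ is the unique coupling of a sticker column to the memory row block. In $H^{M-M}_Z$ the memory columns carry $H_Z$ in the first block row, and the block $S$ is the unique coupling of the memory columns to a sticker row block; after deleting the memory columns, the sticker rows of $H^{M-M}_Z$ are exactly the rows of $H^M_Z$. With this accounting the product decomposes as follows: the memory-memory entry is $H_X H_Z^{\mathrm{T}} = 0$; the entry pairing the first block row of $H^{M-M}_X$ with the block row of $H^{M-M}_Z$ that contains $S$ equals $H_X S^{\mathrm{T}} + T H_G$, which vanishes by Eq. (\ref{eq:glue_code}); every entry pairing a sticker block row of $H^{M-M}_X$ with a block row of $H^{M-M}_Z$ reproduces the corresponding entry of $H^M_X (H^M_Z)^{\mathrm{T}}$ (the sticker block rows of $H^{M-M}_X$ are zero on the memory columns, so $S$ contributes nothing there), and these vanish because the sticker is a hypergraph product code; all remaining entries are zero because one of the two factors is a zero block in the relevant positions. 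Hence $H^{M-M}_X (H^{M-M}_Z)^{\mathrm{T}} = 0$.

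The branch case $H^{M-B}_X (H^{M-B}_Z)^{\mathrm{T}} = 0$ follows by the same argument verbatim: a branch sticker is still a hypergraph product code, so $H^B_X (H^B_Z)^{\mathrm{T}} = 0$, and the pasting matrices $S$ and $T$ occupy exactly the same block positions in $H^{M-B}_Z$ and $H^{M-B}_X$ as in the measurement-sticker deformed code, so the only non-trivial cross term is again $H_X S^{\mathrm{T}} + T H_G = 0$. I do not expect a genuine obstacle here; the content is bookkeeping. The one point that needs care is confirming that $S$ occurs in exactly one block row of the deformed $Z$-check matrix and $T$ in exactly one block column of the deformed $X$-check matrix, and that these two blocks are positioned so that their contribution combines with $H_X S^{\mathrm{T}}$ into precisely $H_X S^{\mathrm{T}} + T H_G$, rather than into a sum of several uncancelled terms — this is exactly where the compatibility equation is the right hypothesis.
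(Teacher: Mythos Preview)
Your proposal is correct and matches the paper's approach: the paper states the proposition without a written proof, treating it as a direct consequence of the compatibility equation $H_X S^{\mathrm{T}} = TH_G$, and your block-matrix verification is precisely the implicit computation being invoked. The three ingredients you isolate---$H_X H_Z^{\mathrm{T}}=0$, the compatibility relation, and the hypergraph-product identity for the sticker itself---are exactly what is needed, and your bookkeeping about the unique positions of $S$ and $T$ is accurate.
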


\subsection{Logical operators}

\begin{definition}
{\bf Logical operators of measurement-sticker deformed codes.} For a measurement-sticker deformed code as defined in Definition \ref{def:measurement_deformed}, its $X$- and $Z$-operator generator matrices are 
\begin{eqnarray}
J^{M-M}_X &=& \begin{pmatrix}
J_{X,C} & J_{X,C}S^\mathrm{T} & J_{X,C}S^\mathrm{T} & \cdots & J_{X,C}S^\mathrm{T} & J_{X,C}S^\mathrm{T} & 0 & 0 & 0 & \cdots & 0 & 0 & \gamma
\end{pmatrix},
\end{eqnarray}
and 
\begin{eqnarray}
J^{M-M}_Z &=& \begin{pmatrix}
J_{Z,C} & 0 & 0 & \cdots & 0 & 0 & 0 & 0 & 0 & \cdots & 0 & 0 & 0
\end{pmatrix}.
\end{eqnarray}
\end{definition}

\begin{definition}
{\bf Logical operators of branch-sticker deformed codes.} For a branch-sticker deformed code as defined in Definition \ref{def:branch_deformed}, its $X$- and $Z$-operator generator matrices are 
\begin{eqnarray}
J^{M-B}_X &=& \begin{pmatrix}
J_{X,A} & J_{X,A}S^\mathrm{T} & J_{X,A}S^\mathrm{T} & \cdots & J_{X,A}S^\mathrm{T} & J_{X,A}S^\mathrm{T} & 0 & 0 & 0 & \cdots & 0 & 0 \\
J_{X,C} & J_{X,C}S^\mathrm{T} & J_{X,C}S^\mathrm{T} & \cdots & J_{X,C}S^\mathrm{T} & J_{X,C}S^\mathrm{T} & 0 & 0 & 0 & \cdots & 0 & 0
\end{pmatrix},
\end{eqnarray}
and 
\begin{eqnarray}
J^{M-B}_Z &=& \begin{pmatrix}
J_{Z,A} & 0 & 0 & \cdots & 0 & 0 & 0 & 0 & 0 & \cdots & 0 & 0 \\
J_{Z,C} & 0 & 0 & \cdots & 0 & 0 & 0 & 0 & 0 & \cdots & 0 & 0
\end{pmatrix}.
\end{eqnarray}
\end{definition}

\begin{proposition}
As a consequence of Lemma \ref{lem:JXC}, generator matrices of measurement-sticker deformed codes are valid, i.e. $H^{M-M}_X{J^{M-M}_Z}^\mathrm{T} = H^{M-M}_Z{J^{M-M}_X}^\mathrm{T} = 0$ and $J^{M-M}_X{J^{M-M}_Z}^\mathrm{T} = E_{k-q}$. Generator matrices of branch-sticker deformed codes are valid, i.e. $H^{M-B}_X{J^{M-B}_Z}^\mathrm{T} = H^{M-B}_Z{J^{M-B}_X}^\mathrm{T} = 0$ and $J^{M-B}_X{J^{M-B}_Z}^\mathrm{T} = E_k$. 
\end{proposition}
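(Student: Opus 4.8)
The plan is to verify the four matrix identities by direct block multiplication, reading the block layout off Definitions~\ref{def:measurement_deformed} and \ref{def:branch_deformed} and the generator matrices above, using only the following ingredients: (i) the memory relations $H_XJ_Z^\mathrm{T}=0$, $H_ZJ_X^\mathrm{T}=0$, $J_XJ_Z^\mathrm{T}=E_k$, together with the refinements $H_XJ_{Z,A}^\mathrm{T}=H_XJ_{Z,C}^\mathrm{T}=0$, $H_ZJ_{X,A}^\mathrm{T}=H_ZJ_{X,C}^\mathrm{T}=0$, $J_{X,A}J_{Z,A}^\mathrm{T}=E_q$, $J_{X,C}J_{Z,C}^\mathrm{T}=E_{k-q}$ and $J_{X,A}J_{Z,C}^\mathrm{T}=J_{X,C}J_{Z,A}^\mathrm{T}=0$, which hold because the rows of $J_{X,A},J_{X,C}$ (resp. $J_{Z,A},J_{Z,C}$) lie in $\mathrm{rs}J_X$ (resp. $\mathrm{rs}J_Z$) via the invertible $\bar J_X,\bar J_Z$, and because of the pairing properties recorded just before Lemma~\ref{lem:JXC}; (ii) compatibility, Eq.~(\ref{eq:glue_code}); (iii) Lemma~\ref{lem:JXC}, $J_{X,C}S^\mathrm{T}=\gamma H_G$; and (iv) the telescoping of $\lambda_{d_R}$: over $\mathbb{F}_2$, two equal consecutive blocks sharing a row of $\lambda_{d_R}^\mathrm{T}$ cancel.

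First I would dispatch the easy direction, $H^{M-M}_X{J^{M-M}_Z}^\mathrm{T}=0$ and $H^{M-B}_X{J^{M-B}_Z}^\mathrm{T}=0$. The matrices $J^{M-M}_Z$ and $J^{M-B}_Z$ are supported only on the memory register, so every sticker block row of $H^{M-M}_X$ or $H^{M-B}_X$ (those carrying $H_G$ and the copies of $E_{r_G}$, as well as the $T$-column, which is a sticker-type column) annihilates it, while the single memory block row contributes $H_XJ_{Z,C}^\mathrm{T}$ (resp. also $H_XJ_{Z,A}^\mathrm{T}$), zero by (i).

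The content lies in the $Z$-check conditions $H^{M-M}_Z{J^{M-M}_X}^\mathrm{T}=0$ and $H^{M-B}_Z{J^{M-B}_X}^\mathrm{T}=0$. Transposed, the relevant $J$-columns carry $J_{X,C}^\mathrm{T}$ (resp. $J_{X,A}^\mathrm{T}$) on the memory register, a repeated block $SJ_{X,C}^\mathrm{T}$ (resp. $SJ_{X,A}^\mathrm{T}$) on the $n_G$-wide sticker bit blocks, and zero on the $r_G$-wide sticker bit blocks except the trailing $\gamma^\mathrm{T}$ in the measurement case. The memory block row gives $H_ZJ_{X,C}^\mathrm{T}$ (and $H_ZJ_{X,A}^\mathrm{T}$), zero by (i). On the bidiagonal $E_{n_G}$-structure, each interior sticker block row sees two equal repeated blocks and cancels by (iv); the first sticker block row sees $SJ_{X,C}^\mathrm{T}$ produced by the $S$ in the memory column plus $SJ_{X,C}^\mathrm{T}$ from the adjacent $E_{n_G}$, and cancels too. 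The one non-obvious closure is the last block row of $H^{M-M}_Z$: since the repetition code of a measurement sticker is closed, $\lambda_{d_R}^\mathrm{T}$ has a final row with a single $1$, so this block row sees the last $SJ_{X,C}^\mathrm{T}$ through $E_{n_G}$ together with $H_G^\mathrm{T}\gamma^\mathrm{T}$ through the final $H_G^\mathrm{T}$, and these cancel exactly because $J_{X,C}S^\mathrm{T}=\gamma H_G$, i.e. by Lemma~\ref{lem:JXC}. For a branch sticker $\lambda_{d_R;\bullet,1:d_R-1}^\mathrm{T}$ is square and lower bidiagonal, so its last row carries two $1$'s and the last sticker block row closes by the cancellation (iv) alone; this is why a branch sticker needs no $\gamma$ and only a coarsely devised glue code.

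Finally, for the pairings $J^{M-M}_X{J^{M-M}_Z}^\mathrm{T}$ and $J^{M-B}_X{J^{M-B}_Z}^\mathrm{T}$: since $J^{M-M}_Z$ and $J^{M-B}_Z$ vanish off the memory register, all sticker blocks of the $J_X$-matrices drop and the products collapse to $J_{X,C}J_{Z,C}^\mathrm{T}=E_{k-q}$ in the measurement case, and to the $2\times2$ block matrix with entries $J_{X,A}J_{Z,A}^\mathrm{T}$, $J_{X,A}J_{Z,C}^\mathrm{T}$, $J_{X,C}J_{Z,A}^\mathrm{T}$, $J_{X,C}J_{Z,C}^\mathrm{T}$ in the branch case, which by (i) is the block-diagonal matrix with blocks $E_q$ and $E_{k-q}$, namely $E_k$. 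I do not expect a genuine obstacle; the whole argument is bookkeeping. The step requiring care is matching the block and column layout of the closed boundary ($\lambda_{d_R}$, measurement sticker) against the open boundary ($\lambda_{d_R;\bullet,1:d_R-1}$, branch sticker), and confirming that Lemma~\ref{lem:JXC} enters exactly once, at the closed end of the measurement-sticker deformed code.
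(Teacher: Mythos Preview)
Your proposal is correct and matches the paper's intent: the paper states the Proposition without an explicit proof, simply citing Lemma~\ref{lem:JXC}, and your direct block-multiplication argument is exactly the computation that justifies this. In particular, you correctly identify that Lemma~\ref{lem:JXC} enters precisely once, to close the last block row of $H^{M-M}_Z(J^{M-M}_X)^\mathrm{T}$ via $SJ_{X,C}^\mathrm{T}+H_G^\mathrm{T}\gamma^\mathrm{T}=0$, whereas the branch-sticker case closes by telescoping alone because the last row of $\lambda_{d_R;\bullet,1:d_R-1}^\mathrm{T}$ carries two $1$'s.
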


\subsection{Code distances}

\begin{lemma}
Let the distance of the memory be $d$. The distance of the measurement-sticker deformed code has the lower bound 
\begin{eqnarray}
d^{M-M} &\geq & \min\{d/\vert S\vert,d_R\}.
\end{eqnarray}
Here, $\vert S\vert$ denotes the matrix norm induced by the Hamming weight, where $S$ acts on the vector from the right side. 
\label{lem:dis_mea}
\end{lemma}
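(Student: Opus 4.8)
The plan is to bound the distance of the deformed code from below by exhibiting, for any logical representative of a deformed-code logical operator that is ``short'', either a shorter representative on the memory (whose weight is therefore $\geq d$, scaled by $\vert S\vert$) or a chain living purely on the sticker (whose weight is controlled by $d_R$). Since the deformed code is CSS, I would treat $X$-type and $Z$-type logical operators separately, i.e. bound $d(H^{M-M}_Z,J^{M-M}_Z)$ and $d(H^{M-M}_X,J^{M-M}_X)$ and take the minimum.

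For the $Z$-distance: a vector $e\in\mathrm{ker}H^{M-M}_X$ has a block structure $e = (e_0,e_1,\ldots,e_{d_R-1},f_1,\ldots,f_{d_R})$ with $e_0$ supported on the memory qubits, the $e_j$ on the $n_G$-blocks, and the $f_j$ on the $r_G$-blocks. The condition $H^{M-M}_X e^{\mathrm T}=0$ forces $H_X e_0^{\mathrm T} = T(\cdots)$ and a system of repetition-code-like constraints linking the $e_j$ and $f_j$. The key observation is that $J^{M-M}_Z$ acts only on $e_0$ through $J_{Z,C}$, so if $J^{M-M}_Z e^{\mathrm T}\neq 0$ then $J_{Z,C}e_0^{\mathrm T}\neq 0$. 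If $e_0$ is itself in $\mathrm{ker}H_X$, then $\vert e_0\vert\geq d$, hence $\vert e\vert\geq d\geq d/\vert S\vert$. If $e_0\notin\mathrm{ker}H_X$, I would use the sticker constraints to show that the weight of the sticker part of $e$ must be large: roughly, the ``syndrome'' $H_Xe_0^{\mathrm T}$ is compensated by $T$ times sticker data, and propagating this through the repetition-code backbone of the sticker (the $\lambda_{d_R}$ blocks) forces a chain of length $\Omega(d_R)$ on the sticker, or alternatively one can apply a sticker $X$-stabiliser/cleaning argument to push $e_0$ into $\mathrm{ker}H_X$ at the cost of adding sticker weight, reducing to the first case with the $d/\vert S\vert$ bound coming from how $S$ spreads memory qubits. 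For the $X$-distance the argument is dual: a vector $\tilde e\in\mathrm{ker}H^{M-M}_Z$ with $J^{M-M}_X\tilde e^{\mathrm T}\neq 0$ either cleans down to a memory logical $X$ (giving $\geq d/\vert S\vert$, where the $\vert S\vert$ enters because cleaning the sticker part back onto the memory applies $S^{\mathrm T}$, which can inflate the weight by $\vert S\vert$) or wraps around the sticker's repetition backbone in the transverse direction, giving $\geq d_R$.

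The cleaning/decongestion step is the main obstacle. Concretely, I need to show that any deformed-code logical operator can be brought, by multiplying with stabilisers of the deformed code, into one of two ``canonical'' forms — either entirely on the memory region or entirely on the sticker — without increasing the weight by more than the stated factors. This is where the compatibility equation $H_XS^{\mathrm T}=TH_G$ (Eq.~\eqref{eq:glue_code}), Lemma~\ref{lem:glue_code} ($(\mathrm{ker}H_G)S\subseteq\mathrm{ker}H_X$), and Lemma~\ref{lem:JXC} do the real work: they guarantee that the sticker stabilisers act on the memory exactly as stabiliser/logical operators (via $S$), so that sweeping sticker data onto the memory stays inside $\mathrm{ker}H_X$, and that the candidate logical representatives in $J^{M-M}_X$, $J^{M-M}_Z$ are consistent with this sweeping. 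The factor $\vert S\vert$ appears precisely because a single memory qubit can be hit by up to $\vert S\vert$ sticker qubits when we transport a chain across the paste; the factor $d_R$ appears because the only way a chain can avoid being transported is to run ``vertically'' through all $d_R$ layers of the sticker, and such a chain is a codeword of the length-$d_R$ repetition code (or its transpose), hence has weight $\geq d_R$.

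I would organise the write-up as: (i) fix the block decomposition and restate the kernel conditions explicitly for $H^{M-M}_X$ and $H^{M-M}_Z$; (ii) prove a ``sticker cleaning lemma'' — any element of $\mathrm{ker}H^{M-M}_X$ (resp.\ $\mathrm{ker}H^{M-M}_Z$) is equivalent modulo deformed-code stabilisers to one supported either on the memory or purely on the top layer of the sticker, with weight increase bounded by $\vert S\vert$; (iii) in the first case invoke the memory distance $d$ together with $\vert J_{Z,C}e_0^{\mathrm T}\vert$ vs.\ $\vert J^{M-M}_Z e^{\mathrm T}\vert$ to get $d/\vert S\vert$; (iv) in the second case identify the residual chain as a (transpose-)repetition-code codeword and get $d_R$; (v) take the minimum over the two cases and the two Pauli types. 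Step (ii) is the one that will require care about exactly which stabiliser rows (memory rows $H_X$, glue rows $H_G$, identity-coupling rows) are used and in what order, so that the $\vert S\vert$ blow-up is incurred at most once.
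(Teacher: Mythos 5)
The central gap is that you have paired the check matrices with the wrong logical generator matrices. The relevant distances are $d(H^{M-M}_X,J^{M-M}_X)$ and $d(H^{M-M}_Z,J^{M-M}_Z)$ (cf.\ the definition $d(H,J) = \min_{e\in\ker H,\,Je^\mathrm{T}\neq 0}\vert e\vert$ in Appendix~\ref{app:preliminaries}): a $Z$-type error $e\in\ker H^{M-M}_X$ is logically nontrivial iff it anti-commutes with some $X$ logical, i.e.\ iff $J^{M-M}_X e^\mathrm{T}\neq 0$, not $J^{M-M}_Z e^\mathrm{T}\neq 0$. This matters because your stated ``key observation'' --- that the logical check is supported only on the memory block $e_0$ --- holds for $J^{M-M}_Z$ but fails for $J^{M-M}_X$: by construction $J^{M-M}_X = \begin{pmatrix}J_{X,C} & J_{X,C}S^\mathrm{T} & \cdots & J_{X,C}S^\mathrm{T} & 0 & \cdots & 0 & \gamma\end{pmatrix}$ reaches into every $n_G$-layer of the sticker and the final $r_G$-block, and handling this sticker contribution is exactly where Lemma~\ref{lem:JXC} does its work. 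Dually, the direction you call the $X$-distance (elements of $\ker H^{M-M}_Z$) is the one where the logical check $J^{M-M}_Z$ really is supported only on the memory, and there no cleaning is needed at all: the first row of $H^{M-M}_Z$ alone forces $H_Z e_0^\mathrm{T}=0$, while $J^{M-M}_Z e^\mathrm{T}=J_{Z,C}e_0^\mathrm{T}\neq 0$ already makes $e_0$ a memory logical error, giving $\vert e\vert\geq\vert e_0\vert\geq d\geq\min\{d/\vert S\vert,d_R\}$ outright. In short, you have the easy and hard directions swapped.

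Beyond the mispairing, the ``sticker cleaning lemma'' you defer to is never produced, and it misses the single move that closes the argument. With $e=(u_0,u_1,\ldots,u_{d_R-1},v_1,\ldots,v_{d_R})$, the dichotomy is simply this: if $\vert e\vert<d_R$ then at least one check-block $v_l$ must vanish, and that vanishing block lets the chain constraints $H_G u_j^\mathrm{T}=v_j^\mathrm{T}+v_{j+1}^\mathrm{T}$ telescope to $H_G\sum_{j<l}u_j^\mathrm{T}=v_1^\mathrm{T}$, whence $u := u_0+\left(\sum_{j<l}u_j\right)S\in\ker H_X$, and (using Lemma~\ref{lem:JXC} together with the remaining chain constraints and $v_l=0$ to kill the trailing term $J_{X,C}S^\mathrm{T}\sum_{j\geq l}u_j^\mathrm{T}+\gamma v_{d_R}^\mathrm{T}$) also $J_{X,C}u^\mathrm{T}\neq 0$. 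The paper never multiplies $e$ by stabilizers at all; it constructs $u$ directly from the kernel equations, with the factor $\vert S\vert$ entering through $\vert u\vert\leq\vert u_0\vert+\vert S\vert\sum_{j<l}\vert u_j\vert$. Your qualitative two-case picture ($d_R$ from the repetition backbone, $d/\vert S\vert$ from transport to the memory) is the right one, but without identifying the $\vert e\vert<d_R\Rightarrow v_l=0$ dichotomy, the step you flag as ``the one that will require care'' is precisely where the proof stalls.
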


\begin{proof}
{\bf $X$-operator distance.} We prove the distance lower bound by contradiction. Suppose that there exists a $Z$ logical error $e$, i.e. $H^{M-M}_Xe^\mathrm{T} = 0$ and $J^{M-M}_Xe^\mathrm{T} \neq 0$, but its weight is $\vert e\vert < \min\{d/\vert S\vert,d_R\}$. Let the error be 
\begin{eqnarray}
e = \begin{pmatrix}
u_0 & u_1 & u_2 & \cdots & u_{d_R-2} & u_{d_R-1} & v_1 & v_2 & v_3 & \cdots & v_{d_R-1} & v_{d_R}
\end{pmatrix}.
\label{eq:error}
\end{eqnarray}
As a consequence of $H^{M-M}_Xe^\mathrm{T} = 0$, the following equations hold, 
\begin{eqnarray}
H_Xu_0^\mathrm{T} &=& Tv_1^\mathrm{T}, \label{eq:u0} \\
H_Gu_j^\mathrm{T} &=& v_j^\mathrm{T}+v_{j+1}^\mathrm{T}, \label{eq:uj}
\end{eqnarray}
where $j = 1,2,\ldots,d_R-1$. Because $\vert e\vert < d_R$, one of $v_1,v_2,\ldots,v_{d_R}$ must be zero. Suppose $v_l = 0$. According to Eq. (\ref{eq:uj}), 
\begin{eqnarray}
H_G\sum_{j=1}^{l-1}u_j^\mathrm{T} &=& v_1^\mathrm{T}.
\end{eqnarray}
Substitute $v_1^\mathrm{T}$ into Eq. (\ref{eq:u0}), we have 
\begin{eqnarray}
H_Xu_0^\mathrm{T} = TH_G\sum_{j=1}^{l-1}u_j^\mathrm{T} = H_XS^\mathrm{T}\sum_{j=1}^{l-1}u_j^\mathrm{T}.
\end{eqnarray}
Therefore, 
\begin{eqnarray}
H_X\left(u_0^\mathrm{T} + S^\mathrm{T}\sum_{j=1}^{l-1}u_j^\mathrm{T}\right) = 0.
\label{eq:HX}
\end{eqnarray}

As a consequence of $J^{M-M}_Xe^\mathrm{T} \neq 0$, 
\begin{eqnarray}
J_{X,C}u_0^\mathrm{T} + J_{X,C}S^\mathrm{T}\sum_{j=1}^{d_R-1}u_j^\mathrm{T} + \gamma v_{d_R}^\mathrm{T} \neq 0.
\end{eqnarray}
We consider the term 
\begin{eqnarray}
x = J_{X,C}S^\mathrm{T}\sum_{j=l}^{d_R-1}u_j^\mathrm{T} + \gamma v_{d_R}^\mathrm{T}.
\end{eqnarray}
According to Lemma \ref{lem:JXC}, 
\begin{eqnarray}
x = \gamma \left(H_G\sum_{j=l}^{d_R-1}u_j^\mathrm{T} +  v_{d_R}^\mathrm{T}\right).
\end{eqnarray}
Because of Eq. (\ref{eq:uj}) and $v_l = 0$, $x = 0$. Therefore, 
\begin{eqnarray}
J_{X,C}\left(u_0^\mathrm{T} + S^\mathrm{T}\sum_{j=1}^{l-1}u_j^\mathrm{T}\right) \neq 0.
\label{eq:JX}
\end{eqnarray}

According to Eqs. (\ref{eq:HX}) and (\ref{eq:JX}), the error 
\begin{eqnarray}
u = u_0 + \left(\sum_{j=1}^{l-1}u_j\right)S
\end{eqnarray}
is a logical error of the memory, i.e. $\vert u\vert\geq d$. Using inequalities 
\begin{eqnarray}
\vert u\vert &\leq& \vert u_0\vert + \left(\sum_{j=1}^{l-1}\vert u_j\vert\right)\vert S\vert, \\
\vert e\vert &\geq& \vert u_0\vert + \left(\sum_{j=1}^{l-1}\vert u_j\vert\right),
\end{eqnarray}
and $\vert S\vert \geq 1$ (when $S\neq 0$), we have $\vert e\vert \geq \vert u\vert/\vert S\vert \geq d/\vert S\vert$, which is a contradiction. 

{\bf $Z$-operator distance.} Let $e$ be an $X$ logical error, i.e. $H^{M-M}_Ze^\mathrm{T} = 0$ and $J^{M-M}_Ze^\mathrm{T} \neq 0$. Expressing $e$ in the form of Eq. (\ref{eq:error}), we have 
\begin{eqnarray}
H_Zu_0^\mathrm{T} &=&0, \\
J_{Z,C}u_0^\mathrm{T} &\neq& 0.
\end{eqnarray}
Therefore, $u_0$ is a logical error of the memory, i.e. $\vert u_0\vert\geq d$. Then, $\vert e\vert \geq \vert u_0\vert \geq d$. 
\end{proof}

\begin{lemma}
Let the distance of the memory be $d$. The distance of the branch-sticker deformed code has the lower bound 
\begin{eqnarray}
d^{M-M} &\geq & d/\vert S\vert.
\end{eqnarray}
\label{lem:dis_bran}
\end{lemma}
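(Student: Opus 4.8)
The plan is to mirror the proof of Lemma \ref{lem:dis_mea}, using the observation recorded just before Section \ref{app:deformed_codes} that the branch-sticker deformed code is obtained from the measurement-sticker deformed code by deleting the last column of $H^{M-M}_X$ and the last row and column of $H^{M-M}_Z$. This truncation is exactly what removes the $d_R$ term from the bound, so the proof is actually a simplification of the previous one.

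\textbf{$Z$-operator distance.} I would dispatch this case first, as it is immediate and identical in spirit to the corresponding part of Lemma \ref{lem:dis_mea}. Given an $X$ logical error $e = \begin{pmatrix} u_0 & u_1 & \cdots & v_1 & \cdots \end{pmatrix}$ with $H^{M-B}_Ze^\mathrm{T} = 0$ and $J^{M-B}_Ze^\mathrm{T}\neq 0$, the top block row of $H^{M-B}_Z$ gives $H_Zu_0^\mathrm{T} = 0$, while the shape of $J^{M-B}_Z$ (whose only nonzero columns act on the memory block) gives $\begin{pmatrix} J_{Z,A} \\ J_{Z,C}\end{pmatrix}u_0^\mathrm{T}\neq 0$, hence $J_Zu_0^\mathrm{T}\neq 0$ since $\mathrm{rs}J_Z = \mathrm{rs}J_{Z,A}\oplus\mathrm{rs}J_{Z,C}$. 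Thus $u_0$ is an $X$ logical error of the memory, $\vert u_0\vert\geq d$, and therefore $\vert e\vert\geq\vert u_0\vert\geq d\geq d/\vert S\vert$.

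\textbf{$X$-operator distance.} For a $Z$ logical error $e = \begin{pmatrix} u_0 & u_1 & \cdots & u_{d_R-1} & v_1 & \cdots & v_{d_R-1}\end{pmatrix}$, I would read off from $H^{M-B}_Xe^\mathrm{T} = 0$ the relations $H_Xu_0^\mathrm{T} = Tv_1^\mathrm{T}$, $H_Gu_j^\mathrm{T} = v_j^\mathrm{T}+v_{j+1}^\mathrm{T}$ for $j = 1,\ldots,d_R-2$, and, crucially, the truncated last relation $H_Gu_{d_R-1}^\mathrm{T} = v_{d_R-1}^\mathrm{T}$. Unlike the measurement-sticker case, this last relation closes the chain without needing any $v_l$ to vanish: back-substitution yields $v_j^\mathrm{T} = H_G\sum_{i=j}^{d_R-1}u_i^\mathrm{T}$ for every $j$, and in particular $v_1^\mathrm{T} = H_G\sum_{i=1}^{d_R-1}u_i^\mathrm{T}$. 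Substituting into the memory relation and using $H_XS^\mathrm{T} = TH_G$ from Eq.~(\ref{eq:glue_code}) gives $H_X\bigl(u_0^\mathrm{T} + S^\mathrm{T}\sum_{i=1}^{d_R-1}u_i^\mathrm{T}\bigr) = 0$. Because $J^{M-B}_X$ has zero columns on all sticker-check blocks, $J^{M-B}_Xe^\mathrm{T}$ equals $\begin{pmatrix} J_{X,A}u^\mathrm{T} \\ J_{X,C}u^\mathrm{T}\end{pmatrix}$ with $u = u_0 + \bigl(\sum_{i=1}^{d_R-1}u_i\bigr)S$, so $J^{M-B}_Xe^\mathrm{T}\neq 0$ forces $J_Xu^\mathrm{T}\neq 0$. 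Hence $u$ is a $Z$ logical error of the memory, $\vert u\vert\geq d$, and the triangle inequality $\vert u\vert\leq\vert u_0\vert + \bigl(\sum_i\vert u_i\vert\bigr)\vert S\vert\leq\vert S\vert\,\vert e\vert$ (using $\vert S\vert\geq 1$) yields $\vert e\vert\geq d/\vert S\vert$.

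The calculation is essentially routine once the index bookkeeping for the branch sticker is pinned down; the one conceptual point to get right — the closest thing to an obstacle — is recognizing that deleting the last repetition-code bit converts the "one of the $v_l$ vanishes, then split at $l$" step of Lemma \ref{lem:dis_mea} into an exact recursion valid for all $j$, which is precisely why the $d_R$ cutoff disappears. One should also note that, in contrast to Lemma \ref{lem:dis_mea}, no separate appeal to Lemma \ref{lem:JXC} is needed, since $J^{M-B}_X$ carries no $\gamma$-term on a dangling check block; this is what makes the branch-sticker argument strictly shorter.
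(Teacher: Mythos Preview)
Your proof is correct and follows essentially the same approach as the paper: the paper's proof simply instructs one to rerun the argument of Lemma~\ref{lem:dis_mea} with $v_{d_R}$ deleted, $v_{d_R}=0$, and $l=d_R$, which is exactly the closed recursion you write out explicitly. Your additional observations---that both rows $J_{X,A}$ and $J_{X,C}$ of $J^{M-B}_X$ must be used (rather than only $J_{X,C}$), and that the $\gamma$-term of Lemma~\ref{lem:JXC} is absent---are correct refinements the paper's terse statement glosses over.
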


\begin{proof}
{\bf $X$-operator distance.} The proof is similar to Lemma \ref{lem:dis_mea}. We remove the entry $v_{d_R}$ from the expression of the error $e$ in Eq. (\ref{eq:error}) and take $v_{d_R} = 0$ and $l = d_R$ in the equations. Because we always have $v_{d_R} = 0$, the existence of a zero-valued $v_j$ entry is independent of $d_R$, i.e. the distance lower bound is independent of $d_R$. 

{\bf $Z$-operator distance.} The proof is the same as Lemma \ref{lem:dis_mea}. 
\end{proof}

\section{Relations between memory operators and deformed-code operators}

To establish the relations between memory operators and deformed-code operators, we introduce three matrices representing supports of the memory on the measurement-sticker deformed code, the memory on the branch-sticker deformed code and the open boundary on the branch sticker, respectively. They are 
\begin{eqnarray}
P_{M-M} &=& \begin{pmatrix}
E_n & 0 & 0 & \cdots & 0 & 0 & 0 & 0 & 0 & \cdots & 0 & 0 & 0
\end{pmatrix},
\end{eqnarray}
\begin{eqnarray}
P_{M-B} &=& \begin{pmatrix}
E_n & 0 & 0 & \cdots & 0 & 0 & 0 & 0 & 0 & \cdots & 0 & 0
\end{pmatrix}
\end{eqnarray},
and 
\begin{eqnarray}
P_{ob} &=& \begin{pmatrix}
0 & 0 & 0 & \cdots & 0 & E_{n_G} & 0 & 0 & 0 & \cdots & 0 & 0
\end{pmatrix}.
\end{eqnarray}
The matrices $P_{M-M}^\mathrm{T}P_{M-M}$, $P_{M-B}^\mathrm{T}P_{M-B}$ and $P_{ob}^\mathrm{T}P_{ob}$ are projections onto supports of the memory on the measurement-sticker deformed code, the memory on the branch-sticker deformed code and the open boundary on the branch sticker, respectively. 

\begin{lemma}
For a measurement-sticker deformed code, 
\begin{eqnarray}
\mathrm{rs}H_X &=& \mathrm{rs}(H^{M-M}_XP_{M-M}^\mathrm{T}), \label{eq:rel_mea_HX} \\
\mathrm{rs}(H_ZP_{M-M}) &\subseteq& \mathrm{rs}H^{M-M}_Z, \label{eq:rel_mea_HZ} \\
\mathrm{rs}J_{X,C} &=& \mathrm{rs}(J^{M-M}_XP_{M-M}^\mathrm{T}), \label{eq:rel_mea_JXC} \\
\mathrm{rs}(J_{Z,C}P_{M-M}) &=& \mathrm{rs}J^{M-M}_Z, \label{eq:rel_mea_JZC} \\
\mathrm{rs}(J_{Z,A}P_{M-M}) &\subseteq& \mathrm{rs}H^{M-M}_Z \label{eq:rel_mea_JZA}.
\end{eqnarray}
\label{lem:rel_mea}
\end{lemma}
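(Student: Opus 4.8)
The plan is to read off the five relations directly from the block forms of $H^{M-M}_X$, $H^{M-M}_Z$, $J^{M-M}_X$, $J^{M-M}_Z$ in Definition~\ref{def:measurement_deformed}, using that $P_{M-M}=\begin{pmatrix}E_n & 0 & \cdots & 0\end{pmatrix}$ acts by selecting the memory-qubit columns: right-multiplication by $P_{M-M}^{\mathrm{T}}$ keeps only the first block column of a deformed-code matrix, while right-multiplication by $P_{M-M}$ embeds an $n$-column matrix into the memory block and pads all sticker columns with zeros.

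First I would settle the four ``inspection'' relations. Right-multiplying $H^{M-M}_X$ by $P_{M-M}^{\mathrm{T}}$ kills every column block except the memory one, leaving $H^{M-M}_X P_{M-M}^{\mathrm{T}}=\begin{pmatrix}H_X\\ 0\\ \vdots\\ 0\end{pmatrix}$, whose row space is $\mathrm{rs}H_X$; this is Eq.~(\ref{eq:rel_mea_HX}). The same computation gives $J^{M-M}_X P_{M-M}^{\mathrm{T}}=J_{X,C}$, hence Eq.~(\ref{eq:rel_mea_JXC}), and on the $Z$ side $J_{Z,C}P_{M-M}=\begin{pmatrix}J_{Z,C}&0&\cdots&0\end{pmatrix}=J^{M-M}_Z$, hence Eq.~(\ref{eq:rel_mea_JZC}). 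For Eq.~(\ref{eq:rel_mea_HZ}), $H_Z P_{M-M}=\begin{pmatrix}H_Z&0&\cdots&0\end{pmatrix}$ is precisely the memory-check row block of $H^{M-M}_Z$, i.e. a subset of its rows, so $\mathrm{rs}(H_Z P_{M-M})\subseteq\mathrm{rs}H^{M-M}_Z$.

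The one substantive step is Eq.~(\ref{eq:rel_mea_JZA}). Because the glue code is finely devised for $\Sigma$ (Definition~\ref{def:devised}), each row $v_i$ of $J_{Z,A}$ lies in $(\mathrm{ker}H_G)S$, so I can fix $g_i$ with $H_G g_i^{\mathrm{T}}=0$ and $v_i=g_i S$. Denote by $R_1,\dots,R_{d_R}$ the $d_R$ sticker $Z$-check row blocks of $H^{M-M}_Z$ (the ones below the $H_Z$ block), and by $L_1,\dots,L_{d_R-1}$ the left-sticker column blocks. Left-multiplying by $g_i$: every $H_G^{\mathrm{T}}$-column contribution vanishes since $g_i H_G^{\mathrm{T}}=(H_G g_i^{\mathrm{T}})^{\mathrm{T}}=0$; the block $g_i R_1$ equals $v_i=g_i S$ on the memory columns and $g_i$ on $L_1$; and for $j\ge 2$ the block $g_i R_j$ equals $g_i$ on $L_{j-1}$ and on $L_j$ (following the structure of $\lambda_{d_R}^{\mathrm{T}}$). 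Summing, $g_i\sum_{j=1}^{d_R}R_j$ carries $v_i$ on the memory columns while the $L_1,\dots,L_{d_R-1}$ entries cancel in consecutive pairs, so $g_i\sum_{j=1}^{d_R}R_j=v_i P_{M-M}$. Thus every row of $J_{Z,A}P_{M-M}$ is an $\mathbb{F}_2$-combination of rows of $H^{M-M}_Z$, which gives Eq.~(\ref{eq:rel_mea_JZA}).

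I expect the only real obstacle to be keeping the telescoping bookkeeping in Eq.~(\ref{eq:rel_mea_JZA}) straight: that $R_1$ is the unique sticker row block carrying the pasting matrix $S$, that all $H_G^{\mathrm{T}}$ contributions drop out via $H_G g_i^{\mathrm{T}}=0$, and that the boundary block $R_{d_R}$ (which touches only $L_{d_R-1}$) is cancelled by $R_{d_R-1}$. Everything else is a direct reading of the block matrices.
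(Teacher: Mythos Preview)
Your proposal is correct and follows essentially the same approach as the paper. The paper packages the telescoping step for Eq.~(\ref{eq:rel_mea_JZA}) into a single matrix identity $\beta H^{M-M}_Z = J_{Z,A}P_{M-M}$ with $\beta=(0,J_G,J_G,\ldots,J_G)$ (where $J_G$ stacks your $g_i$'s), but this is exactly your row-by-row sum $g_i\sum_j R_j$; the remaining four relations are dispatched by block inspection in both proofs.
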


\begin{proof}
The matrix $H^{M-M}_XP_{M-M}^\mathrm{T}$ is the first column of $H^{M-M}_X$, in which the first row is $H_X$, and all other rows are zero. Therefore, Eq. (\ref{eq:rel_mea_HX}) holds. The matrix $H_ZP_{M-M}$ is the first row of $H^{M-M}_Z$. Therefore, Eq. (\ref{eq:rel_mea_HZ}) holds. 

Because $J_{X,C} = J^{M-M}_XP_{M-M}^\mathrm{T}$ and $J_{Z,C}P_{M-M} = J^{M-M}_Z$, Eqs. (\ref{eq:rel_mea_JXC}) and (\ref{eq:rel_mea_JZC}) hold. 

According to the definition of devised glue codes, there exists $J_G$ such that $\mathrm{rs}J_G\subseteq \mathrm{ker}H_G$ and $J_{Z,A} = J_GS$. 
By taking 
\begin{eqnarray}
\beta &=& \begin{pmatrix}
0 & J_G & J_G & J_G & \cdots & J_G & J_G & J_G
\end{pmatrix},
\label{eq:beta}
\end{eqnarray}
we have 
\begin{eqnarray}
J_{Z,A}P_{M-M} = \beta H^{M-M}_Z. 
\end{eqnarray}
Therefore, Eq. (\ref{eq:rel_mea_JZA}) holds. 
\end{proof}

\begin{lemma}
For a branch-sticker deformed code, 
\begin{eqnarray}
\mathrm{rs}H_X &=& \mathrm{rs}(H^{M-B}_XP_{M-B}^\mathrm{T}), \label{eq:rel_bran_HX} \\
\mathrm{rs}(H_ZP_{M-B}) &\subseteq& \mathrm{rs}H^{M-B}_Z, \label{eq:rel_bran_HZ} \\
\mathrm{rs}J_X &=& \mathrm{rs}(J^{M-B}_ZP_{M-B}^\mathrm{T}), \label{eq:rel_bran_JX} \\
\mathrm{rs}(J_ZP_{M-B}) &=& \mathrm{rs}J^{M-B}_Z, \label{eq:rel_bran_JZ} \\
\mathrm{rs}(J_{Z,A}P_{M-B}+J_GP_{ob}) &\subseteq& \mathrm{rs}H^{M-B}_Z \label{eq:rel_bran_JZA}.
\end{eqnarray}
Here, $J_G$ satisfies $\mathrm{rs}J_G\subseteq \mathrm{ker}H_G$ and $J_{Z,A} = J_GS$. 
\label{lem:rel_bran}
\end{lemma}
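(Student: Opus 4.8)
The plan is to treat the five relations in two groups. The first four, \eqref{eq:rel_bran_HX}--\eqref{eq:rel_bran_JZ}, I would dispatch exactly as in the first half of the proof of Lemma~\ref{lem:rel_mea}, by reading off the block structure of the matrices in Definition~\ref{def:branch_deformed}. Concretely, $H^{M-B}_X P_{M-B}^\mathrm{T}$ is the memory block-column of $H^{M-B}_X$, whose only nonzero block is $H_X$, so \eqref{eq:rel_bran_HX} holds; $H_Z P_{M-B}$ equals the memory block-row of $H^{M-B}_Z$ and is therefore a sub-block of it, which gives the inclusion \eqref{eq:rel_bran_HZ}; and applying $P_{M-B}^\mathrm{T}$ (resp.\ $P_{M-B}$) to the deformed-code generators selects the memory block-column $\begin{pmatrix} J_{X,A} \\ J_{X,C} \end{pmatrix}$ (resp.\ $\begin{pmatrix} J_{Z,A} \\ J_{Z,C} \end{pmatrix}$), whose row space is $\mathrm{rs}J_X = \mathrm{rs}J_{X,A}\oplus\mathrm{rs}J_{X,C}$ (resp.\ $\mathrm{rs}J_Z$) by the construction of the split logical bases in Appendix~\ref{app:glue_code}; comparing with the embedded memory generators gives \eqref{eq:rel_bran_JX} and \eqref{eq:rel_bran_JZ}.

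All the real content is in \eqref{eq:rel_bran_JZA}, which I would obtain by adapting the $\beta$-multiplier trick from the proof of Lemma~\ref{lem:rel_mea}. By Definition~\ref{def:devised} there is a matrix $J_G$ with $\mathrm{rs}J_G\subseteq\mathrm{ker}H_G$ (equivalently $J_G H_G^\mathrm{T}=0$) and $J_{Z,A}=J_G S$. I would then take $\beta$ to be the branch-sticker analogue of \eqref{eq:beta}: zero on the memory-check block and equal to $J_G$ on every one of the sticker check-blocks coupled through $E_{n_G}$, and compute $\beta H^{M-B}_Z$ from the explicit form in Definition~\ref{def:branch_deformed}. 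The memory block-column returns $J_G S = J_{Z,A}$, producing the term $J_{Z,A}P_{M-B}$; each $H_G^\mathrm{T}$ block-column is annihilated since $J_G H_G^\mathrm{T}=0$; and along the chain of $E_{n_G}$ couplings the contributions between adjacent $n_G$-blocks cancel pairwise, because the constant value $J_G$ makes consecutive blocks identical. The only $n_G$-block where the telescoping fails to cancel is the terminal one---which is precisely the open boundary selected by $P_{ob}$---so exactly one uncancelled $J_G$ survives there, giving
\[
\beta H^{M-B}_Z \;=\; J_{Z,A}P_{M-B} + J_G P_{ob},
\]
and hence the row-space inclusion \eqref{eq:rel_bran_JZA}.

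The step I expect to be the main obstacle is pinning down this boundary behaviour precisely. In the measurement sticker the full repetition code $\lambda_{d_R}$ is used and the $E_{n_G}$-chain ``closes up'', so the would-be surviving term is capped by an extra check and cancels, leaving only the memory term (this is why \eqref{eq:rel_mea_JZA} has no open-boundary contribution); in the branch sticker one deletes the $d_R$-th bit, working with $\lambda_{d_R;\bullet,1:d_R-1}$, which leaves the chain open at one end, and it is that open end that produces the extra $J_G P_{ob}$ term rather than a cancellation. Making this rigorous amounts to matching the block indices of $H^{M-B}_Z$ against those of $H^{M-M}_Z$---using the observation recorded in Appendix~\ref{app:stickers} that $H^B_Z$ is obtained from $H^M_Z$ by deleting the last column block and the last row block---and verifying that the deleted row block is exactly the one that would otherwise have consumed the dangling $J_G$. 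Once that bookkeeping is set up, the remaining verification is a routine matrix computation using only $J_G H_G^\mathrm{T}=0$ and $J_{Z,A}=J_G S$.
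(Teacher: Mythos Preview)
Your proposal is correct and follows essentially the same route as the paper's proof: the first four relations are handled by the block-structure argument inherited from Lemma~\ref{lem:rel_mea} together with $\mathrm{rs}\begin{pmatrix} J_{X,A} \\ J_{X,C} \end{pmatrix}=\mathrm{rs}J_X$ and the analogous $Z$ identity, and \eqref{eq:rel_bran_JZA} is obtained by taking the $\beta$ of \eqref{eq:beta} with its last entry removed and checking $\beta H^{M-B}_Z = J_{Z,A}P_{M-B}+J_GP_{ob}$. Your telescoping description of why exactly one $J_G$ survives at the open boundary is precisely the computation the paper leaves implicit.
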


\begin{proof}
Eqs. (\ref{eq:rel_bran_HX}), (\ref{eq:rel_bran_HZ}), (\ref{eq:rel_bran_JX}) and (\ref{eq:rel_bran_JZ}) are proved as the same as Eqs. (\ref{eq:rel_mea_HX}), (\ref{eq:rel_mea_HZ}), (\ref{eq:rel_mea_JXC}) and (\ref{eq:rel_mea_JZC}) by noticing that 
\begin{eqnarray}
\mathrm{rs}\begin{pmatrix} J_{X,A} \\ J_{X,C} \end{pmatrix} &=& \mathrm{rs}J_X, \\
\mathrm{rs}\begin{pmatrix} J_{Z,A} \\ J_{Z,C} \end{pmatrix} &=& \mathrm{rs}J_Z.
\end{eqnarray}

By taking $\beta$ in the form of Eq. (\ref{eq:beta}) with the last entry removed, 
we have 
\begin{eqnarray}
J_{Z,A}P_{M-B}+J_GP_{ob} = \beta H^{M-B}_Z. 
\end{eqnarray}
Therefore, Eq. (\ref{eq:rel_bran_JZA}) holds. 
\end{proof}

\section{Theorem of the generalised lattice surgery}
\label{app:lattice_surgery}

\begin{theorem}
Let $\mathcal{S}$ and $\mathcal{S}_{dc}$ be stabilisers of the memory and deformed code, respectively. Let $\mathcal{X}$ and $\mathcal{X}_{dc}$ ($\mathcal{Z}$ and $\mathcal{Z}_{dc}$) be groups of $X$ ($Z$) logical operators of the memory and deformed code, respectively. Let the distance of the memory be $d$. The following statements hold, 
\begin{itemize}
\item[i)] For each $X$ stabiliser operator of the memory (deformed code) $g\in \mathcal{S}$ ($g_{dc}\in \mathcal{S}_{dc}$), there exists an $X$ stabiliser operator of the deformed code (memory) $g_{dc}\in \mathcal{S}_{dc}$ ($g\in \mathcal{S}$) such that the support of $gg_{dc}$ is on the sticker; 
\item[ii)] For each $Z$ stabiliser operator of the memory $g\in \mathcal{S}$, there exists a $Z$ stabiliser operator of the deformed code $g_{dc}\in \mathcal{S}_{dc}$ such that $g = g_{dc}$. 
\end{itemize}
If the deformed code is generated by a measurement sticker, 
\begin{itemize}
\item[iii)] For each $X$ logical operator of the memory $\tau\in \mathcal{X}$ that commutes with operators in $\langle\Sigma\rangle$, there exists an $X$ logical operator of the deformed code $\tau_{dc}\in \mathcal{X}_{dc}$ such that the support of $\tau\tau_{dc}$ is on the sticker; 
\item[iv)] For each $Z$ logical operator of the memory $\tau\in \mathcal{Z}$, there exists a $Z$ logical operator of the deformed code $\tau_{dc}\in \mathcal{Z}_{dc}$ and a $Z$ stabiliser operator of the deformed code $g_{dc}\in \mathcal{S}_{dc}$ such that $\tau = \tau_{dc}g_{dc}$; 
\item[v)] For each $Z$ logical operator of the memory $\tau\in \langle\Sigma\rangle$, there exists a $Z$ stabiliser operator of the deformed code $g_{dc}\in \mathcal{S}_{dc}$ such that $\tau = g_{dc}$; 
\item[vi)] The distance of the deformed code is $d_{dc}\geq \min\{d/\vert S\vert,d_R\}$. 
\end{itemize}
If the deformed code is generated by a branch sticker, 
\begin{itemize}
\item[iii')] For each $X$ logical operator of the memory $\tau\in \mathcal{X}$, there exists an $X$ logical operator of the deformed code $\tau_{dc}\in \mathcal{X}_{dc}$ such that the support of $\tau\tau_{dc}$ is on the sticker; 
\item[iv')] For each $Z$ logical operator of the memory $\tau\in \mathcal{Z}$, there exists a $Z$ logical operator of the deformed code $\tau_{dc}\in \mathcal{Z}_{dc}$ such that $\tau = \tau_{dc}$; 
\item[v')] For each $Z$ logical operator of the memory $\tau\in \langle\Sigma\rangle$, there exists a $Z$ stabiliser operator of the deformed code $g_{dc}\in \mathcal{S}_{dc}$ such that the support of $\tau g_{dc}$ is on the open boundary of the branch sticker; 
\item[vi')] The distance of the deformed code is $d_{dc}\geq d/\vert S\vert$. 
\end{itemize}
\label{the:GLS}
\end{theorem}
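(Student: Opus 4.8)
The plan is to read Theorem~\ref{the:GLS} as an assembly of the structural facts already established, translating every claim about Pauli operators into a claim about $\mathbb{F}_2$-vectors and then invoking the row-space relations of Lemmas~\ref{lem:rel_mea} and~\ref{lem:rel_bran} together with the distance bounds of Lemmas~\ref{lem:dis_mea} and~\ref{lem:dis_bran}. I identify an $X$- (resp.\ $Z$-) operator on the deformed code with its support vector, split into blocks following the columns of $H^{M-M}_X$ (resp.\ $H^{M-M}_Z$) or of $H^{M-B}_X$ (resp.\ $H^{M-B}_Z$), so that ``stabiliser'', ``logical'' and ``gauge'' mean membership in the corresponding row space, an equation between operators means equality of vectors modulo the opposite-type stabiliser (and, for subsystem codes, gauge) group, ``support on the sticker'' means the memory block vanishes, and ``support on the open boundary'' means every block except the open-boundary one vanishes. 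With this dictionary, (vi) and (vi') are immediate restatements of Lemmas~\ref{lem:dis_mea} and~\ref{lem:dis_bran}.

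For the stabiliser correspondences (i)--(ii): for a $Z$-stabiliser $Z(h)$ of the memory, the zero-padded vector $hP_{M-M}$ lies in $\mathrm{rs}(H_ZP_{M-M})\subseteq\mathrm{rs}H^{M-M}_Z$ by Eq.~(\ref{eq:rel_mea_HZ}), which is exactly (ii); and the equality $\mathrm{rs}H_X=\mathrm{rs}(H^{M-M}_XP_{M-M}^{\mathrm{T}})$ of Eq.~(\ref{eq:rel_mea_HX}) says that the memory restriction of the deformed $X$-stabiliser group is precisely $\mathrm{rs}H_X$ -- reading it one way produces, for any memory $X$-stabiliser, a deformed one with the same memory restriction (so their product lives on the sticker), and reading it the other way gives the converse direction, so (i) holds. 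The branch-sticker case is identical with $P_{M-B}$, $H^{M-B}_X$, $H^{M-B}_Z$ and Eqs.~(\ref{eq:rel_bran_HX}) and~(\ref{eq:rel_bran_HZ}).

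For the logical-operator statements I use the $A/C$ decomposition and the orthogonality relations $J_{X,A}J_{Z,A}^{\mathrm{T}}=E_q$, $J_{X,C}J_{Z,A}^{\mathrm{T}}=0$ fixed before Lemma~\ref{lem:JXC}. Writing an $X$-logical vector of the memory as $aJ_{X,A}+cJ_{X,C}$ modulo stabilisers/gauge operators, its commutator with $\langle\Sigma\rangle=Z(\mathrm{rs}J_{Z,A})$ equals $a$; hence $\tau$ commutes with all of $\langle\Sigma\rangle$ iff $a=0$, i.e.\ iff its class lies in $\mathrm{rs}J_{X,C}$, and by Eq.~(\ref{eq:rel_mea_JXC}) these are exactly the memory restrictions of the deformed $X$-logicals, giving (iii). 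Splitting a $Z$-logical vector of the memory as $w_A+w_C$ with $w_A\in\mathrm{rs}J_{Z,A}$ and $w_C\in\mathrm{rs}J_{Z,C}$, the $w_C$-part lifts (zero-padded) to a deformed $Z$-logical by Eq.~(\ref{eq:rel_mea_JZC}) while the $w_A$-part becomes a deformed $Z$-stabiliser by Eq.~(\ref{eq:rel_mea_JZA}), which is (iv); specialising to $\tau\in\langle\Sigma\rangle$ (so $w_C=0$) gives (v). For the branch sticker, Eqs.~(\ref{eq:rel_bran_JX}) and~(\ref{eq:rel_bran_JZ}) show the whole memory $X$- and $Z$-logical groups lift with unchanged memory restriction, giving (iii') and (iv') with $\tau=\tau_{dc}$; and (v') follows from Eq.~(\ref{eq:rel_bran_JZA}), which for $\sigma_i\in\Sigma$ gives a deformed $Z$-stabiliser equal to $Z(v_i)$ on the memory block and to $Z(J_{G;i,\bullet})$ on the open-boundary block, so multiplying $\tau=Z(v_i)$ by it leaves support only on the open boundary.

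The only step that needs more than bookkeeping is the \emph{completeness} half of the logical statements -- that the operators exhibited above are \emph{all} of the deformed code's logical operators. For this I would combine the validity identities of the Proposition following Definition~\ref{def:branch_deformed} (which already show that the explicit $J^{M-M}$ and $J^{M-B}$ have full symplectic rank, hence that the deformed code has at least $k-q$, resp.\ $k$, logical qubits) with a dimension count: computing $\mathrm{rank}H^{M-M}_X$, $\mathrm{rank}H^{M-M}_Z$ and the gauge-qubit number from the block forms in Definitions~\ref{def:measurement_deformed} and~\ref{def:branch_deformed} -- using compatibility Eq.~(\ref{eq:glue_code}) and Lemma~\ref{lem:JXC} to resolve the off-diagonal coupling blocks -- should pin the logical-qubit number to exactly $k-q$ (resp.\ $k$). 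I expect this rank computation, together with keeping the ``modulo stabiliser and gauge'' equivalences mutually consistent across the two codes (which rests on (i)--(ii) and their gauge analogues), to be the main obstacle; once it is in place, items (iii)--(v) and (iii')--(v') read off directly from Eqs.~(\ref{eq:rel_mea_JXC}), (\ref{eq:rel_mea_JZC}), (\ref{eq:rel_mea_JZA}), (\ref{eq:rel_bran_JX}), (\ref{eq:rel_bran_JZ}) and~(\ref{eq:rel_bran_JZA}) as above, and (vi)/(vi') are Lemmas~\ref{lem:dis_mea}/\ref{lem:dis_bran}.
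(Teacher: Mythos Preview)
Your proposal is essentially identical to the paper's proof: the paper also reads each clause directly off the row-space relations of Lemmas~\ref{lem:rel_mea} and~\ref{lem:rel_bran} and the distance Lemmas~\ref{lem:dis_mea} and~\ref{lem:dis_bran}, with exactly the equation-to-statement dictionary you describe (e.g.\ (i)--(ii) from Eqs.~(\ref{eq:rel_mea_HX})--(\ref{eq:rel_mea_HZ}) and their branch analogues, (iii) from Eq.~(\ref{eq:rel_mea_JXC}) together with the observation that the memory $X$-logicals commuting with $\langle\Sigma\rangle$ are $X(\mathrm{rs}J_{X,C})$, and so on).

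The one place you diverge is your final paragraph, and there you are making extra work for yourself. The completeness/rank computation you flag as ``the main obstacle'' is not needed, and the paper does not perform it. The paper works throughout in the subsystem-code framework: the logical-operator groups $\mathcal{X}_{dc}$ and $\mathcal{Z}_{dc}$ are \emph{declared} via the explicit generator matrices $J^{M-M}_X$, $J^{M-M}_Z$ (resp.\ $J^{M-B}_X$, $J^{M-B}_Z$) given in the Definitions following Definitions~\ref{def:measurement_deformed} and~\ref{def:branch_deformed}, and the subsequent Proposition checks the required commutation identities. Any further independent vectors in $\mathrm{ker}H^{M-\alpha}_X$ beyond those row spaces are, by the conventions of Appendix~\ref{app:preliminaries}, absorbed as gauge operators rather than logical ones. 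Hence the existential claims ``there exists $\tau_{dc}\in\mathcal{X}_{dc}$'' etc.\ are automatically statements about the declared row spaces, and your bookkeeping argument already suffices.
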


\begin{proof}
We have proved every piece of the theorem in Lemmas \ref{lem:dis_mea}, \ref{lem:dis_bran}, \ref{lem:rel_mea} and \ref{lem:rel_bran}. Here, we only need to relate these lemmas to statements in the theorem. 

Let $\mathcal{S}^X$ and $\mathcal{S}^Z$ ($\mathcal{S}_{dc}^X$ and $\mathcal{S}_{dc}^Z$) be the sets of $X$ and $Z$ stabiliser operators of the memory (deformed code), respectively. These operator sets are related to check and generator matrices through $\mathcal{S}^X = X(\mathrm{rs}H_X)$, $\mathcal{S}^Z = Z(\mathrm{rs}H_Z)$, $\mathcal{S}_{dc}^X = X(\mathrm{rs}H^{M-\alpha}_X)$ and $\mathcal{S}_{dc}^Z = Z(\mathrm{rs}H^{M-\alpha}_Z)$, where $\alpha = M,D$. Therefore, statements i) and ii) are consequences of Eqs. (\ref{eq:rel_mea_HX}), (\ref{eq:rel_mea_HZ}), (\ref{eq:rel_bran_HX}) and (\ref{eq:rel_bran_HZ}). 

The set of memory $X$ logical operators that commute with operators in $\Sigma$ is $X(\mathrm{rs}J_{X,C})$. Therefore, the statement iii) is a consequence of Eq. (\ref{eq:rel_mea_JXC}). 

According to Eqs. (\ref{eq:rel_mea_JZC}) and (\ref{eq:rel_mea_JZA}), 
\begin{eqnarray}
\mathrm{rs}(J_ZP_{M-M}) &\subseteq & \mathrm{rs}\begin{pmatrix} H^{M-M}_Z \\ J^{M-M}_Z \end{pmatrix}.
\end{eqnarray}
Because $\mathcal{Z} = Z(\mathrm{rs}J_Z)$, $\mathcal{S}^Z_{dc} = Z(\mathrm{rs}H^{M-M}_Z)$ and $\mathcal{Z}_{dc} = Z(\mathrm{rs}J^{M-M}_Z)$, the statement iv) holds. 

Because $\langle\Sigma\rangle = Z(\mathrm{rs}J_{Z,A})$, the statement v) holds according to Eq. (\ref{eq:rel_mea_JZA}). 

Similarly, statements iii') and iv') corresponding to Eqs. (\ref{eq:rel_bran_JX}) and (\ref{eq:rel_bran_JZ}), respectively. The statement v') corresponds to Eq. (\ref{eq:rel_bran_JZA}). 

Statements vi) and vi') correspond to Lemmas \ref{lem:dis_mea} and \ref{lem:dis_bran}, respectively. 
\end{proof}

\section{Theorem of sticking}
\label{app:sticking}

\begin{theorem}
Let $(H_X,H_Z)$ be the check matrices of the memory. For an arbitrary set of $Z$ logical operators $\Sigma$, there exists a glue code $H_G\in\mathbb{F}_2^{r_G\times n_G}$ that satisfies i) the pasting matrices $S$ and $T$ satisfy $w_{max}(S) = w_{max}(T) = 1$; 
\begin{itemize}
\item[ii)] the glue code is coarsely devised for $\Sigma$ and 
\begin{eqnarray}
n_G &=& n_N, \label{eq:nG_g} \\
r_G &\leq& w_{max}(H_X)n_N, \label{eq:rG_g} \\
w_{max}(H_G) &\leq& w_{max}(H_X); \label{eq:wmax_g}
\end{eqnarray}
\item[ii')] the glue code is finely devised for $\Sigma$ and 
\begin{eqnarray}
n_G &\leq& n_N+2(k_N-q)(q+1), \label{eq:nG_f} \\
r_G &\leq& w_{max}(H_X)n_N+2(k_N-q)(q+1), \label{eq:rG_f} \\
w_{max}(H_G) &\leq& \max\{w_{max}(H_X)+1,3\}. \label{eq:wmax_f}
\end{eqnarray}
\end{itemize}
Here, $n_N$ is the number of qubits on the support of $\Sigma$, and $k_N$ is the number of independent $Z$ logical operators on the support of $\Sigma$, and $q$ is the number of independent $Z$ logical operators in $\Sigma$. 
\label{the:sticking}
\end{theorem}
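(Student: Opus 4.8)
\emph{Overall plan.} I would prove the theorem by exhibiting the two glue codes explicitly and checking, for each, the four ingredients separately: (a) that the pasting matrices $S,T$ can be taken with $w_{max}(S)=w_{max}(T)=1$ and satisfy $H_XS^\mathrm{T}=TH_G$ of Definition \ref{def:glue_code}; (b) the devised property of Appendix \ref{app:glue_code}; (c) the stated bounds on $n_G,r_G$; (d) the bound on $w_{max}(H_G)$. The coarse case is a direct construction with no extra checks; the fine case is obtained from the coarse one by appending a controlled number of low-weight parity checks that remove the redundant logical classes.

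\emph{Coarsely devised code.} First permute the $n$ qubits so that the $n_N$ qubits of $\mathcal{Q}(\Sigma)$ come first, and the $X$-checks so that the $r_N$ checks touching $\mathcal{Q}(\Sigma)$ come first; then $H_X=\begin{pmatrix}H_N&A_X\\0&B_X\end{pmatrix}$ as in Eq. (\ref{eq:HXblock}). I take $H_G=H_N$ (so $n_G=n_N$, $r_G=r_N$) with $S=\begin{pmatrix}E_{n_N}&0\end{pmatrix}$ and $T=\begin{pmatrix}E_{r_N}\\0\end{pmatrix}$. Then $w_{max}(S)=w_{max}(T)=1$; $H_XS^\mathrm{T}=TH_G=\begin{pmatrix}H_N\\0\end{pmatrix}$, exactly as in the worked example following Eq. (\ref{eq:HXblock}); $w_{max}(H_G)=w_{max}(H_N)\le w_{max}(H_X)$ since $H_N$ is a submatrix of $H_X$; and $r_N\le w_{max}(H_X)\,n_N$ because each of the $n_N$ qubits lies in at most $w_{max}(H_X)$ checks. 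For the devised property, every $\sigma_i\in\Sigma$ has a representative $v_i$ supported in $\mathcal{Q}(\Sigma)$; writing $v_i=\tilde v_iS$, the identity $H_Xv_i^\mathrm{T}=0$ together with the block form forces $H_N\tilde v_i^\mathrm{T}=0$, so $\tilde v_i\in\mathrm{ker}H_G$ and $v_i\in(\mathrm{ker}H_G)S$. Hence $\mathrm{span}(v_1,\dots,v_q)\subseteq(\mathrm{ker}H_G)S$, giving part ii).

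\emph{Finely devised code.} Start from the coarse code $H_G=H_N$. By Lemma \ref{lem:glue_code}, $Z[(\mathrm{ker}H_N)S]$ modulo $Z$-stabilizers and $Z$-gauge operators is exactly $\mathcal{Z}_N$, which exceeds $\langle\Sigma\rangle$ by the $k_N-q$ redundant classes. Using the bookkeeping behind Algorithm \ref{alg:dressing}, fix a basis $\tilde v_1,\dots,\tilde v_q,\tilde w_1,\dots,\tilde w_{k_N-q},\tilde s_1,\tilde s_2,\dots$ of $\mathrm{ker}H_N$ in which $Z(\tilde w_jS)$ are independent redundant logicals and $Z(\tilde s_\ell S)$ are stabilizer/gauge operators, along with the complementary logical bases $J_{X,C},J_{Z,C}$ of Appendix \ref{app:glue_code}. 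For each $j$ I would produce a check $a_j\in\mathbb{F}_2^{n_N}$ with $a_j\tilde v_i^\mathrm{T}=0$ for all $i$, $a_j\tilde s_\ell^\mathrm{T}=0$ for all $\ell$, and $(a_j\tilde w_{j'}^\mathrm{T})$ an invertible $(k_N-q)\times(k_N-q)$ matrix; these constraints are consistent, one admissible (generally high-weight) choice being the restriction to $\mathcal{Q}(\Sigma)$ of the $X$-logical row of $J_{X,C}$ dual to $Z(\tilde w_jS)$ — valid because $J_{X,C}J_{Z,A}^\mathrm{T}=0$, $J_{X,C}J_{Z,C}^\mathrm{T}=E$, and $X$-logicals commute with $Z$-stabilizers and $Z$-gauge operators. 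Appending these checks to $H_N$ yields $(\mathrm{ker}H_G)S=\mathrm{ker}H_N\cap\bigcap_j\mathrm{ker}a_j=\mathrm{span}(\tilde v_1,\dots,\tilde v_q,\tilde s_1,\tilde s_2,\dots)S$, i.e. the code is finely devised. To keep the LDPC property, each high-weight $a_j$ is realized not directly but as a parity chain of weight-$3$ checks on $\,\vert a_j\vert-3\,$ fresh ancilla bits (each ancilla shared by two chain checks, each qubit of $\mathcal{Q}(\Sigma)$ linked to the chain through a single extra edge, using a shared ancilla copy if a qubit is used by several detectors). The ancilla bits are sent to $0$ by $S$ and $T$ (extend them by zero rows and columns), so $H_XS^\mathrm{T}=TH_G$ and $w_{max}(S)=w_{max}(T)=1$ are preserved, $w_{max}(H_G)\le\max\{w_{max}(H_X)+1,3\}$, $n_G=n_N+\sum_j(\vert a_j\vert-3)$, and $r_G=r_N+\sum_j(\vert a_j\vert-2)\le w_{max}(H_X)n_N+\sum_j(\vert a_j\vert-2)$.

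\emph{Main obstacle.} Everything above reduces the fine case to a single quantitative claim: the $k_N-q$ detector checks can be chosen with $\vert a_j\vert=O(q)$, so that $\sum_j(\vert a_j\vert-2)\le 2(k_N-q)(q+1)$. The naive choice (the truncated dual $X$-logical, or any representative of the corresponding coset of $\mathrm{rs}H_N$) carries no useful weight bound, and this is the technical heart of the proof. I expect the resolution — and the content of the explicit algorithm promised in Appendix \ref{app:sticking} — to be an iterative construction: process the redundant classes one at a time, at each step reducing the current detector modulo $\mathrm{rs}H_N$ and modulo the ancilla/check structure already introduced for earlier classes, exploiting that commuting with $\langle\Sigma\rangle$ imposes only $q$ linear conditions, so that an admissible detector supported on a bounded number of qubits (roughly one ``pivot'' qubit per generator of $\Sigma$, plus a constant to fix the pairing with $\tilde w_j$) always exists. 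Checking that this iteration keeps every check of weight at most $\max\{w_{max}(H_X)+1,3\}$ and keeps the total added-bit count within $2(k_N-q)(q+1)$ is exactly the bound analysis I would need to carry out in detail; the coarse case needs no such analysis since no checks are added.
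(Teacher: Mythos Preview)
Your coarse case is exactly the paper's ``naked glue code'' construction, and your fine case correctly identifies the strategy: append $k_N-q$ checks $a_j$ (the paper's ``dressing matrix'' $D$, with $a_j$ its rows) to $H_N$ to kill the redundant logical classes, then sparsify. Your choice $a_j=(J_{X,C})_jS_N^\mathrm{T}$ is essentially the paper's $D=(U^\mathrm{r})^\mathrm{T}J_{X,C}S_N^\mathrm{T}$, and your parity-chain sparsification is close in spirit to the paper's bit/check duplication (Algorithms~\ref{alg:bit}, \ref{alg:check}, \ref{alg:LDPC}).

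The genuine gap is exactly the one you flag, but your guess at its resolution is off. The paper does \emph{not} use an iterative reduction of detectors modulo $\mathrm{rs}H_N$ and earlier ancillas. Instead it controls the weight of $D$ directly by a one-shot choice of bases in standard form: write $J_X=RJ_X^S\pi_1$ with $J_X^S=(E_k\;0\;J_X')$, choose $(\bar J_Z)_{1:q,\bullet}=(E_q\;P)\pi_2R$ so that $J_{X,C}=(P^\mathrm{T}\;E_{k-q})\pi_2J_X^S\pi_1$, and choose the redundant-class basis so that $U=(E_{k_N-q}\;Q)\pi_3$, hence $(U^\mathrm{r})^\mathrm{T}=(E_{k_N-q}\;0)\pi_3$. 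Then $D=(U^\mathrm{r})^\mathrm{T}(P^\mathrm{T}\;E_{k-q})\pi_2S_K^\mathrm{T}$, and since each row of $(P^\mathrm{T}\;E_{k-q})$ has at most $q+1$ nonzeros and $(U^\mathrm{r})^\mathrm{T}$ selects $k_N-q$ of them, the total number of nonzeros in $D$ is at most $(k_N-q)(q+1)$. The point you missed is that the weight of the ``naive'' truncated dual $X$-logical \emph{is} bounded---not in an arbitrary basis, but in the standard form, where the projection $S_K$ onto the first $k$ coordinates captures everything and each row of $J_{X,C}$ there has at most $q+1$ entries. With this bound in hand, bit duplication adds $\sum_u\max\{0,\vert\mathcal{C}_D(\mathcal{E}_D,u)\vert-1\}\le(k_N-q)(q+1)$ bits/checks and check duplication likewise, giving the factor $2(k_N-q)(q+1)$ in Eqs.~(\ref{eq:nG_f})--(\ref{eq:rG_f}).

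A second, smaller point: your sparsification handles high-weight \emph{checks} but only parenthetically handles bits hit by many detectors. The paper treats these symmetrically via bit duplication (Algorithm~\ref{alg:bit}) before check duplication, which is what guarantees $w_{max}(H_G)\le\max\{w_{max}(H_X)+1,3\}$ for \emph{both} rows and columns; you should make that step explicit rather than rely on ``a shared ancilla copy''.
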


When $w_{max}(S) = 1$, we always have $\vert S\vert = 1$. 

\subsection{Graph operations}

\begin{figure}
\begin{minipage}{\linewidth}
\begin{algorithm}[H]
{\small
\begin{algorithmic}[1]
\caption{{\small BitDuplication($\mathcal{B},\mathcal{C},\mathcal{E},u,\mathcal{C}_u$)}}
\label{alg:bit}
\Statex
\State $\mathcal{B}'\leftarrow\mathcal{B}\cup\{u'\}$
\State $\mathcal{C}'\leftarrow\mathcal{C}\cup\{a'\}$
\State $\mathcal{E}'\leftarrow\mathcal{E}\cup\{(u,a'),(u',a')\}$
\For{$a\in\mathcal{C}_u$}
\State $\mathcal{E}'\leftarrow(\mathcal{E}'-\{u,a\})\cup\{(u',a)\}$
\EndFor
\State Output the Tanner graph $(\mathcal{B}',\mathcal{C}',\mathcal{E}')$. 
\end{algorithmic}
}
\end{algorithm}
\end{minipage}
\end{figure}

\begin{figure}
\begin{minipage}{\linewidth}
\begin{algorithm}[H]
{\small
\begin{algorithmic}[1]
\caption{{\small CheckDuplication($\mathcal{B},\mathcal{C},\mathcal{E},a,\mathcal{B}_a$)}}
\label{alg:check}
\Statex
\State $\mathcal{B}'\leftarrow\mathcal{B}\cup\{u'\}$
\State $\mathcal{C}'\leftarrow\mathcal{C}\cup\{a'\}$
\State $\mathcal{E}'\leftarrow\mathcal{E}\cup\{(u',a),(u',a')\}$
\For{$u\in\mathcal{B}_a$}
\State $\mathcal{E}'\leftarrow(\mathcal{E}'-\{u,a\})\cup\{(u,a')\}$
\EndFor
\State Output the Tanner graph $(\mathcal{B}',\mathcal{C}',\mathcal{E}')$. 
\end{algorithmic}
}
\end{algorithm}
\end{minipage}
\end{figure}

\begin{figure}[tbp]
\centering
\includegraphics[width=\linewidth]{\figures/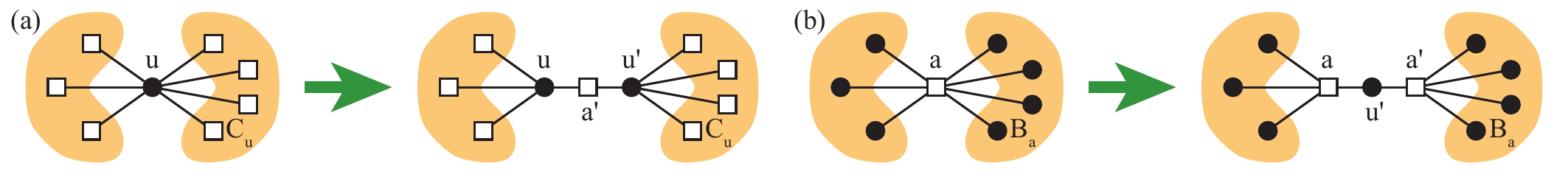}
\caption{
(a) Bit duplication. (b) Check duplication. 
}
\label{fig:duplication}
\end{figure}

To construct a finely devised glue code, we use two operations on Tanner graphs, bit duplication and check duplication. These two operations are given in Algorithms \ref{alg:bit} and \ref{alg:check}, respectively, and illustrated in Fig. \ref{fig:duplication}. In the bit duplication operation, we duplicate the bit $u$ on the Tanner graph $(\mathcal{B},\mathcal{C},\mathcal{E})$ by adding a new bit $u'$ and a check $a'$; the check $a'$ is coupled to both $u$ and $u'$; and a subset of checks that are adjacent to $u$, denoted by $\mathcal{C}_u\subseteq\mathcal{C}(\mathcal{E},u)$, are decoupled from $u$ and coupled to $u'$. Similarly, in the check duplication operation, we duplicate the check $a$ on the Tanner graph $(\mathcal{B},\mathcal{C},\mathcal{E})$ by adding a new check $a'$ and a bit $u'$; the bit $u'$ is coupled to both $a$ and $a'$; and a subset of bits that are adjacent to $a$, denoted by $\mathcal{B}_a\subseteq\mathcal{B}(\mathcal{E},a)$, are decoupled from $a$ and coupled to $a'$. These two operations have properties summarised in the following lemma. 

\begin{lemma}
Let $(\mathcal{B}',\mathcal{C}',\mathcal{E}')$ be the Tanner graph generated by applying the bit duplication or check duplication on $(\mathcal{B},\mathcal{C},\mathcal{E})$. Let $v:\mathcal{B}'\rightarrow\mathbb{F}_2$. The map $v$ is a codeword of $(\mathcal{B}',\mathcal{C}',\mathcal{E}')$ if and only if the following two conditions are satisfied, 
\begin{itemize}
\item[i)] $v$ on the domain $\mathcal{B}$ is a codeword of $(\mathcal{B},\mathcal{C},\mathcal{E})$; 
\item[ii)] For a bit duplication, $v(u') = v(u)$; 
\item[ii')] For a check duplication, $v(u') = \sum_{u\in\mathcal{B}_a}v(u)$. 
\end{itemize}
\label{lem:duplication}
\end{lemma}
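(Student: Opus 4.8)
The plan is to prove the biconditional by reading off, directly from the edge sets $\mathcal{E}'$ produced in Algorithms~\ref{alg:bit} and~\ref{alg:check}, precisely which parity-check constraints are changed by the operation, and then verifying that the new constraints are, given the others, equivalent to condition (i) together with (ii) or (ii'). I would treat bit duplication and check duplication separately; they are formally dual, so the second case is essentially a transcription of the first. Throughout I use that $v\colon\mathcal{B}''\to\mathbb{F}_2$ is a codeword of a Tanner graph precisely when $\sum_{w\in\mathcal{B}(\mathcal{E}'',a)}v(w)=0$ for every check $a$.

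\textbf{Bit duplication.} First I would observe that among the checks of $(\mathcal{B}',\mathcal{C}',\mathcal{E}')$, only the new check $a'$ and the checks in $\mathcal{C}_u$ have neighbourhoods differing from their counterparts in $(\mathcal{B},\mathcal{C},\mathcal{E})$, that the new bit $u'$ appears only in the neighbourhoods of $a'$ and of the checks in $\mathcal{C}_u$, and that every other check of $\mathcal{C}'$ keeps its neighbourhood, which still consists entirely of bits in $\mathcal{B}$. The constraint imposed by $a'$ is $v(u')+v(u)=0$, i.e.\ $v(u')=v(u)$ over $\mathbb{F}_2$, which is exactly condition (ii). Granting this, the constraint at a check $a\in\mathcal{C}_u$, namely $\sum_{w\in\mathcal{B}(\mathcal{E},a)\setminus\{u\}}v(w)+v(u')=0$, is equivalent to $\sum_{w\in\mathcal{B}(\mathcal{E},a)}v(w)=0$, the original constraint at $a$. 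Collecting these equivalences gives that $v$ satisfies every check of $\mathcal{C}'$ if and only if $v(u')=v(u)$ and $v|_{\mathcal{B}}$ satisfies every check of $\mathcal{C}$, which is the stated characterisation.

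\textbf{Check duplication.} I would dualise the argument: here only $a$ and the new check $a'$ have altered neighbourhoods, with $a$ now incident to $(\mathcal{B}(\mathcal{E},a)\setminus\mathcal{B}_a)\cup\{u'\}$ and $a'$ incident to $\mathcal{B}_a\cup\{u'\}$, while every other check is untouched and sees no new bit. The constraint at $a'$ reads $v(u')+\sum_{w\in\mathcal{B}_a}v(w)=0$, i.e.\ $v(u')=\sum_{w\in\mathcal{B}_a}v(w)$, which is condition (ii'); and under this the constraint at $a$, namely $\sum_{w\in\mathcal{B}(\mathcal{E},a)\setminus\mathcal{B}_a}v(w)+v(u')=0$, reduces to $\sum_{w\in\mathcal{B}(\mathcal{E},a)}v(w)=0$, the original constraint. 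Assembling the equivalences again yields the biconditional. The only point demanding care is making the case split exhaustive — confirming from the explicit edge lists in the algorithms that no check other than those named is modified and that the newly added bit has exactly the two stated incidences — so I do not anticipate any substantive obstacle beyond this bookkeeping.
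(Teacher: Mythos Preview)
Your proposal is correct and follows essentially the same approach as the paper's proof: in each case you enumerate the checks whose neighbourhoods change, read off condition (ii) or (ii') from the new check $a'$, and then reduce the remaining modified constraints back to the original ones. The paper's argument is structured identically, just with slightly different notation for the three groups of check equations.
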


\begin{proof}
{\bf Bit duplication.} The map $v$ is a codeword of $(\mathcal{B}',\mathcal{C}',\mathcal{E}')$ if and only if the following conditions are satisfied: i) $\sum_{u''\in\mathcal{B}(\mathcal{E},a)}v(u'') = 0$ for all $a\in\mathcal{C}-\mathcal{C}_u$; ii) $v(u') + \sum_{u''\in\mathcal{B}(\mathcal{E},a)-\{u\}}v(u'') = 0$ for all $a\in\mathcal{C}_u$; and iii) $v(u) = v(u')$. Under the condition iii), the condition ii) is satisfied if and only if $\sum_{u''\in\mathcal{B}(\mathcal{E},a)}v(u'') = 0$ for all $a\in\mathcal{C}_u$. Then, we can rephrase conditions i) and ii) as $\sum_{u''\in\mathcal{B}(\mathcal{E},a)}v(u'') = 0$ for all $a\in\mathcal{C}$, i.e. $v$ on the domain $\mathcal{B}$ is a codeword of $(\mathcal{B},\mathcal{C},\mathcal{E})$. 

{\bf Check duplication.} Similarly, the map $v$ is a codeword of $(\mathcal{B}',\mathcal{C}',\mathcal{E}')$ if and only if the following conditions are satisfied: i) $\sum_{u\in\mathcal{B}(\mathcal{E},a'')}v(u) = 0$ for all $a''\in\mathcal{C}-\{a\}$; ii) $v(u') + \sum_{u\in\mathcal{B}(\mathcal{E},a)-\mathcal{B}_a}v(u) = 0$; and iii) $v(u') + \sum_{u\in\mathcal{B}_a}v(u) = 0$. Under the condition iii), the condition ii) is satisfied if and only if $\sum_{u\in\mathcal{B}(\mathcal{E},a)}v(u) = 0$. Then, we can rephrase conditions i) and ii) as $\sum_{u\in\mathcal{B}(\mathcal{E},a'')}v(u) = 0$ for all $a''\in\mathcal{C}$, i.e. $v$ on the domain $\mathcal{B}$ is a codeword of $(\mathcal{B},\mathcal{C},\mathcal{E})$. 
\end{proof}

\subsection{Proof of the sicking theorem}

\begin{proof}
We prove the theorem by constructing the glue codes. 

{\bf Coarsely devised glue code - Naked glue code.} The support of $\Sigma$ is 
\begin{eqnarray}
\mathcal{B}_N = \mathcal{Q}(\Sigma).
\end{eqnarray}
Let $(\mathcal{B},\mathcal{C}_X,\mathcal{E}_X)$ be the Tanner graph of the $X$-operator check matrix $H_X$. We construct the coarsely devised glue code according to Algorithm \ref{alg:NGC}, which outputs a Tanner graph $(\mathcal{B}_N,\mathcal{C}_N,\mathcal{E}_N)$. The binary linear code of the output Tanner graph, called naked glue code, is coarsely devised for $\Sigma$. 

\begin{figure}
\begin{minipage}{\linewidth}
\begin{algorithm}[H]
{\small
\begin{algorithmic}[1]
\caption{{\small Generation of the naked glue code.}}
\label{alg:NGC}
\Statex
\State Input $(\mathcal{B},\mathcal{C}_X,\mathcal{E}_X)$ and $\mathcal{B}_N$. 
\State Find $X$-operator checks that are adjacent to bits in $\mathcal{B}_N$ on the graph $(\mathcal{B},\mathcal{C}_X,\mathcal{E}_X)$, which constitute the set of checks 
\begin{eqnarray}
\mathcal{C}_N = \bigcup_{u\in\mathcal{B}_N} \mathcal{C}_X(\mathcal{E}_X,u).
\end{eqnarray}
\State Find $X$-operator edges that are incident on bits in $\mathcal{B}_N$ on the graph $(\mathcal{B},\mathcal{C}_X,\mathcal{E}_X)$, which constitute the set of edges 
\begin{eqnarray}
\mathcal{E}_N = \bigcup_{u\in\mathcal{B}_N} \mathcal{E}_X(u).
\end{eqnarray}
\State Output the Tanner graph $(\mathcal{B}_N,\mathcal{C}_N,\mathcal{E}_N)$. 
\end{algorithmic}
}
\end{algorithm}
\end{minipage}
\end{figure}

Let $H_N$ be the check matrix of the naked glue code. Without loss of generality, we suppose that bits in $\mathcal{B}_N$ are the first $n_N = \vert\mathcal{B}_N\vert$ bits in $\mathcal{B}$, and checks in $\mathcal{C}_N$ are the first $r_N = \vert\mathcal{C}_N\vert$ checks in $\mathcal{C}_X$. Then, $H_X$ is in the form 
\begin{eqnarray}
H_X = \begin{pmatrix} H_N & A_X \\ 0_{(r_X-r_N)\times n_N} & B_X \end{pmatrix}.
\end{eqnarray}
In the general case, $H_X$ can always be transformed into the above form through permutations of rows and columns. For the naked glue code, the corresponding pasting matrices are 
\begin{eqnarray}
S_N &=& \begin{pmatrix} E_{n_N} & 0_{n_N\times(n-n_N)} \end{pmatrix}, \label{eq:SN} \\
T_N &=& \begin{pmatrix} E_{r_N} \\ 0_{(r_X-r_N)\times r_N} \end{pmatrix}.
\end{eqnarray}
Because $H_XS_N^\mathrm{T} = T_NH_N$, the naked glue code is compatible with the memory. 

Now, we prove that the naked glue code is coarsely devised for $\Sigma$. Because the support of $\Sigma$ is $\mathcal{B}_N$, $J_{Z,A}S_N^\mathrm{T}S_N = J_{Z,A}$. Using $T_N^\mathrm{T}H_XS_N^\mathrm{T} = H_N$, we have $H_NS_NJ_{Z,A}^\mathrm{T} = T_N^\mathrm{T}H_XS_N^\mathrm{T}S_NJ_{Z,A}^\mathrm{T} = T_N^\mathrm{T}H_XJ_{Z,A}^\mathrm{T} = 0$. Therefore, $(\mathrm{rs}J_{Z,A})S_N^\mathrm{T}\subseteq\mathrm{ker}H_N$, i.e. $\mathrm{rs}J_{Z,A} = (\mathrm{rs}J_{Z,A})S_N^\mathrm{T}S_N\subseteq(\mathrm{ker}H_N)S_N$. According to the Definition \ref{def:devised}, the naked glue code is coarsely devised. 

Taking $H_G = H_N$, we have $n_G = n_N = \vert\mathcal{B}_N\vert$. Because $\vert\mathcal{C}_X(\mathcal{E}_X,u)\vert\leq w_{max}(H_X)$, the number of checks $r_G = r_N = \vert\mathcal{C}_N\vert\leq w_{max}(H_X)\vert\mathcal{B}_N\vert$. The Tanner graph $(\mathcal{B}_N,\mathcal{C}_N,\mathcal{E}_N)$ is a subgraph of $(\mathcal{B},\mathcal{C}_X,\mathcal{E}_X)$, therefore, $w_{max}(H_G)=w_{max}(H_N)\leq w_{max}(H_X)$. 

{\bf Finely devised glue code - Dressed glue code.} To construct a finely devised glue code, we consider a check matrix in the form 
\begin{eqnarray}
H_D = \begin{pmatrix} H_N \\ D \end{pmatrix},
\end{eqnarray}
where the dressing matrix $D$ is taken according to Algorithm \ref{alg:dressing}. By taking pasting matrices $S_N$ and 
\begin{eqnarray}
T_D &=& \begin{pmatrix} E_{r_N} & 0_{ r_N\times (k_N-q)} \\ 0_{(r_X-r_N)\times r_N} & 0_{(r_X-r_N)\times(k_N-q)} \end{pmatrix},
\end{eqnarray}
we can find that $H_XS_N^\mathrm{T} = T_DH_D$. Therefore, such a code is always compatible with the memory. We call it dressed glue code. 

\begin{figure}
\begin{minipage}{\linewidth}
\begin{algorithm}[H]
{\small
\begin{algorithmic}[1]
\caption{{\small Generation of the dressing matrix.}}
\label{alg:dressing}
\Statex
\State Input $J_{Z,A}\in\mathbb{F}_2^{q\times n}$, $H_N\in\mathbb{F}_2^{r_N\times n_N}$ and $S_N\in\mathbb{F}_2^{n_N\times n}$. 
\State Find a basis of $(\mathrm{ker}H_N)S_N\cap(\mathrm{rs}H_Z\oplus\mathrm{rs}F_Z)$, denoted by $\{u_1,u_2,\ldots\}$. 
\State $G_0\leftarrow\begin{pmatrix} u_1 \\ u_2 \\ \vdots \end{pmatrix}S_N^\mathrm{T}$
\Comment $\mathrm{rs}G_0 \subseteq \mathrm{ker}H_N$ and $k_N = n_N-\mathrm{rank}H_N-\mathrm{rank}G_0$
\State $G_1\leftarrow J_{Z,A}S_N^\mathrm{T}$
\Comment $\mathrm{rs}G_1 \subseteq \mathrm{ker}H_N$
\State Take rows in $G_0$ and $G_1$ as basis vectors of $\mathrm{ker}H_N$ and complete the basis with vectors $\{w_1,w_2,\ldots,w_{k_N-q}\}$ that satisfy $w_jS_N \in \mathrm{rs}J_{Z,C}\oplus\mathrm{rs}H_Z\oplus\mathrm{rs}F_Z$
\State $G_2\leftarrow\begin{pmatrix} w_1 \\ w_2 \\ \vdots \\ w_{k_N-q} \end{pmatrix}$
\Comment $\begin{pmatrix} G_0 \\ G_1 \\ G_2 \end{pmatrix}$ is the generator matrix of the code $\mathrm{ker}H_N$. 
\State Find matrices $U$, $V$ and $W$ such that 
\begin{eqnarray}
G_2S_N &=& UJ_{Z,C} + VH_Z + WF_Z.
\end{eqnarray}
\Comment $U$ is always row full rank. 
\State Compute the right inverse $U^\mathrm{r}$. 
\State Output $D = {U^\mathrm{r}}^\mathrm{T}J_{X,C}S_N^T$. 
\end{algorithmic}
}
\end{algorithm}
\end{minipage}
\end{figure}

In the algorithm, we have used that $U$ is row full rank, such that its right inverse exists. Now, we prove it. If $U$ is not row full rank, there exists a nonzero vector $\alpha$ such that $\alpha U = 0$. Then, $\alpha G_2S_N = \alpha VH_Z + \alpha WF_Z \in \mathrm{rs}(G_0S_N)$. Because rows in $G_0$ and $G_2$ are linear independent, there is a contradiction. 

The dressed glue code is finely devised for $\Sigma$. In Algorithm \ref{alg:dressing}, vectors $u_j\in (\mathrm{ker}H_N)S_N$ satisfy $u_jS_N^\mathrm{T}S_N = u_j$. Then, $\mathrm{rs}(G_0S_N)\subseteq \mathrm{rs}H_Z\oplus\mathrm{rs}F_Z$ and $DG_0^\mathrm{T} = 0$. Additionally, $DG_1^\mathrm{T} = 0$ and $DG_2^\mathrm{T} = E_{k_N-q}$. Here, we have used that $J_{Z,A}S_N^\mathrm{T}S_N = J_{Z,A}$. Therefore, the dressing matrix $D$ removes basis vectors of $\mathrm{rs}\bar{G}_2$ from the basis of $\mathrm{ker}H_D$, i.e. the generator matrix of the dressed glue code is 
\begin{eqnarray}
\begin{pmatrix} G_0 \\ G_1 \end{pmatrix},
\end{eqnarray}
Noticing the definitions of $G_0$ and $G_1$ in Algorithm \ref{alg:dressing}, we have proved that the dressed glue code is finely devised. 

{\bf Finely devised LDPC glue code.} The dressed glue code may not satisfy the LDPC condition. Now, we generate an LDPC glue code from the dressed glue code, which is finely devised. 

\begin{figure}
\begin{minipage}{\linewidth}
\begin{algorithm}[H]
{\small
\begin{algorithmic}[1]
\caption{{\small Generation of the finely devised LDPC glue code.}}
\label{alg:LDPC}
\Statex
\State Input $\mathcal{B}_N$, $\mathcal{C}_N$, $\mathcal{E}_N$, $\mathcal{C}_D$, $\mathcal{E}_D$. 
\State $\mathcal{B}_G\leftarrow\mathcal{B}_N$
\State $\mathcal{C}_G\leftarrow\mathcal{C}_N\cup\mathcal{C}_D$
\State $\mathcal{E}_G\leftarrow\mathcal{E}_N\cup\mathcal{E}_D$
\For{$u\in\mathcal{B}_N$}
\While{$\vert\mathcal{C}_G(\mathcal{E}_G,u)-\mathcal{C}_N(\mathcal{E}_N,u)\vert>1$}
\State Choose $a,a'\in\mathcal{C}_G(\mathcal{E}_G,u)-\mathcal{C}_N(\mathcal{E}_N,u)$. 
\State $(\mathcal{B}_G,\mathcal{C}_G,\mathcal{E}_G)\leftarrow$ BitDuplication($\mathcal{B}_G,\mathcal{C}_G,\mathcal{E}_G,u,\{a,a'\})$. 
\EndWhile
\EndFor
\For{$a\in\mathcal{C}_N$}
\While{$\vert\mathcal{B}_G(\mathcal{E}_G,a)-\mathcal{B}_N(\mathcal{E}_N,a)\vert>1$}
\State Choose $u,u'\in\mathcal{B}_G(\mathcal{E}_G,a)-\mathcal{B}_N(\mathcal{E}_N,a)$. 
\State $(\mathcal{B}_G,\mathcal{C}_G,\mathcal{E}_G)\leftarrow$ CheckDuplication($\mathcal{B}_G,\mathcal{C}_G,\mathcal{E}_G,a,\{u,u'\})$. 
\EndWhile
\EndFor
\State Output the Tanner graph $(\mathcal{B}_G,\mathcal{C}_G,\mathcal{E}_G)$. 
\end{algorithmic}
}
\end{algorithm}
\end{minipage}
\end{figure}

Let $(\mathcal{B}_N,\mathcal{C}_N,\mathcal{E}_N)$ and $(\mathcal{B}_N,\mathcal{C}_D,\mathcal{E}_D)$ be Tanner graphs of the naked glue code and dressing matrix $D$, respectively. Then the Tanner graph of the dressed glue code is $(\mathcal{B}_N,\mathcal{C}_N\cup\mathcal{C}_D,\mathcal{E}_N\cup\mathcal{E}_D)$. We generate the LDPC glue code by applying the bit duplication and check duplication operations on the Tanner graph according to Algorithm \ref{alg:LDPC}. On the generated Tanner graph $(\mathcal{B}_G,\mathcal{C}_G,\mathcal{E}_G)$, the vertex degrees of bits $u\in\mathcal{B}_N$ (checks $a\in\mathcal{C}_N$) are not larger than $w_{max}(H_N)+1$, the vertex degrees of bits (checks) added in bit (check) duplication operations are three, and the vertex degrees of bits (checks) added in check (bit) duplication operations are two. Let $H_G$ be the check matrix of $(\mathcal{B}_G,\mathcal{C}_G,\mathcal{E}_G)$. Then, $w_{max}(H_G)\leq \max\{w_{max}(H_N)+1,3\}$. 

Now, we prove that the LDPC glue code $(\mathcal{B}_G,\mathcal{C}_G,\mathcal{E}_G)$ is a finely devised for $\Sigma$. Its check matrix is in the form 
\begin{eqnarray}
H_G = \begin{pmatrix} H_N & 0 \\ A_G & B_G \end{pmatrix}.
\end{eqnarray}
By taking pasting matrices 
\begin{eqnarray}
S &=& \begin{pmatrix} E_{n_N} & 0_{n_N\times(n-n_N)} \\ 0 & 0 \end{pmatrix}, \\
T &=& \begin{pmatrix} E_{r_N} & 0 \\ 0_{(r_X-r_N)\times r_N} & 0 \end{pmatrix},
\end{eqnarray}
we can find that $H_XS^\mathrm{T} = TH_G$. Therefore, the code is always compatible with the memory. According to Lemma \ref{lem:duplication}, $(\mathrm{ker}H_G)S = (\mathrm{ker}H_D)S_N$. In the proof for the dressed glue code, we have proved that $\mathrm{rs}J_{Z,A} = (\mathrm{ker}H_D)S_N$. Then, $\mathrm{rs}J_{Z,A} = (\mathrm{ker}H_G)S$, i.e. the LDPC glue code is finely devised. 

In duplication operations, the number of bits and checks added to the Tanner graph depends on how we choose $G_2$ (i.e. $U$). To minimise the number of bits and checks, we use the standard form of various generator matrices. First, there always exist an invertible matrix $R$ and a permutation matrix $\pi_1$ such that $J_X = RJ_X^S\pi_1$ and $J_Z = R^{-1}J_Z^S\pi_1$, where $J_X^S = \begin{pmatrix} E_k & 0 & J'_X \end{pmatrix}$ and $J_Z^S = \begin{pmatrix} E_k & J'_Z & 0 \end{pmatrix}$. Accordingly, the matrix $S_N$ is always in the form $S_N = \begin{pmatrix} S_K & S_Z & 0 \end{pmatrix}\pi_1$. This expression of $S_N$ is consistent with Eq. (\ref{eq:SN}) up to the column permutation $\pi_1$, and $w_{max}(S_K) = w_{max}(S_Z) = 1$. Second, the $Z$ logical operators to be measured are $J_{Z,A} = (\bar{J}_Z)_{1:q,\bullet}J_Z$, where $(\bar{J}_Z)_{1:q,\bullet}$ denotes the first $q$ rows of $\bar{J}_Z$. Without loss of generality, we can always choose $(\bar{J}_Z)_{1:q,\bullet}$ such that$(\bar{J}_Z)_{1:q,\bullet} = \begin{pmatrix} E_q & P \end{pmatrix}\pi_2R$, where $\pi_2$ is a permutation matrix. Accordingly, the $Z$ logical operators to be measured are $J_{Z,A} = \begin{pmatrix} E_q & P \end{pmatrix}\pi_2J_Z^S\pi_1$, and the $X$ logical operators preserved in the measurement are $J_{X,C} = \begin{pmatrix} P^\mathrm{T} & E_{k-q} \end{pmatrix}\pi_2J_X^S\pi_1$. Then, $D = {U^\mathrm{r}}^\mathrm{T}\begin{pmatrix} P^\mathrm{T} & E_{k-q} \end{pmatrix}\pi_2 S_K^T$. Third, we can always choose $w_j$ vectors such that the matrix $U$ is in the form $U = \begin{pmatrix} E_{k_N-q} & Q \end{pmatrix}\pi_3$, where $\pi_3$ is a permutation matrix. Then, ${U^\mathrm{r}}^\mathrm{T} = \begin{pmatrix} E_{k_N-q} & 0 \end{pmatrix}\pi_3$, and the number of nonzero entries in ${U^\mathrm{r}}^\mathrm{T}\begin{pmatrix} P^\mathrm{T} & E_{k-q} \end{pmatrix}$ is not larger than $(k_N-q)(q+1)$. Therefore, the number of nonzero entries in $D$ is not larger than $(k_N-q)(q+1)$. 

In bit duplication operations, the number of bits (checks) added to the Tanner graph is 
\begin{eqnarray}
\sum_{u\in\mathcal{B}_N} \max\{0,\vert\mathcal{C}_D(\mathcal{E}_D,u)\vert-1\} \leq (k_N-q)(q+1).
\end{eqnarray}
In check duplication operations, the number of bits (checks) added to the Tanner graph is 
\begin{eqnarray}
\sum_{a\in\mathcal{C}_D} \max\{0,\vert\mathcal{B}_N(\mathcal{E}_D,a)\vert-1\} \leq (k_N-q)(q+1).
\end{eqnarray}
\end{proof}

\section{Simultaneous measurement with a logical thickness}
\label{app:thickness}

We consider logical-operator generator matrices in the standard form, i.e. $J_X = \begin{pmatrix} E_k & 0 & J'_X \end{pmatrix}$ and $J_Z = \begin{pmatrix} E_k & J'_Z & 0 \end{pmatrix}$ ($\pi_1$ and $R$ are identity matrices). If the simultaneous measurement has a logical thickness of $t$, the matrix $(\bar{J}_Z)_{1:q,\bullet}$ is in the form 
\begin{eqnarray}
(\bar{J}_Z)_{1:q,\bullet} &=& \begin{pmatrix}
\bar{J}_{Z,1} & 0 & \cdots \\
0 & \bar{J}_{Z,2} & \cdots \\
\vdots & \vdots & \ddots
\end{pmatrix}\pi_2' = \left(\bigoplus_l \bar{J}_{Z,l}\right)\pi_2',
\end{eqnarray}
where each block $\bar{J}_{Z,l}$ corresponds to a subset of logical operators $\Sigma_l$ (see the definition of logical thickness in the main text), $\pi_2'$ is a permutation matrix, $\bar{J}_{Z,l} \in \mathbb{F}_2^{a_l\times b_l}$ and $a_l\leq t$ for all $l$. Here, $t$ is the logical thickness of the operator set $\Sigma$. Without loss of generality, each block is in the standard from $\bar{J}_{Z,l} = \begin{pmatrix} E_{a_l} & P_l \end{pmatrix}$. Up to a permutation of columns, $(\bar{J}_Z)_{1:q,\bullet} = \begin{pmatrix} E_q & P \end{pmatrix}\pi_2$ (notice that $\pi_2$ is different from $\pi_2'$), where $P = \bigoplus_l P_l$. Then, the number of nonzero entries in ${U^\mathrm{r}}^\mathrm{T}\begin{pmatrix} P^\mathrm{T} & E_{k-q} \end{pmatrix}$ is not larger than $(k_N-q)(t+1)$, and number of nonzero entries in $D$ is also not larger than $(k_N-q)(t+1)$. Accordingly, the finely devised glue code satisfies 
\begin{eqnarray}
n_G &\leq& n_N+2(k_N-q)(t+1), \\
r_G &\leq& w_{max}(H_X)n_N+2(k_N-q)(t+1).
\end{eqnarray}

\section{General logical Pauli measurements and universal quantum computing}
\label{app:general}

In addition to $Z$ logical operators, we can also measure $X$ logical operators in a similar way. Suppose there are ancilla logical qubits encoded in a block independent from the memory, we can also measure logical operators in the from $\sigma_M\sigma_A$, where $\sigma_M$ ($\sigma_A$) is an $X$ or $Z$ logical operator of the memory (ancilla block). Notice that $\sigma_M$ and $\sigma_A$ could be different in the $X/Z$ species. 

\begin{figure}[tbp]
\centering
\includegraphics[width=\linewidth]{\figures/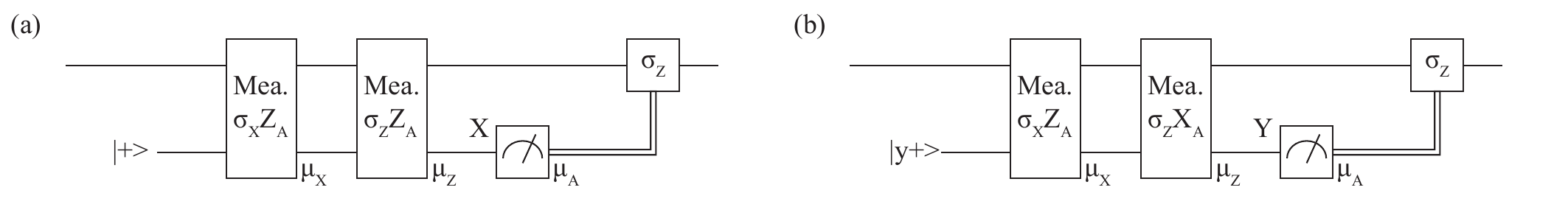}
\caption{
Circuits for measurements of general logical Pauli operators. 
}
\label{fig:circuit}
\end{figure}

For a general logical Pauli operator $\sigma$, we can measure it in the following way. We write the Pauli operator in the form $\sigma = \nu\sigma_X\sigma_Z$, where $\nu = \pm 1,\pm i$, and $\sigma_X$ ($\sigma_Z$) is an $X$ ($Z$) logical operator. Let $X_A,Y_A,Z_A$ be Pauli operators of an ancilla logical qubit, respectively. We can measure $\sigma$ using the ancilla logical qubit: 
\begin{itemize}
\item[$\bullet$] If $[\sigma_X,\sigma_Z] = 0$, we measure $\sigma$ according to Fig. \ref{fig:circuit}(a): first, we initialise the ancilla logical qubit in the state $\ket{+}$; then, we measure $\sigma_XZ_A$, $\sigma_ZZ_A$ and $X_A$. Let $\mu_X,\mu_Z,\mu_A$ be outcomes of three measurements, respectively. The measurement outcome of $\sigma$ is $\nu\mu_X\mu_Z$. When $\mu_A = -1$, we apply a correction gate $\sigma_Z$. 
\item[$\bullet$] If $\{\sigma_X,\sigma_Z\} = 0$, we measure $\sigma$ according to Fig. \ref{fig:circuit}(b): first, we initialise the ancilla logical qubit in the state $\ket{y+}$; then, we measure $\sigma_XZ_A$, $\sigma_ZX_A$ and $Y_A$. Let $\mu_X,\mu_Z,\mu_A$ be outcomes of three measurements, respectively. The measurement outcome of $\sigma$ is $-i\nu\mu_X\mu_Z$. When $\mu_A = -1$, we apply a correction gate $\sigma_Z$. 
\end{itemize}

Next, we generalise the above protocol for the simultaneous measurement of general logical Pauli operators. 

\subsection{Simultaneous measurement of general logical Pauli operators}

Let $\Theta = \{\sigma_1 = \nu_1\sigma_{X,1}\sigma_{Z,1},\sigma_2 = \nu_2\sigma_{X,2}\sigma_{Z,2},\ldots,\sigma_q = \nu_q\sigma_{X,q}\sigma_{Z,q}\}$ be an arbitrary set of commutative logical Pauli operators, where $\nu_j = \pm 1,\pm i$, and $\{\sigma_{X,j}\}$ ($\{\sigma_{Z,j}\}$) are $X$ ($Z$) logical operators of the memory. 

\begin{definition}
{\bf Characteristic number.} The number $\eta$ characterises the commutation relation between two sub-operators: $\eta_j = 0$ if and only if $[\sigma_{X,j},\sigma_{Z,j}] = 0$, and $\eta_j = 1$ if and only if $\{\sigma_{X,j},\sigma_{Z,j}\} = 0$. When $\eta_j = 0$, $\nu_j = \pm 1$; when $\eta_j = 1$, $\nu_j = \pm i$. 
\end{definition}

To measure $\Theta$, we need two independent ancilla blocks, $A0$ and $A1$. We suppose that each block encodes at least $q$ logical qubits, though not all of them are necessarily used. We use $X_{\alpha,j},Y_{\alpha,j},Z_{\alpha,j}$ to denote Pauli operators of the $j$th logical qubit in the block $A\alpha$. 

The protocol for the simultaneous measurement of general Pauli operators has the following steps: 
\begin{itemize}
\item[1.] Initialise ancilla logical qubits in blocks $A0$ and $A1$ in states $\ket{+}$ and $\ket{y+}$, respectively; 
\item[2.] Simultaneously measure logical operators $\Omega_X = \{\sigma_{X,j}Z_{0,j}^{1-\eta_j}Z_{1,j}^{\eta_j}\}$ using the devised sticking protocol or brute-force branching protocol; 
\item[3.] Simultaneously measure logical operators $\Omega_Z = \{\sigma_{Z,j}Z_{0,j}^{1-\eta_j}X_{1,j}^{\eta_j}\}$ using the devised sticking protocol or brute-force branching protocol; 
\item[4.] Measure ancilla logical qubits in blocks $A0$ and $A1$ in bases $X$ and $Y$, respectively; 
\item[5.] Apply correction gates according to measurement outcomes. 
\end{itemize}

For a quantum LDPC code, we can initialise logical qubits in a block in the state $\ket{+}$ with a time cost independent of the logical qubit number in the block, and the time cost for the measurement in the basis $X$ is also independent of the logical qubit number in the block. If we choose a code with transversal $S$ gate for the block $A1$, the initialisation in the state $\ket{y+}$ and measurement in the basis $Y$ can also be accomplished in time independent of the logical qubit number in the block; see Sec. \ref{app:universal} for a discussion on the general case. Using devised sticking or brute-force branching to implement the measurements on $\Omega_X$ and $\Omega_Z$, the time cost is independent of sizes of $\Omega_X$ and $\Omega_Z$. Overall, we can measure the operator set $\Theta$ in time independent of the size $q$ of the operator set. 

\begin{definition}
{\bf Regular operator set.} The operator set $\Theta$ is said to be regular if and only if $[\sigma_{X,i},\sigma_{Z,j}] = 0$ for all $i\neq j$. 
\end{definition}

\begin{lemma}
The above protocol realises the measurement on the operator set $\Theta$ if $\Theta$ is regular. 
\label{lem:general}
\end{lemma}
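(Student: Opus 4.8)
The plan is to analyse the protocol in the Heisenberg (stabiliser) picture on the CSS code obtained by stacking the memory with the two ancilla blocks $A0$ and $A1$, and to show that, modulo ancilla degrees of freedom that are fixed before the protocol and re-measured afterwards, its net action on the memory is a joint projective measurement of $\{\sigma_1,\ldots,\sigma_q\}$ with the stated outcomes. Over the course of the protocol the operators that get measured are: the initial ancilla stabilisers $X_{0,j}$ and $Y_{1,j}$ (outcome $+1$ by preparation); then the set $\Omega_X$ (step 2); then the set $\Omega_Z$ (step 3); then the ancilla read-outs returning $X_{0,j}$ and $Y_{1,j}$ again, with outcomes $\mu_{A,j}$ (step 4). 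First I would verify that steps 2 and 3 are well defined, i.e. that $\Omega_X$ and $\Omega_Z$ are each pairwise commutative, so that Theorem \ref{the:GLS} applies to them on the stacked code (using the already-asserted fact that a mixed-species product $\sigma_M\sigma_A$ of memory and ancilla logical operators is measurable by a sticker); then I would do the stabiliser bookkeeping for a single index $j$; then assemble the pieces and account for the corrections.

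For the commutativity, the key observation is that two $Z$-type Pauli operators always commute, so writing $\sigma_i\sigma_j$ out as $\nu_i\nu_j\,\sigma_{X,i}\sigma_{Z,i}\sigma_{X,j}\sigma_{Z,j}$ and reordering it to $\nu_i\nu_j\,\sigma_{X,j}\sigma_{Z,j}\sigma_{X,i}\sigma_{Z,i}$, the two cross commutators $[\sigma_{Z,i},\sigma_{X,j}]$ and $[\sigma_{X,i},\sigma_{Z,j}]$ vanish because $\Theta$ is regular, leaving $\sigma_i\sigma_j=(-1)^{\epsilon_{ij}}\sigma_j\sigma_i$ with $\epsilon_{ij}$ the anticommutation indicator of $\sigma_{X,i}$ and $\sigma_{X,j}$. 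Since $\Theta$ is commutative, $\epsilon_{ij}=0$ for all $i,j$: the $X$-parts pairwise commute and the $Z$-parts commute automatically. The ancilla factors $Z_{0,j}^{1-\eta_j}Z_{1,j}^{\eta_j}$ in $\Omega_X$ are $Z$-type on pairwise-distinct ancilla logical qubits, hence pairwise commute, so $\Omega_X$ is commutative; the same argument applies to $\Omega_Z$ (ancilla factor $Z_{0,j}^{1-\eta_j}X_{1,j}^{\eta_j}$, again on distinct logical qubits). One moreover checks $[\Omega_{X,i},\Omega_{Z,j}]=0$ for all $i,j$: for $i\neq j$ the memory parts commute by regularity and the ancilla parts act on different logical qubits; for $i=j$ and $\eta_j=0$ both factors commute outright, while for $i=j$ and $\eta_j=1$ the memory factors $\sigma_{X,j},\sigma_{Z,j}$ anticommute and so do the ancilla factors $Z_{1,j},X_{1,j}$, so the product commutes. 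Hence $\Omega_X\cup\Omega_Z$ is commutative, and by Theorem \ref{the:GLS} steps 2 and 3 jointly realise a projective measurement of $\langle\Omega_X\cup\Omega_Z\rangle$ on the stacked code while leaving undisturbed every logical operator commuting with it.

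Next I would do the stabiliser bookkeeping for a fixed $j$. For $\eta_j=0$ the relevant ancilla qubit is $j$ of block $A0$, stabilised by $X_{0,j}=+1$ before step 2; measuring $\Omega_{X,j}=\sigma_{X,j}Z_{0,j}$ (which anticommutes with $X_{0,j}$) replaces that generator by $\mu_{X,j}\Omega_{X,j}$, then measuring $\Omega_{Z,j}=\sigma_{Z,j}Z_{0,j}$ adjoins $\mu_{Z,j}\Omega_{Z,j}$, so the product $\mu_{X,j}\mu_{Z,j}\,\sigma_{X,j}\sigma_{Z,j}$ is now a stabiliser; the final measurement of $X_{0,j}$ (which anticommutes with both $\Omega_{X,j}$ and $\Omega_{Z,j}$ but commutes with their product) disentangles the ancilla qubit and leaves the memory stabilised by $\mu_{X,j}\mu_{Z,j}\,\sigma_{X,j}\sigma_{Z,j}=\mu_{X,j}\mu_{Z,j}\nu_j^{-1}\sigma_j$, i.e. $\sigma_j$ is measured with outcome $\nu_j\mu_{X,j}\mu_{Z,j}$, and $\mu_{A,j}$ affects only the discarded qubit so no correction is needed. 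For $\eta_j=1$ the argument is identical with block $A1$ prepared in $\ket{y+}$ ($Y_{1,j}=+1$), $\Omega_{X,j}=\sigma_{X,j}Z_{1,j}$, $\Omega_{Z,j}=\sigma_{Z,j}X_{1,j}$; here $\Omega_{X,j}\Omega_{Z,j}=i\,Y_{1,j}\,\sigma_{X,j}\sigma_{Z,j}$, and the final $Y$-measurement of qubit $j$ of $A1$ (outcome $\mu_{A,j}$) leaves the memory stabilised by $i\,\mu_{X,j}\mu_{Z,j}\mu_{A,j}\,\sigma_{X,j}\sigma_{Z,j}=-i\nu_j\,\mu_{X,j}\mu_{Z,j}\mu_{A,j}\,\sigma_j$, using $\nu_j^{-1}=-\nu_j$ for $\nu_j=\pm i$. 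Reporting the outcome $-i\nu_j\mu_{X,j}\mu_{Z,j}$ and, when $\mu_{A,j}=-1$, applying the Pauli-frame correction $\sigma_{Z,j}$ — which anticommutes with $\sigma_j$ but, by regularity, commutes with every $\sigma_i$ for $i\neq j$ — makes the residual memory state consistent with the reported outcomes without disturbing any other measured operator.

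Finally I would assemble the pieces: the combined protocol measures the group generated by $\Omega_X\cup\Omega_Z$ together with the ancilla operators $\{X_{0,j}\}$ and $\{Y_{1,j}\}$, which are fixed to $+1$ before the protocol and re-measured afterwards, so on the memory the net effect is exactly a joint projective measurement of $\{\sigma_1,\ldots,\sigma_q\}$ with outcomes as stated. To make this fully rigorous one must also show the measurement is \emph{clean}, i.e. no memory logical operator outside $\langle\sigma_1,\ldots,\sigma_q\rangle$ is disturbed; equivalently, every memory operator commuting with all $\sigma_j$ lifts, after multiplication by suitable ancilla operators, to an operator commuting with all of $\Omega_X\cup\Omega_Z$ and with $\{X_{0,j}\},\{Y_{1,j}\}$ — which again reduces to the same commutator computations together with the guarantee (Theorem \ref{the:GLS}) that each sticker measurement is itself clean on the combined code. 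I expect the main obstacle to be precisely this no-collateral-damage verification: confirming that the two-step structure, the back-action of the ancilla read-out, and the Pauli-frame corrections together amount to nothing more than $\{\sigma_j\}$ on the memory, and in particular pinning down that regularity — and not merely pairwise commutativity of $\Theta$ — is exactly what kills the cross terms $[\sigma_{X,i},\sigma_{Z,j}]$, $i\neq j$, that would otherwise spoil either the well-definedness of steps 2–3 or the independence of the corrections.
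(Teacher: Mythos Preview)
Your stabiliser/Heisenberg approach is a legitimate alternative to the paper's Schr\"odinger-picture projector calculation, and your identification of regularity as precisely the condition making $\Omega_X\cup\Omega_Z$ jointly commutative is correct and cleanly isolates where the hypothesis enters. However, there is a concrete error in the $\eta_j=0$ bookkeeping: you claim that ``$\mu_{A,j}$ affects only the discarded qubit so no correction is needed'', but this is false, and the protocol \emph{does} apply the correction $\sigma_{Z,j}$ when $\mu_{0,j}=-1$. An explicit computation (or the paper's projector argument) shows that for $\mu_{0,j}=-1$ the memory state before correction is $\sigma_{Z,j}P_j|\psi\rangle$ with $P_j=\tfrac12(\openone+\mu_{X,j}\mu_{Z,j}\sigma_{X,j}\sigma_{Z,j})$, not $P_j|\psi\rangle$. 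Because $\sigma_{Z,j}$ commutes with $\sigma_j$ when $\eta_j=0$, this byproduct is invisible to the single stabiliser $\sigma_j$ you are tracking, yet it is a nontrivial logical operator on the memory and must be undone for the protocol to realise a \emph{projective} measurement rather than merely report the correct outcome.

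The gap is structural to the stabiliser shortcut: tracking only the generator $\sigma_j$ fixes the eigenspace but not the representative within it. The ``clean'' verification you defer to the last paragraph is exactly where this surfaces --- a memory operator commuting with $\sigma_j$ but anticommuting with $\sigma_{Z,j}$ (take one whose $X$-part anticommutes with $\sigma_{Z,j}$ and whose $Z$-part anticommutes with $\sigma_{X,j}$) lifts only as $(\text{operator})\cdot X_{0,j}$, and this picks up a factor $\mu_{0,j}$ upon ancilla read-out, which the $\sigma_{Z,j}$ correction then cancels. Your $\eta_j=1$ analysis happens to catch the analogous byproduct because there $\sigma_{Z,j}$ anticommutes with $\sigma_j$ and hence shows up as an eigenvalue flip; but the mechanism and the needed correction are identical in both cases. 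With this fixed your outline goes through; the paper instead sidesteps the issue by computing the post-protocol state directly as $\bigl(\prod_j\Delta_j\bigr)\bigl(\prod_j\tfrac12(\openone+\mu_j\sigma_j)\bigr)|\psi\rangle$ and observing that the ancilla-only factors $\Delta_j$ carry no memory dependence.
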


\begin{proof}
Let $\ket{\psi}$ be the logical state of the memory. The overall state after step-1 is 
\begin{eqnarray}
\ket{\Psi_i} = \ket{\psi}\otimes\left(\bigotimes_{j=1}^q\ket{+}_{0,j}\right)\otimes\left(\bigotimes_{j=1}^q\ket{y+}_{1,j}\right).
\end{eqnarray}
After steps 2, 3, 4 and 5, the state is 
\begin{eqnarray}
\ket{\Psi_f} &=& \left(\prod_{j=1}^q\sigma_{Z,j}^{\delta_{\mu_{0,j},-1}^{1-\eta_j}\delta_{\mu_{1,j},-1}^{\eta_j}}\right) \notag \\
&&\times \left(\prod_{j=1}^q\frac{\openone+\mu_{1,j}Y_{1,j}}{2}\right)\left(\prod_{j=1}^q\frac{\openone+\mu_{0,j}X_{0,j}}{2}\right) \notag \\
&&\times \left(\prod_{j=1}^q\frac{\openone+\mu_{Z,j}\sigma_{Z,j}Z_{0,j}^{1-\eta_j}X_{1,j}^{\eta_j}}{2}\right)\left(\prod_{j=1}^q\frac{\openone+\mu_{X,j}\sigma_{X,j}Z_{0,j}^{1-\eta_j}Z_{1,j}^{\eta_j}}{2}\right) \ket{\Psi_i},
\end{eqnarray}
where $\mu_{X,j},\mu_{Z,j},\mu_{0,j},\mu_{1,j} = \pm 1$ are corresponding measurement outcomes. The first line describes the correction gates applied in step 5, noticing that $\delta_{\mu_{0,j},-1}^{1-\eta_j}\delta_{\mu_{1,j},-1}^{\eta_j} = \delta_{\mu_{0,j},-1}$ ($\delta_{\mu_{0,j},-1}^{1-\eta_j}\delta_{\mu_{1,j},-1}^{\eta_j} = \delta_{\mu_{1,j},-1}$) when $\eta_j = 0$ ($\eta_j = 1$). The second and third lines describe measurements applied in steps 2, 3 and 4. Here, $(\openone+\mu\sigma)/2$ is the projection operator describing the measurement of Pauli operator $\sigma$ with the outcome $\mu$. 

Let's consider the case that $q = 1$. The state becomes 
\begin{eqnarray}
\ket{\Psi_f} &=& \sigma_{Z}^{\delta_{\mu_{0},-1}^{1-\eta}\delta_{\mu_{1},-1}^{\eta}}
\frac{\openone+\mu_{1}Y_{1}}{2} \frac{\openone+\mu_{0}X_{0}}{2}
\frac{\openone+\mu_{Z}\sigma_{Z}Z_{0}^{1-\eta}X_{1}^{\eta}}{2} \frac{\openone+\mu_{X}\sigma_{X}Z_{0}^{1-\eta}Z_{1}^{\eta}}{2} \ket{\Psi_i},
\end{eqnarray}
where we have neglected the subscript $j = 1$ for simplicity. Because ancilla logical qubits are initialised in states $\ket{+}$ and $\ket{y+}$, 
\begin{eqnarray}
\ket{\Psi_f} &=& \sigma_{Z}^{\delta_{\mu_{0},-1}^{1-\eta}\delta_{\mu_{1},-1}^{\eta}}
\frac{\openone+\mu_{1}Y_{1}}{2} \frac{\openone+\mu_{0}X_{0}}{2}
\frac{\openone+\mu_{Z}\sigma_{Z}Z_{0}^{1-\eta}X_{1}^{\eta}}{2} \frac{\openone+\mu_{X}\sigma_{X}Z_{0}^{1-\eta}Z_{1}^{\eta}}{2}
\frac{\openone+Y_{1}}{2} \frac{\openone+X_{0}}{2} \ket{\Psi_i}.
\end{eqnarray}
Using $(\openone+\sigma)(\openone+\tau) = (\openone+\sigma)(\openone+\sigma\tau)$, we can rewrite the state as 
\begin{eqnarray}
\ket{\Psi_f} &=& \sigma_{Z}^{\delta_{\mu_{0},-1}^{1-\eta}\delta_{\mu_{1},-1}^{\eta}}
\frac{\openone+\mu_{1}Y_{1}}{2} \frac{\openone+\mu_{0}X_{0}}{2}
\frac{\openone+\mu_{Z}\sigma_{Z}Z_{0}^{1-\eta}X_{1}^{\eta}}{2} \frac{\openone+(-i)^{\eta}\nu\mu_{X}\mu_{Z}\sigma Y_{1}^{\eta}}{2}
\frac{\openone+Y_{1}}{2} \frac{\openone+X_{0}}{2} \ket{\Psi_i}.
\end{eqnarray}
Using $\sigma(\openone+\sigma)/2 = (\openone+\sigma)/2$, we can further rewrite the state as  
\begin{eqnarray}
\ket{\Psi_f} &=& \sigma_{Z}^{\delta_{\mu_{0},-1}^{1-\eta}\delta_{\mu_{1},-1}^{\eta}}
\frac{\openone+\mu_{1}Y_{1}}{2} \frac{\openone+\mu_{0}X_{0}}{2}
\frac{\openone+\mu_{Z}\sigma_{Z}Z_{0}^{1-\eta}X_{1}^{\eta}}{2} \frac{\openone+(-i)^{\eta}\nu\mu_{X}\mu_{Z}\sigma}{2}
\frac{\openone+Y_{1}}{2} \frac{\openone+X_{0}}{2} \ket{\Psi_i},
\end{eqnarray}
in which we have removed $Y_{1}^{\eta}$. Because of the commutativity of operators, 
\begin{eqnarray}
\ket{\Psi_f} &=& \sigma_{Z}^{\delta_{\mu_{0},-1}^{1-\eta}\delta_{\mu_{1},-1}^{\eta}}
\frac{\openone+\mu_{1}Y_{1}}{2} \frac{\openone+\mu_{0}X_{0}}{2}
\frac{\openone+\mu_{Z}\sigma_{Z}Z_{0}^{1-\eta}X_{1}^{\eta}}{2}
\frac{\openone+Y_{1}}{2} \frac{\openone+X_{0}}{2} \notag \\
&&\times \frac{\openone+(-i)^{\eta}\nu\mu_{X}\mu_{Z}\sigma}{2} \ket{\Psi_i}.
\end{eqnarray}
For two Pauli operators $\sigma$ and $\tau$, 
\begin{eqnarray}
\frac{\openone+\mu\sigma}{2}\tau\frac{\openone+\sigma}{2} = \delta_{\mu,+1}\frac{\openone+\mu\sigma}{2}\tau\frac{\openone+\sigma}{2}
\end{eqnarray}
if $[\sigma,\tau] = 0$, and 
\begin{eqnarray}
\frac{\openone+\mu\sigma}{2}\tau\frac{\openone+\sigma}{2} = \delta_{\mu,-1}\frac{\openone+\mu\sigma}{2}\tau\frac{\openone+\sigma}{2}
\end{eqnarray}
if $\{\sigma,\tau\} = 0$. Then, 
\begin{eqnarray}
\ket{\Psi_f} &=& \sigma_{Z}^{\delta_{\mu_{0},-1}^{1-\eta}\delta_{\mu_{1},-1}^{\eta}}
\frac{\openone+\mu_{1}Y_{1}}{2} \frac{\openone+\mu_{0}X_{0}}{2}
\frac{\delta_{\mu_{0},+1}^{1-\eta}\delta_{\mu_{1},+1}^{\eta}\openone+\delta_{\mu_{0},-1}^{1-\eta}\delta_{\mu_{1},-1}^{\eta}\mu_{Z}\sigma_{Z}Z_{0}^{1-\eta}X_{1}^{\eta}}{2}
\frac{\openone+Y_{1}}{2} \frac{\openone+X_{0}}{2} \notag \\
&&\times \frac{\openone+(-i)^{\eta}\nu\mu_{X}\mu_{Z}\sigma}{2} \ket{\Psi_i} \notag \\
&=& \Delta \frac{\openone+(-i)^{\eta}\nu\mu_{X}\mu_{Z}\sigma}{2} \ket{\Psi_i}.
\end{eqnarray}
Notice that the operator 
\begin{eqnarray}
\Delta = \frac{\openone+\mu_{1}Y_{1}}{2} \frac{\openone+\mu_{0}X_{0}}{2}
\frac{\delta_{\mu_{0},+1}^{1-\eta}\delta_{\mu_{1},+1}^{\eta}\openone+\delta_{\mu_{0},-1}^{1-\eta}\delta_{\mu_{1},-1}^{\eta}\mu_{Z}Z_{0}^{1-\eta}X_{1}^{\eta}}{2}
\frac{\openone+Y_{1}}{2} \frac{\openone+X_{0}}{2}
\end{eqnarray}
only acts on ancilla logical qubits. 

For a general $q$, because ancilla logical qubits are initialised in states $\ket{+}$ and $\ket{y+}$, 
\begin{eqnarray}
\ket{\Psi_f} &=& \left(\prod_{j=1}^q\sigma_{Z,j}^{\delta_{\mu_{0,j},-1}^{1-\eta_j}\delta_{\mu_{1,j},-1}^{\eta_j}}\right) \notag \\
&&\times \left(\prod_{j=1}^q\frac{\openone+\mu_{1,j}Y_{1,j}}{2}\right)\left(\prod_{j=1}^q\frac{\openone+\mu_{0,j}X_{0,j}}{2}\right) \notag \\
&&\times \left(\prod_{j=1}^q\frac{\openone+\mu_{Z,j}\sigma_{Z,j}Z_{0,j}^{1-\eta_j}X_{1,j}^{\eta_j}}{2}\right)\left(\prod_{j=1}^q\frac{\openone+\mu_{X,j}\sigma_{X,j}Z_{0,j}^{1-\eta_j}Z_{1,j}^{\eta_j}}{2}\right) \notag \\
&&\times \left(\prod_{j=1}^q\frac{\openone+Y_{1,j}}{2}\right)\left(\prod_{j=1}^q\frac{\openone+X_{0,j}}{2}\right) \ket{\Psi_i}.
\end{eqnarray}
Now, it is crucial to use the condition that $[\sigma_{X,i},\sigma_{Z,j}] = 0$ for all $i\neq j$. Under the condition, 
\begin{eqnarray}
\ket{\Psi_f} &=& \prod_{j=1}^q\left(\sigma_{Z,j}^{\delta_{\mu_{0,j},-1}^{1-\eta_j}\delta_{\mu_{1,j},-1}^{\eta_j}}\right. \notag \\
&&\times \frac{\openone+\mu_{1,j}Y_{1,j}}{2} \frac{\openone+\mu_{0,j}X_{0,j}}{2} \notag \\
&&\times \frac{\openone+\mu_{Z,j}\sigma_{Z,j}Z_{0,j}^{1-\eta_j}X_{1,j}^{\eta_j}}{2} \frac{\openone+\mu_{X,j}\sigma_{X,j}Z_{0,j}^{1-\eta_j}Z_{1,j}^{\eta_j}}{2} \notag \\
&&\times \left.\frac{\openone+Y_{1,j}}{2} \frac{\openone+X_{0,j}}{2}\right) \ket{\Psi_i}.
\end{eqnarray}
Similar to the case of $q = 1$, we have 
\begin{eqnarray}
\ket{\Psi_f} &=& \prod_{j=1}^q\left(\Delta_j\frac{\openone+(-i)^{\eta_j}\nu_j\mu_{X,j}\mu_{Z,j}\sigma_j}{2}\right) \ket{\Psi_i} = \left(\prod_{j=1}^q\frac{\openone+(-i)^{\eta_j}\nu_j\mu_{X,j}\mu_{Z,j}\sigma_j}{2}\right) \ket{\psi}\otimes\ket{\phi}_{A1,A2}
\end{eqnarray}
where operators $\{\Delta_j\}$ only act on ancilla logical qubits, and $\ket{\phi}_{A1,A2}$ is a state of ancilla logical qubits. Notice that 
\begin{eqnarray}
\braket{\phi}{\phi}_{A1,A2} &=& \prod_{j=1}^q \bra{+}_{0,j}\otimes\bra{y+}_{1,j}\Delta_j\ket{+}_{0,j}\otimes\ket{y+}_{1,j}
\end{eqnarray}
is independent of measurement outcomes $\{\mu_{X,j},\mu_{Z,j}\}$: $\Delta_j$ depends on $\mu_{Z,j}$, however, $\mu_{Z,j}$ always appears as an overall factor in the state $\ket{\phi}_{A1,A2}$, which does not change the norm of the state. Therefore, the protocol realises the measurement of $\{\sigma_j\}$, and measurement outcomes are $\{(-i)^{\eta_j}\nu_j\mu_{X,j}\mu_{Z,j}\}$, respectively. 
\end{proof}

\subsection{Regularisation of the generating set}

According to Lemma \ref{lem:general}, the protocol only works for a regular operator set. For an arbitrary set $\Theta$ of commutative Pauli operators, we need to regularise the operator set before measurement: by measuring operators $\Theta$, we actually measure all operators in the group $\langle\Theta\rangle$; therefore, measuring any generating set of the group is equivalent to measuring $\Theta$. We can always find two regular subsets $\Theta'$ and $\Theta''$ such that $\Theta'\cup\Theta''$ is the generating set of the group. By measuring $\Theta'$ and $\Theta''$, we can realise the measurement of $\Theta$. 

\begin{figure}
\begin{minipage}{\linewidth}
\begin{algorithm}[H]
{\small
\begin{algorithmic}[1]
\caption{{\small Regularisation of the generating set.}}
\label{alg:regularisation}
\Statex
\State Input $\Theta$. 
\State $\Theta_0\leftarrow\Theta$
\State $\Theta_1,\Theta_2,\Theta_3,\Theta_4\leftarrow\emptyset$
\While{$\Theta_0\neq \emptyset$}
\If {$\exists\sigma = \nu\sigma_X\sigma_Z\in\Theta_0$ such that $\{\sigma_X,\sigma_Z\} = 0$}
\State $\Theta_1\leftarrow\Theta_1\cup\{\sigma\}$
\State $\Theta_0\leftarrow\Theta_0-\{\sigma\}$
\For {$\tau = \nu_\tau\tau_X\tau_Z\in\Theta_0$}
\If {$\{\sigma_X,\tau_Z\} = 0$}
\State $\Theta_0\leftarrow\Theta_0-\{\tau\}$
\State $\Theta_0\leftarrow\Theta_0\cup\{\sigma\tau\}$
\EndIf
\EndFor
\Else
\State Choose an arbitrary element $\sigma = \nu\sigma_X\sigma_Z\in\Theta_0$. 
\If {$\exists\sigma' = \nu'\sigma'_X\sigma'_Z\in\Theta_0$ such that $\{\sigma_X,\sigma'_Z\} = 0$}
\State $\Theta_3\leftarrow\Theta_3\cup\{\sigma\}$
\State $\Theta_4\leftarrow\Theta_4\cup\{\sigma'\}$
\State $\Theta_0\leftarrow\Theta_0-\{\sigma,\sigma'\}$
\For {$\tau = \nu_\tau\tau_X\tau_Z\in\Theta_0$}
\If {$\{\sigma_X,\tau_Z\} = \{\sigma'_X,\tau_Z\} = 0$}
\State $\Theta_0\leftarrow\Theta_0-\{\tau\}$
\State $\Theta_0\leftarrow\Theta_0\cup\{\sigma\sigma'\tau\}$
\EndIf
\If {$\{\sigma_X,\tau_Z\} = [\sigma'_X,\tau_Z] = 0$}
\State $\Theta_0\leftarrow\Theta_0-\{\tau\}$
\State $\Theta_0\leftarrow\Theta_0\cup\{\sigma'\tau\}$
\EndIf
\If {$[\sigma_X,\tau_Z] = \{\sigma'_X,\tau_Z\} = 0$}
\State $\Theta_0\leftarrow\Theta_0-\{\tau\}$
\State $\Theta_0\leftarrow\Theta_0\cup\{\sigma\tau\}$
\EndIf
\EndFor
\Else
\State $\Theta_2\leftarrow\Theta_2\cup\{\sigma\}$
\State $\Theta_0\leftarrow\Theta_0-\{\sigma\}$
\EndIf
\EndIf
\EndWhile
\State Output $\Theta' = \Theta_1\cup\Theta_2\cup\Theta_3$ and $\Theta'' = \Theta_4$. 
\end{algorithmic}
}
\end{algorithm}
\end{minipage}
\end{figure}

\begin{lemma}
For an arbitrary set $\Theta$ of commutative Pauli operators, there exists a generating set $\Theta'\cup\Theta''$ of the group $\langle\Theta\rangle$ such that two subsets $\Theta'$ and $\Theta''$ are regular. 
\label{lem:regularisation}
\end{lemma}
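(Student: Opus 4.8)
The plan is to recast the notion of regularity as a single statement about a symmetric matrix over $\mathbb{F}_2$ and then appeal to the classification of symmetric bilinear forms over $\mathbb{F}_2$. First I would encode the operators symplectically: write $\sigma_{X,j}\leftrightarrow a_j\in\mathbb{F}_2^n$ and $\sigma_{Z,j}\leftrightarrow b_j\in\mathbb{F}_2^n$, so that $\{\sigma_{X,i},\sigma_{Z,j}\}=0$ iff $a_i\cdot b_j=1$, and define $M_{ij}=a_i\cdot b_j$. Since every pair of operators in $\Theta$ commutes, $a_i\cdot b_j=a_j\cdot b_i$, hence $M$ is symmetric over $\mathbb{F}_2$, with diagonal $M_{jj}=\eta_j$. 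The key observation is that a subset of operators is regular precisely when the corresponding principal submatrix of $M$ is diagonal.

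Next I would note that replacing $\Theta$ by another generating set of $\langle\Theta\rangle$ corresponds — after, if necessary, first passing to an independent generating set — to replacing $M$ by $PMP^{\mathrm{T}}$ for an invertible matrix $P$ over $\mathbb{F}_2$, the rows of $P$ recording which operators are multiplied together (products of Pauli operators add symplectic vectors). So it suffices to find an invertible $P$ such that $PMP^{\mathrm{T}}$, after a simultaneous permutation of rows and columns, has the block shape $\left(\begin{smallmatrix}D'&B\\B^{\mathrm{T}}&D''\end{smallmatrix}\right)$ with $D'$ and $D''$ diagonal; the two index blocks then cut $\Theta$ into regular subsets $\Theta'$ and $\Theta''$, which is exactly the hypothesis needed for Lemma~\ref{lem:general}. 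This reduces the lemma to the classical statement that every symmetric matrix over $\mathbb{F}_2$ is congruent to a direct sum of $1\times1$ blocks $(1)$ and $(0)$ together with $2\times2$ hyperbolic blocks $\left(\begin{smallmatrix}0&1\\1&0\end{smallmatrix}\right)$. I would either cite this or prove it by the usual induction: if some diagonal entry equals $1$, peel off that coordinate as a $(1)$-block and recurse on its orthogonal complement; if $M$ is alternating and nonzero, choose $a_i\cdot b_j=1$, peel off the hyperbolic plane spanned by those two coordinates, and recurse; terminate when $M$ is zero.

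Given $M$ in this normal form, I would assign every $(1)$- and $(0)$-block generator to $\Theta'$ and, from each hyperbolic pair, one generator to $\Theta'$ and the other to $\Theta''$. Within $\Theta'$ all off-diagonal entries of $M$ vanish (the $(1)$/$(0)$ generators are orthogonal to everything, and the selected members of distinct hyperbolic planes are mutually orthogonal), so that principal submatrix is diagonal; the submatrix on $\Theta''$ is the zero matrix, hence also diagonal. Thus $\Theta'$ and $\Theta''$ are regular. Finally, taking each new generator to be the honest Pauli product $\prod_{j\in S_k}\sigma_j$ over the corresponding index subset, these operators lie in $\langle\Theta\rangle$ and their symplectic vectors form a basis of the symplectic span of $\Theta$, so $\langle\Theta'\cup\Theta''\rangle=\langle\Theta\rangle$, the phases being fixed by the chosen products. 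Algorithm~\ref{alg:regularisation} is one explicit realisation of this reduction, and one would check that its greedy peeling reproduces exactly the $(1)$/$(0)$/hyperbolic split.

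I expect the main obstacle to be the $\mathbb{F}_2$ form theory rather than any computation: over $\mathbb{F}_2$ a symmetric form need not diagonalise, and the alternating (hyperbolic) part is the genuine obstruction — it is precisely what prevents a single regular subset from sufficing, yet also precisely what a two-way split always resolves. A secondary point to handle carefully is the bookkeeping that a change of basis of the symplectic span lifts to a genuine generating set of the Pauli group $\langle\Theta\rangle$ with the correct phases, and that passing to an independent subset of $\Theta$ does not lose any phase generator that $\langle\Theta\rangle$ might contain.
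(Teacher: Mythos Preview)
Your approach is correct and amounts to a conceptual repackaging of the paper's argument. The paper proves the lemma by exhibiting Algorithm~\ref{alg:regularisation} and verifying that after each pass of the while loop the sets $\Theta_1,\Theta_2,\Theta_3,\Theta_4$ satisfy the claimed commutation relations with one another and with the residual $\Theta_0$; the output is $\Theta'=\Theta_1\cup\Theta_2\cup\Theta_3$ and $\Theta''=\Theta_4$. Your reformulation in terms of the symmetric matrix $M_{ij}=a_i\cdot b_j$ and congruence $M\mapsto PMP^{\mathrm{T}}$ makes transparent \emph{why} two regular subsets always suffice: over $\mathbb{F}_2$ the only indecomposable non-diagonal block is the hyperbolic plane, and splitting each such pair across $\Theta'$ and $\Theta''$ removes every off-diagonal $1$. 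As you observe, the three branches of the algorithm (lines~5, 14, and the final \textbf{else}) are precisely the greedy peeling of $(1)$-blocks, hyperbolic planes, and $(0)$-blocks, so the two proofs describe the same reduction. Your route buys a cleaner explanation and a one-line optimality statement (one regular subset suffices iff $M$ has no alternating part); the paper's route buys an explicit procedure that never discards an operator, hence automatically generates $\langle\Theta\rangle$ including any central $-I$. That last point is the only place your sketch needs care: if $-I\in\langle\Theta\rangle$ but you first pass to an independent subset of symplectic vectors, you may lose it; the fix is trivial (append $-I$ to either subset, since its $X$ and $Z$ parts are both the identity and commute with everything), but it should be stated rather than only flagged.
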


\begin{proof}
We can prove the lemma by constructing $\Theta'$ and $\Theta''$ according to Algorithm \ref{alg:regularisation}. First, $\Theta_1\cup\Theta_2\cup\Theta_3\cup\Theta_4$ is a generating set of $\langle\Theta\rangle$. 

Second, after each round of the while loop, the following statements holds: 
\begin{itemize}
\item[i)] For all $\sigma = \nu\sigma_X\sigma_Z\in\Theta_1$ and $\tau = \nu_\tau\tau_X\tau_Z\in\Theta_0\cup\Theta_1\cup\Theta_2\cup\Theta_3\cup\Theta_4-\{\sigma\}$, $[\sigma_X,\tau_Z] = [\sigma_Z,\tau_X] = 0$; 
\item[ii)] For all $\sigma = \nu\sigma_X\sigma_Z\in\Theta_2$ and $\tau = \nu_\tau\tau_X\tau_Z\in\Theta_0\cup\Theta_1\cup\Theta_2\cup\Theta_3\cup\Theta_4-\{\sigma\}$, $[\sigma_X,\tau_Z] = [\sigma_Z,\tau_X] = 0$; 
\item[iii)] For all $\sigma = \nu\sigma_X\sigma_Z\in\Theta_3$ and $\tau = \nu_\tau\tau_X\tau_Z\in\Theta_0\cup\Theta_1\cup\Theta_2\cup\Theta_3-\{\sigma\}$, $[\sigma_X,\tau_Z] = [\sigma_Z,\tau_X] = 0$; 
\item[iv)] For all $\sigma = \nu\sigma_X\sigma_Z\in\Theta_4$ and $\tau = \nu_\tau\tau_X\tau_Z\in\Theta_0\cup\Theta_1\cup\Theta_2\cup\Theta_4-\{\sigma\}$, $[\sigma_X,\tau_Z] = [\sigma_Z,\tau_X] = 0$. 
\end{itemize}
In each round of the while loop, we move one or two operators from $\Theta_0$ to other subsets. If there exists an operator $\sigma\in\Theta_0$ with two anti-commutative sub-operators (line 5), we move it to $\Theta_1$. By carrying out lines 8 to 11, we make sure that the statement i) holds. If such an operator does not exist, there are two cases. In the first case, there exists $\sigma'\in\Theta_0$ satisfy the condition in line 14. Then, we move $\sigma$ and $\sigma'$ to $\Theta_3$ and $\Theta_4$, respectively. By carrying out lines 19 to 27, we make sure that statements iii) and iv) hold; notice that two sub-operators of $\sigma$ ($\sigma'$) are commutative. In the second case, $\sigma'$ does not exist. Then, we move $\sigma$ to $\Theta_2$. The statement ii) holds. 
\end{proof}

\subsection{Universal quantum computing and magic state duplication}
\label{app:universal}

With measurements on $X$ and $Z$ logical operators, we can realise the logical controlled-NOT gate \cite{Horsman2012}; using measurements with $[\sigma_X,\sigma_Z] = 0$, we can realise the logical Hadamard gate: to apply the gate on logical qubit-$1$, we prepare logical qubit-$2$ in the state $\ket{0}$, apply the measurement $\bar{Z}_1\bar{X}_2$ on two logical qubits and measure logical qubit-$1$ in the $X$ basis; these operations transfer the state of logical qubit-$1$ to logical qubit-$2$ with the basis rotated. With the logical controlled-NOT gate and Hadamard gate, we can distill $\ket{y+}$ magic states and realise the logical $S$ gate; with these logical Clifford gates, we can distill the magic state for implementing the logical $T$ gate \cite{Fowler2012}. These logical gates constitute a universal gate set. 

If the $A1$ block does not have the transversal $S$ gate, we can realise the initialisation in the state $\ket{y+}$ and measurement in the basis $Y$ in the following way. First, we need another ancilla block $A2$, in which logical qubits are prepared in the distilled $\ket{y+}$ state. Second, we effectively initialise ancilla logical qubits in $A1$ in the state $\ket{y+}$ by applying measurements in the form $-Y_{A1}Y_{A2} = X_{A1}X_{A2}Z_{A1}Z_{A2}$, where $X_{A1},Y_{A_1},Z_{A_1}$ ($X_{A2},Y_{A_2},Z_{A_2}$) are $X,Y,Z$ operators of a logical qubit in the block $A1$ ($A2$). Because the measurement is of the $[\sigma_X,\sigma_Z] = 0$ type, we can realise the measurement through the block $A0$. Because the measurements are in the $Y$ basis, states of logical qubits in $A2$ are preserved, i.e. we do not need to re-prepare the distilled $\ket{y+}$ state. Finally, in the same way, we can measure ancilla logical qubits in $A1$ in the $Y$ basis. Though the above approach may increase the time cost compared with the transversal $S$ gate, the eventual time cost is still independent of the number of operators to be measured. 

\section{Costs}
\label{app:costs}

In devised sticking, we use a finely devised glue code to construct the measurement sticker. For the measurement sticker, we need a sufficiently large $d_R$ of the repetition code to maintain the code distance of the deformed code, i.e. $d_R =d$. Therefore, a finely devised measurement sticker requires $\Theta((n_G+r_G)d_R) = O(n_Ndq)$ physical qubits. Here, we have used $n_G,r_G = O(n_N+k_Nq)$ (see Theorem \ref{the:sticking}) and $k_N = O(n_N)$. 

If the simultaneous measurement has a logical thickness of $t$, the finely devised glue code has parameters $n_G,r_G = O(n_N+k_Nt) = O(n_Nt)$ (see Sec. \ref{app:thickness}). Then, a finely devised measurement sticker requires $\Theta((n_G+r_G)d_R) = O(n_Ndt)$ physical qubits. 

\begin{figure}[tbp]
\centering
\includegraphics[width=0.5\linewidth]{\figures/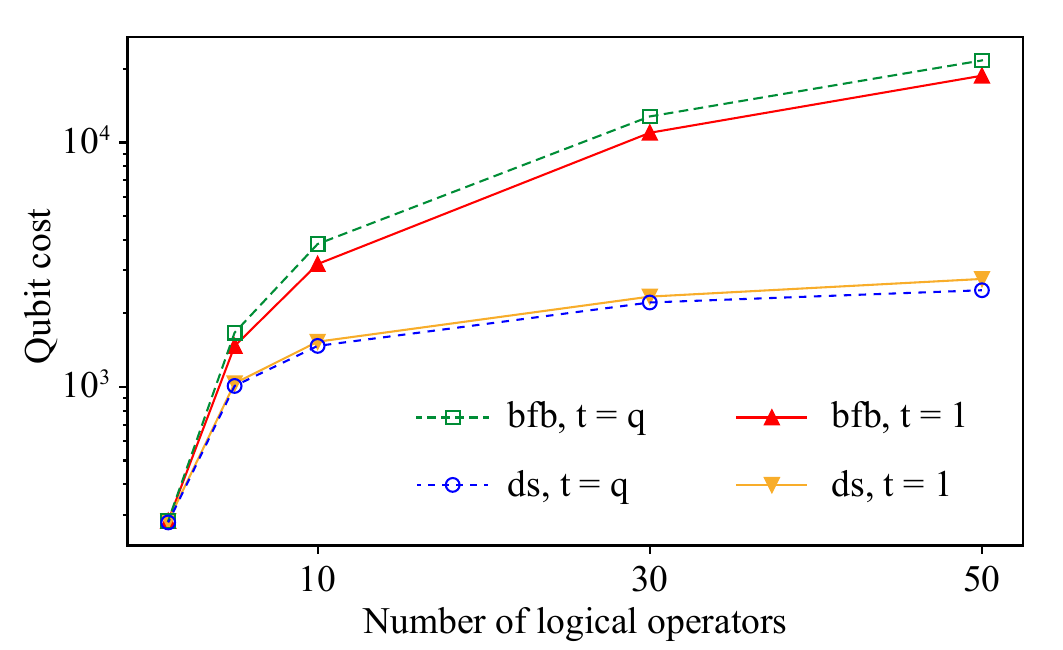}
\caption{
Median values of the qubit number required in simultaneous measurements on a [[578,162,5]] code \cite{Panteleev2021,Kovalev2012}. For each number of logical operators $q$, we randomly generate the operator set $\Sigma$ for one hundred times. The set $\Sigma$ consists of $Z$ logical operators acting non-trivially on up to $L = 5$ logical qubits, and its logical thickness is $t$. For each $\Sigma$, we evaluate the qubit costs in devised sticking (ds) and brute-force branching (bfb). 
}
\label{fig:costII}
\end{figure}

In brute-force branching, we use coarsely devised glue codes to construct branch stickers. For branch stickers, we take $d_R = 2$. Branch stickers on different levels are different in size. Suppose we want to measure logical operators acting non-trivially on up to $L$ logical qubits, and assume that each logical operator has a weight of $O(Ld)$. If a branch sticker acts on $p$ logical operators, its glue code has the parameter $n_N = O(pLd)$. Then, on the first level, there are two branch stickers, and each of them has $O(Ldq/2)$ physical qubits ($p = q/2$); on the second level, there are four branch stickers, and each of them has $O(Ldq/4)$ physical qubits ($p = q/4$); and so on. Therefore, we need $O(Ldq\log q)$ physical qubits to construct the branch stickers, and we need $O(Ld^2q)$ physical qubits to construct the measurement stickers used in brute-force branching; the total qubit cost is $O(Ldq(d+\log q))$. For each measurement sticker, we need $O(Ld^2)$ physical qubits ($n_N = O(Ld)$ and $q = 1$ for this measurement sticker), and we need $q$ measurement stickers. 

We can reduce the qubit cost by taking a small $q$. However, when $q$ is small, we only measure a small number of operators in each simultaneous measurement, meaning that we need more simultaneous measurements to complete a circuit, i.e. the time cost is increased. When $L$ and $q$ are constants with respect to code parameters $n$, $k$ and $d$, the qubit cost is $O(d^2)$; for hypergraph product codes, $O(d^2) = O(n)$, meaning that the qubit overhead is a constant. Accordingly, the time cost is $O(kd)$ when we measure $\Theta(k)$ logical operators. The overall spacetime cost is $O(k^{5/2})$ for hypergraph product codes, which is the same as the surface code. In the numerical results, we find that devised sticking has a smaller qubit cost than brute-force branching, suggesting that devised sticking has a smaller spacetime cost, i.e. its spacetime cost may be smaller than the surface code when the qubit cost is $O(n)$; however, future research is necessary to verify this conjecture. 

When $Ld$ is small compared with $n_N$, the brute-force branching outperforms devised sticking in the bound analysis. Notice that $n_N$ usually increases with $q$. However, in the numerical results, we find that devised sticking always has a smaller qubit cost, as shown in Fig. 5 in the main text and Fig. \ref{fig:costII}. 

For measuring general logical Pauli operators, we need ancilla logical qubits. For each operator to be measured, we need one (or three in some cases) ancilla logical qubit(s). By performing two rounds (or four rounds in some cases) of simultaneous measurements on $X$ and $Z$ logical operators that involve ancilla logical qubits, we can achieve the simultaneous measurement of an arbitrary set of logical Pauli operators. Therefore, the qubit (time) cost is amplified by a constant factor for measuring general logical Pauli operators. 

To implement controlled-NOT, Hadamard and $S$ gates using measurements, the time and qubit overhead is $O(1)$. Therefore, using the simultaneous measurement, we can implement a layer of the gates with the spacetime cost $O(nd)\times O(d)$ (we have taken $n_N = O(n)$ and $t = 1$). Here, the cost of $S$ gates does not include the preparation and distillation of $\ket{y+}$ magic states. Notice that the implementation of $S$ gates does not consume $\ket{y+}$ magic states, therefore, we only need to prepare and distill $\ket{y+}$ magic states at the beginning of the computing. Because we can apply measurements on any pair of logical qubits, we can directly realise the controlled-NOT gate on any pair of logical qubits. 

\section{Comparison to the protocol in Ref. \cite{Cohen2022}}
\label{app:comparison}

In Ref. \cite{Cohen2022}, an ancilla system is proposed for measuring a single logical Pauli operator, and a protocol for simultaneously measuring two commutative logical Pauli operators is also presented. While this approach can be generalised to multiple commutative operators by directly coupling one ancilla system to each operator, it encounters a problem due to overlapping logical operators, as discussed in the main text. Specifically, such overlap can lead to a deformed code that is no longer a quantum LDPC code. At the end of this section, we provide a rigorous analysis demonstrating that if the goal is to measure $\Theta(k)$ independent logical operators simultaneously, directly coupling one ancilla system to each operator inevitably violates the LDPC condition. 

In this work, we propose two protocols for measuring an arbitrary set of commutative logical Pauli operators simultaneously. In our protocols, the deformed code is always a quantum LDPC code. In the devised sticking protocol, we can measure an arbitrary set of $X$ or $Z$ logical operators simultaneously with one ancilla system. We also propose the brute-force branching protocol to separate an arbitrary set of $X$ or $Z$ logical operators such that we can measure all of them simultaneously with multiple ancilla systems (each of the ancilla systems could be the one proposed in Ref. \cite{Cohen2022}). Based on the measurement of $X$ or $Z$ logical operators, we can measure an arbitrary set of general logical Pauli operators simultaneously (up to the commutativity condition) by using ancilla logical qubits. 

The devised sticking protocol works because we find an algorithm to generate a proper glue code, called finely devised glue code, for an arbitrary operator set $\Sigma$. With the glue code, the ancilla system only measures operators in $\Sigma$ leaving other logical operators unmeasured. The key component of the brute-force branching protocol is the branch sticker introduced in this work. Through the theoretical analysis, we prove that a branch sticker acts on logical operators by transferring them instead of measuring them, and a branch sticker only requires a small $d_R$ to maintain the code distance. The small $d_R$ is important for reducing the qubit cost in brute-force branching. 

{\bf Technical comparison.} For a detailed comparison, we review the protocol in Ref. \cite{Cohen2022} in our framework. The ancilla systems proposed in Ref. \cite{Cohen2022} for measuring $X$ and $Z$ logical operators are instances of measurement stickers. To measure a $Z$ logical operator, denoted by $\bar{Z}$, it is assumed that there are no other $Z$ logical operators on the support $\mathcal{Q}(\bar{Z})$. Under the assumption, taking the naked glue code is sufficient for measuring $\bar{Z}$, which is exactly the protocol illustrated in Fig. 2 in Ref. \cite{Cohen2022} (but for a $Z$ operator). To measure the product of two $Z$ logical operators $\bar{Z}_1\bar{Z}_2$, it is assumed that $\mathcal{Q}(\bar{Z}_1)$ and $\mathcal{Q}(\bar{Z}_2)$ do not overlap, and there is no other $Z$ logical operators on the support $\mathcal{Q}(\bar{Z}_1)\cup\mathcal{Q}(\bar{Z}_2)$. In this case, the naked glue code is insufficient because both $v_1$ and $v_2$ [$\bar{Z}_1 = Z(v_1)$ and $\bar{Z}_2 = Z(v_2)$] are in $(\mathrm{ker}H_N)S_N$. We need to remove $v_1$ and $v_2$ and leave $v_1+v_2$ in the space. This can be achieved by taking the dressing matrix $D = \begin{pmatrix} 1 & 1 & 0 & 0 & \cdots \end{pmatrix}$, where the first entry is on the support $\mathcal{Q}(\bar{Z}_1)$, and the second entry is on the support $\mathcal{Q}(\bar{Z}_2)$. By taking such a dressed glue code, we obtain the protocol illustrated in Fig. 3 in Ref. \cite{Cohen2022} (but for a $Z$ operator). In these examples, the protocol in Ref. \cite{Cohen2022} corresponds to naked glue codes and instances of dressed glue codes, which work under certain assumptions. 

In comparison, we propose the general formalism of measurement and branch stickers. We find that all stickers with compatible glue codes (provided with proper pasting matrices) can be used for certain logical operations. We present the criteria for choosing glue codes, called coarsely devised glue codes and finely devised glue codes. In one of our protocols called devised sticking, we use measurement stickers with finely devised glue codes to achieve simultaneous measurements of logical Pauli operators. We give algorithms for generating finely devised glue codes. The generated codes are finely devised LDPC glue codes, in contrast to naked glue codes and instances of dressed glue codes taken in Ref. \cite{Cohen2022}. By using finely devised LDPC glue codes, we can measure arbitrary logical Pauli operators simultaneously while maintaining the low density of parity checks, without any assumption about supports. 

Besides devised sticking, we also propose brute-force branching, which is another protocol that can measure arbitrary logical Pauli operators simultaneously while maintaining the low density of parity checks. In brute-force branching, we use branch stickers in addition to measurement stickers. 

Regarding rigorous theoretical results, the protocol in Ref. \cite{Cohen2022} is based on three lemmas and one theorem. In our framework, these results focus on measurement stickers with naked glue codes under the assumption of supports. The measurement sticker with a naked glue code $H_N$ measures all logical operators on $\mathcal{B}_N$, which may include logical operators that need not be measured. Our protocol is based on two theorems and two lemmas. Theorem \ref{the:GLS} applies to general measurement stickers with finely devised glue codes and branch stickers with coarsely devised glue codes, without any assumption of supports. Theorem \ref{the:sticking} states the existence of coarsely and finely devised glue codes satisfying the LDPC condition; and its proof contains the algorithms for generating proper glue codes. Upper bounds for cost factors due to glue codes are also given in Theorem \ref{the:sticking}. Lemmas \ref{lem:general} and \ref{lem:regularisation} justify the simultaneous measurement of general logical Pauli operators. 

{\bf Parallelisation through optimising logical operators.} As a potential way of overcoming the overlap between logical operators (the problem of a large crowd number), we can optimise the representatives of logical operators. For example, let $\bar{Z}_1,\bar{Z}_2\in \mathcal{Z}$ be two logical operators that have overlap on some physical qubits. We may be able to find a stabiliser operator $g\in \mathcal{S}$ such that $\bar{Z}_1$ and $g\bar{Z}_2$ do not overlap. Then, we can measure them with two ancilla systems. This approach never works for full parallelisation, although it may work to a certain extent. Think of that we want to measure $q = \Theta(k)$ independent logical operators. Each operator has a weight of $\Omega(d)$. Then, $\Omega(kd)$ is the total weight of logical operators in $\Sigma$, i.e. $\sum_{\sigma\in\Sigma}\vert\mathcal{Q}(\sigma)\vert = \Omega(kd)$. The physical qubit number is $n$. Therefore, the average crowd number is $\Omega(kd/n)$. 

The average is a lower bound of the maximum crowd number. If the maximum crowd number is smaller than $\Omega(kd/n)$ for all physical qubits, the average is smaller than $\Omega(kd/n)$, leading to a contradiction. For a quantum LDPC code with a good encoding rate, i.e. $k = \Theta(n)$, the maximum crowd number is $\Omega(d)$. Notice that this lower bound of the maximum crowd number holds for arbitrary representatives of logical operators. Therefore, the maximum crowd number always increases with the code distance regardless of the optimisation of logical operators. This means that we have to couple at least $\Omega(d)$ ancilla systems to a physical qubit, which breaks the LDPC condition. 

\begin{figure}[tbp]
\centering
\includegraphics[width=\linewidth]{\figures/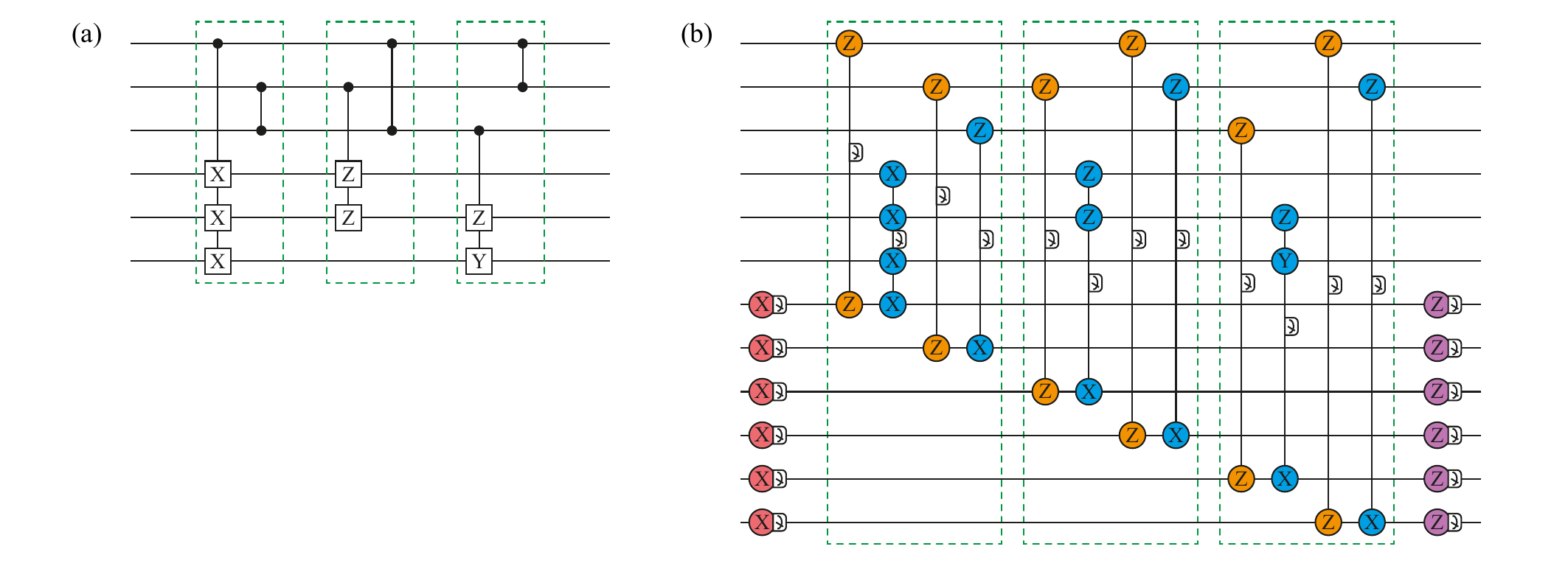}
\caption{
(a) A circuit with six commutative gates. (b) The measurement circuit for realising the gate circuit in (a). 
}
\label{fig:parallelism}
\end{figure}

\section{Comparison between conventional parallelism and ultimate parallelism}
\label{app:parallelism}

As an example, we illustrate a circuit with six gates in Fig. \ref{fig:parallelism}(a). These six gates commute with each other. In each green box, the two gates act on different qubits. In conventional parallelism, we can simultaneously implement the two gates in the same green box; then, the circuit requires three time steps. In ultimate parallelism, we can simultaneously implement all six gates; then, the circuit requirements only one time step. 

The circuit can be realised through logical measurements as shown in Fig. \ref{fig:parallelism}(b). Commutative measurements are marked with the same colour (red, orange, blue and violet). In each green box, the two commutative measurements act on different qubits. In conventional parallelism, we can simultaneously implement the two commutative measurements in the same green box; then, the circuit requires eight time steps. In ultimate parallelism, we can simultaneously implement all commutative measurements; then, the circuit requires only four time steps. Notice that even if we want to implement one gate, we need four time steps to realise using measurements. 

\end{widetext}

\bibliography{references.bib}

\end{document}